\newcommand{\ra}{\rightarrow}
\newcommand{\beqano}{\begin{eqnarray*}}
\newcommand{\eeqano}{\end{eqnarray*}}
\newcommand{\beqa}{\begin{eqnarray}}
\newcommand{\eeqa}{\end{eqnarray}}
\newcommand{\bE}{\mathbf{E}}
\newtheorem{definition}{Definition}
\newtheorem{theorem}{Theorem}
\newtheorem{lemma}{Lemma}
\newtheorem{corollary}{Corollary}
\begin{document}

\title{~Erasure Multiple Descriptions~}
\date{}

\author{Ebad Ahmed}
\author{Aaron B. Wagner}
\affil{School of Electrical and Computer Engineering\\
Cornell University, Ithaca, NY 14853, USA}


\pagestyle{headings} \maketitle

\begin{abstract}
We consider a binary erasure version of the $n$-channel multiple descriptions problem with symmetric descriptions, i.e., the rates of the $n$ descriptions are the same and the distortion constraint depends only on the number of messages received. We consider the case where there is \emph{no excess rate} for every $k$ out of $n$ descriptions, i.e., any subset of $k$ messages has a total rate of $R(D_k) = 1 - D_k$, where $R(\cdot)$ is the Shannon rate-distortion function and $D_k$ is the distortion constraint when $k$ descriptions are received at the decoder. Our goal is to characterize the achievable distortions $D_1, D_2,\ldots,D_n$. We measure the fidelity of reconstruction using two distortion criteria: an \emph{average-case} distortion criterion, under which distortion is measured by taking the average of the per-letter distortion over all source sequences, and a \emph{worst-case} distortion criterion, under which distortion is measured by taking the \textit{maximum} of the per-letter distortion over all source sequences. We present achievability schemes, based on random binning for average-case distortion and systematic MDS (\emph{maximum distance separable}) codes for worst-case distortion, and prove optimality results for the corresponding achievable distortion regions. We then use the binary erasure multiple descriptions setup to propose a layered coding framework for multiple descriptions, which we then apply to vector Gaussian multiple descriptions and prove its optimality for symmetric scalar Gaussian multiple descriptions with two levels of receivers and no excess rate for the central receiver. We also prove a new outer bound for the general multi-terminal source coding problem and use it to prove an optimality result for the robust binary erasure CEO problem. For the latter, we provide a tight lower bound on the distortion for $\ell$ messages for any coding scheme that achieves the minimum achievable distortion for $k \le \ell$ messages.
\end{abstract}

\section{Introduction}
While the information-theoretic study of network capacity has played a pivotal role in the development of wireless communications~\cite{tse_visw:wireless}, network rate-distortion theory has had a much smaller impact on the design of practical systems. The reason for this is arguably two-fold. First, the mathematically challenging nature of network source coding has hindered progress toward understanding the fundamental limits of lossy data compression. The rate regions of many important network source coding problems have yet to be characterized and solutions for even simple networks are analytically involved. Second, prominent network source coding problems often are poor models that abstract away key properties of practical systems. In particular, such models often fail to accurately capture the distortion resulting from source quantization in practical systems.

This paper attempts to circumvent these two issues by focusing on the use of the erasure distortion measure~\cite[p.~370]{Cover:IT} for a binary source. The erasure distortion measure is well-suited for digital sources since it does not permit the decoder to make errors in its reconstruction of the source, but allows it to declare an erasure for any source symbol about which it is uncertain. Errors in digital data streams generally wreak havoc unless detailed knowledge of the digital representation is used to minimize their impact. Erasures, however, are tolerable since they can be detected by higher-level applications, which can either interpolate to fill in the missing data or wait until enough data is received to correct all of the erasures. Erasure formulations should also be useful as starting points for the design of practical codes for network rate-distortion. In the theoretical development of modern channel codes like LDPC, many of the code designs and performance characterizations were first established for the erasure channel~\cite{Rich_Urb:mod_cod_th}.

This paper looks at the binary erasure version of an important network source coding problem, the multiple descriptions (MD) problem~\cite{elg_cover:ach_md}-\cite{chen:gmdrric}. Multiple descriptions is a source coding technique in which multiple encoded descriptions of a single source sequence are sent to the decoder over separate channels. This is an effective way to deal with channel failure and packet loss in packet networks, particularly in the case where retransmission of lost packets is not feasible (e.g., audio/video streaming) and the decoder must reconstruct the source with only the packets it has successfully received. The MD problem also constitutes a reasonable model for transmission of digital data (images, video, and sound) over peer-to-peer networks.

An important regime within MD is that of \emph{no excess rate}, i.e., the sum rate required to achieve distortion $D$ at the receiver equals $R(D)$, where $R(\cdot)$ is the Shannon rate-distortion function. This is a useful regime to study, since it allows us to not sacrifice end-to-end performance for intermediate performance (i.e., when the number of received descriptions is less than the number required to achieve distortion $D$). For most sources, the no excess rate regime is characterized by poor intermediate performance (e.g.,~\cite{ozarow:md}): if a coding scheme is near-optimal for $k$ receptions, it often yields high distortions for $m < k$ receptions. For binary erasure MD, however, it is possible to obtain good intermediate performance under no excess rate.

\subsection{Results}
We focus on binary erasure MD with no excess rate for every $k$ out of $n$ descriptions, i.e., any subset consisting of $k$ messages must have a total rate of $R(D_k)$, where $D_k$ is the distortion constraint the decoder must obey when $k$ messages are received. We consider symmetric descriptions, i.e., the rates of the $n$ descriptions are the same and the distortion constraint depends only on the number of messages received. In fact, no excess rate implies symmetric descriptions for $k<n$: if every $k$ out of $n$ descriptions have sum rate $R(D_k)$, then each rate must be $R(D_k)/k$. We examine two distortion criteria; an \emph{average-case} distortion criterion, which measures the reconstruction fidelity by the average of the per-letter distortion over all source sequences, and a \emph{worst-case} distortion criterion, which measures the reconstruction fidelity by the maximum of the per-letter distortion over all source sequences. The average-case criterion is the standard criterion used in the literature. The worst-case criterion is less commonly used but arguably more appropriate in this setting. It is a universal distortion measure and is insensitive to the source model since it does not a require a source distribution. Our main contributions are:
\begin{enumerate}
\item applying the binary erasure model to multiple description coding and focusing on the worst-case distortion criterion,
\item proposing, for all $n$ and $k$, coding schemes for both average-case and worst-case distortion criteria and characterizing their achievable distortion region when $m \le k$ descriptions are received at the decoder. The scheme for average-case distortion is based on random binning and can be viewed as of a concatenation of $(n,1)$ and $(n,k)$ source-channel erasure codes~\cite{puri:nkcodes_2}. The scheme for worst-case distortion is a practical zero-error coding scheme based on MDS (\emph{maximum distance separable}) codes.
\item providing, for both average-case and worst-case distortion criteria, a tight lower bound on the distortion when a single message is received at the decoder. For worst-case distortion, the outer bound holds for all $n$ and $k$. Moreover, we show that the MDS coding scheme is Pareto optimal in the achievable distortions $D_1,\ldots,D_k$ for all $n$ and $k$, and, for certain ranges of $n$ and $k$, is also optimal when more than one message is received at the decoder. For average-case distortion, our outer bound holds, modulo a closure operation, for all $n$ and $k$ satisfying $\left(1 - \frac{1}{n}\right)^k \le \frac{1}{2}$. In addition, for $n > 3$ and $k=2$, we provide an outer bound on the optimal single-message distortion that differs by exactly $1/n$ from the distortion achieved by the random binning scheme. Our results for the special case in which there is no distortion for $k$ messages (i.e., any $k$ messages allow the decoder to construct the original source sequence completely) have appeared in \cite{ahmed_wagner:itw09} (average-case distortion) and \cite{ahmed_wagner:isit09} (worst-case distortion).
\item proposing a coding scheme, based on the binary erasure MD coding schemes, for vector Gaussian MD and showing that it is optimal for scalar Gaussian MD with two levels of receivers and no excess rate for the central receiver. The scheme involves quantizing the vector Gaussian source according to a given quadratic distortion constraint and then transmitting the quantized version over the $n$ channels according to the aforementioned binary erasure coding schemes. This shows that the binary erasure coding schemes can be used as part of a more general, layered coding scheme for multiple descriptions with a generic source distribution and arbitrary distortion metric.
\item proving a new outer bound for the general multi-terminal source coding problem that improves upon the outer bound in \cite{wagnervenkat05}, and
\item providing, for the robust binary erasure CEO problem with symmetric rates, a tight lower bound on the distortion for $\ell$ messages for any coding scheme that achieves the minimum achievable distortion for $k \le \ell$ messages. The robust binary erasure CEO problem is a generalization of MD in that the encoders observe erased versions of the source instead of the source itself. This problem constitutes a reasonable model for decentralized peer-to-peer networks in which peers can generate new descriptions based on their partial copies of the source file.
\end{enumerate}

\subsection{Relation to Prior Work}
An achievable rate region for the 2-description MD problem was first provided by El Gamal and Cover~\cite{elg_cover:ach_md}. This region was shown to be tight for a scalar Gaussian source and quadratic distortion measure by Ozarow~\cite{ozarow:md}, and for a discrete memoryless source (DMS) with no excess rate for two descriptions by Ahlswede~\cite{ahlswede:md_ner}. Zhang and Berger~\cite{zhan_ber:md} obtained a rate region for the 2-description case that contained points strictly outside the El Gamal-Cover rate region. Venkataramani, Kramer and Goyal provided a rate region for the $n$-description case~\cite{ven_kra_goy:md}, which was improved upon by Pradhan, Puri, and Ramchandran~\cite{pradpuriramc04,puri:nkcodes_2}. Tian and Chen proposed a coding scheme for the $n$-description case, with symmetric rates and distortion constraints, that combined a channel coding component with a source coding component to attain rate-distortion points outside the region proposed in \cite{pradpuriramc04} in the Gaussian case~\cite{tian_chen:md_mds}. Wang and Viswanath derived the minimal achievable sum rate for vector Gaussian MD with individual and central receivers~\cite{wang_visw:vect_gmdic}. More recently, Chen characterized the rate region of scalar Gaussian MD with individual and central distortion constraints~\cite{chen:gmdrric}.

Multiple descriptions with no excess rate is a generalization of the problem of successive refinement~\cite{koshelev:succ_ref,equitz_cover:succ_ref,rimoldi:succ_ref}, in which descriptions received in addition to the minimum number required to reconstruct the source with a given distortion are used to improve the quality of reconstruction. The MD problem is also similar to the problem of lossy packet transmission considered by Albanese \emph{et al.}~\cite{albanese:pet}. They propose a coding method to deal with packet loss in erasure networks that involves assigning a priority level to messages. The messages are encoded into packets, and the priority level determines the minimum number of packets required to reconstruct the message. Other work on similar problems include symmetric multi-level diversity (MLD) coding~\cite{roche:sym_mld}, in which $K$ sources, each with a different level of importance, are encoded by $K$ encoders. The decoders have access to only a subset of the encoded descriptions, and each decoder attempts to reconstruct the $k$ most important sources, where $k$ is the number of descriptions that are accessible to it. More recently, Mohajer \emph{et al.}~\cite{mohajer:asym_mld} have considered a variation on symmetric MLD coding in which $2^K-1$ sources are encoded by $K$ encoders, and have characterized the rate region for $K=3$.

Our binary erasure MD problem with no excess rate and no distortion for every $k$ out of $n$ messages is particularly significant in the context of peer-to-peer networks, since it can be used to study the tradeoff between the performance of two competing technologies: fountain codes~\cite{luby:lt_codes,shokrollahi:raptor_codes} and BitTorrent~\cite{cohen:BitTorrent}. For large $n$ and small $k$, the MD problem mimics rateless fountain codes, since out of a large number of descriptions, only a handful must be received (collected) in order to construct the source with zero distortion. Fountain codes are known to work well in erasure networks, but they usually have poor intermediate performance. Sanghavi~\cite{sanghavi:rateless_codes} provides an outer bound for rateless codes on the fraction of source symbols that can be decoded as a function of the number of encoded symbols received. For $k = n$, the MD problem resembles the BitTorrent, where all of the relevant packets must be received to allow for complete reconstruction of the source. The BitTorrent provides good intermediate performance but suffers from the ``coupon collector" problem; the initial pieces of the source can be acquired relatively rapidly, but it takes much longer to collect the final pieces. By varying $n$ and $k$ in the binary erasure MD model, therefore, the middle ground between fountain codes and the BitTorrent can be explored.

The rest of this paper is organized as follows. In Section~\ref{sec:prob_form}, we formulate the $n$-channel binary erasure MD problem. Sections~\ref{sec:average-case} and \ref{sec:worst-case} are devoted to our results for average-case distortion and worst-case distortion, respectively. In Sections~\ref{sec:gauss_md} and \ref{sec:dec_encoding}, we describe our results for vector Gaussian MD and the robust binary erasure CEO problem, respectively.

\section{The $n$-channel Binary Erasure Multiple Descriptions Problem}
\label{sec:prob_form}
Let $\{X_t\}_{t=1}^\infty$ be a memoryless uniform binary source, with the random variables $X_t$ taking values in the alphabet $\mathcal{X}=\{+,-\}$. Let $\mathcal{\hat{X}}$ be the reconstruction space $\{+,-,0\},$ where $0$ denotes the erasure symbol, with an associated distortion measure $d:\mathcal{X} \times \mathcal{\hat{X}} \rightarrow \{0,1,\infty\}$ such that
\begin{align*}
d(x,\hat{x}) =
\begin{cases}
0 & \text{if $\hat{x} = x$} \\
1 & \text{if $\hat{x} = 0$} \\
\infty & \text{otherwise}.
\end{cases}
\end{align*}
The above per-letter measure is known as the erasure distortion measure ~\cite[p.~370]{Cover:IT}. A \emph{encoder} is a function $f_i^{(l)}:\mathcal{X}^l \rightarrow \{1,\ldots,M_i^{(l)}\}$. A \emph{decoder} is a function $g_\mathcal{K}^{(l)}: \prod_{k \in \mathcal{K}} \{1,\ldots,M_k^{(l)}\}  \rightarrow \hat{\mathcal{X}}^l$, where $\mathcal{K}$ is the set of descriptions received.

Let $\mathcal{N}=\{1,\ldots,n\}.$ The $n$-channel multiple descriptions problem, illustrated in Figure~\ref{fig:model}, can be formulated as follows. There are $n$ encoders. Encoder $f_i^{(l)}, \ i \in \mathcal{N}$, encodes and transmits a description of a length-$l$ source sequence $x^l$ over channel $i$. The receiver either receives this description without errors or it does not receive it at all. Excluding the case where none of the descriptions is received, the receiver may receive $2^n - 1$ different combinations of the $n$ descriptions. Thus it can be represented by the $2^n-1$ decoding functions $g_\mathcal{K}^{(l)}, \ \mathcal{K} \subseteq \mathcal{N}$, $\mathcal{K} \neq \emptyset$. Based on the set of descriptions received, the receiver employs the corresponding decoding function to output a reconstruction of the original source string subject to a distortion constraint. We consider symmetric descriptions, i.e., each description has the same rate and the distortion constraint depends only on the number of descriptions received.

We measure the fidelity of the reconstruction using two distortion criteria: an \emph{average-case} distortion criterion, under which distortion is measured by taking the average of the per-letter distortion over all source sequences, and a \emph{worst-case} distortion criterion, under which distortion is measured by taking the \textit{maximum} of the per-letter distortion over all source sequences. We define \emph{achievability} for the two criteria as follows. Let $\hat{X}_\mathcal{K}^l = g_\mathcal{K}^{(l)}(\{f_k^{(l)}(X^l): k \in \mathcal{K}\})$ be the reconstruction sequence corresponding to the source sequence $X^l$.

\begin{definition}[\textbf{Average-case distortion}]
\label{def:achievability_avg-case}
The rate-distortion vector $(R,D_1,\ldots,D_n)$ is \emph{achievable} if for some $l$ there exist encoders $f_i^{(l)}$, $i\in \mathcal{N}$ and decoders $g_\mathcal{K}^{(l)}$, $\mathcal{K}\subseteq \mathcal{N}$, $\mathcal{K} \neq \emptyset$, such that\footnote{All logarithms and exponentiations in this paper have base $2$ unless explicitly stated.}
\begin{align*}
R &\ge \frac{1}{l}\log M_i^{(l)} \ \textrm{ for all $i$, and} \\
D_k &\ge \max_{\mathcal{K}:|\mathcal{K}| = k} \textbf{E}\left[\frac{1}{l}\sum_{t=1}^l d(X_t,\hat{X}_{\mathcal{K},t})\right].
\end{align*}
\end{definition}
We use $\mathcal{RD}_{avg}$ to denote the set of achievable rate-distortion vectors and $\overline{\mathcal{RD}}_{avg}$ to denote its closure.

\begin{definition}[\textbf{Worst-case distortion}]
\label{def:achievability_worst-case}
The rate-distortion vector $(R,D_1,\ldots,D_n)$ is \emph{achievable} if for some $l$ there exist encoders $f_i^{(l)}$, $i\in \mathcal{N}$ and decoders $g_\mathcal{K}^{(l)}$, $\mathcal{K}\subseteq \mathcal{N}$, $\mathcal{K} \neq \emptyset$, such that
\begin{align*}
R &\ge \frac{1}{l}\log M_i^{(l)} \ \textrm{ for all $i$, and} \\
D_k &\ge \max_{\mathcal{K}:|\mathcal{K}| = k} \max_{x^l \in \mathcal{X}^l}\left[\frac{1}{l}\sum_{t=1}^l d(X_t,\hat{X}_{\mathcal{K},t})\right].
\end{align*}
\end{definition}
We use $\mathcal{RD}_{worst}$ to denote the set of achievable rate-distortion vectors. We describe our results for average-case distortion in the next section and for worst-case distortion in Section~\ref{sec:worst-case}. For both distortion criteria, we consider the case where there is \emph{no excess rate} for every $k$ out of $n$ descriptions, i.e., $kR = R(D_k) = 1 - D_k$, where $R(\cdot)$ is the Shannon rate-distortion function. Thus $R = (1-D_k)/k$. We will henceforth use $R$ to denote $(1-D_k)/k$. Our goal is to characterize the achievable distortions $D_1,\ldots,D_n$ for both distortion criteria.

\begin{figure}[htp]
\centering
  \includegraphics[width=3.5in]{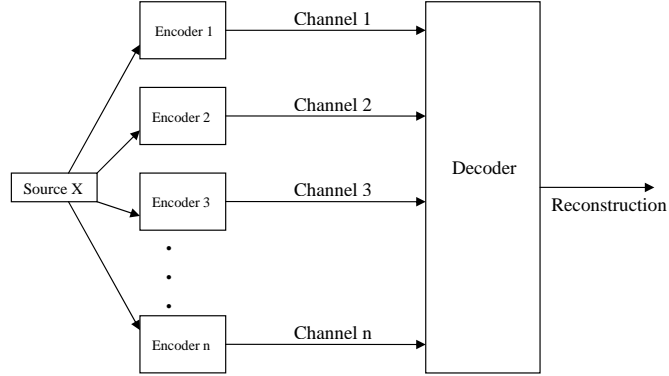}
\caption{The $n$-channel multiple descriptions problem}\label{fig:model}
\end{figure}

It should be pointed out that the $k=n$ case is particularly simple. Let $D_i, \ i \in \mathcal{N}$ be the distortion constraint when the receiver receives $i$ messages. No excess rate for $n$ descriptions dictates that the sum-rate of the $n$ messages is exactly $(1-D_n)$, which in turn implies that the rate of each message is $(1-D_n)/n$. The problem then reduces to characterizing the optimal $D_1,\ldots,D_n$. Consider a coding scheme that takes a source string of length $l$ and erases the last $lD_n$ bits. The remaining $l(1-D_n)$ bits are divided into $n$ disjoint parts, each consisting of $l(1-D_n)/n$ bits. Encoder $i$ transmits the $l(1-D_n)/n$ bits in the $i^{th}$ part to the decoder over the $i^{th}$ channel, with erasures in places of the remaining $l - l(1-D_n)/n$ bits. Thus upon reception of any $k$ descriptions, the decoder can reconstruct $kl(1-D_n)/n$ bits of the original source string. Clearly, this scheme achieves $D_k = 1-k(1-D_n)/n$ under both the average-case and worst-case distortion criteria. Moreover, for any code that achieves the rate-distortion vector $(1-D_n/n,D_1,\ldots,D_n)$, every description has rate $(1-D_n)/n$ and therefore any set of $k$ message can reveal no more than a fraction $k(1-D_n)/n$ bits of the original source string. Thus $$\max_{\mathcal{K}:\mathcal{K}=k} \textbf{E}\left[\frac{1}{l}\sum_{t=1}^l d(X_t,\hat{X}_{\mathcal{K},t})\right] \ge 1 - k(1-D_n)/n,$$ and $$\max_{\mathcal{K}:\mathcal{K}=k} \max_{x^l \in \mathcal{X}^l}\left[\frac{1}{l}\sum_{t=1}^l d(X_t,\hat{X}_{\mathcal{K},t})\right] \ge 1 - k(1-D_n)/n.$$ Thus the aforementioned coding scheme achieves the optimal $D_1,\ldots,D_n$ under both the average-case and worst-case distortion criteria.

We use the insight obtained from the $k = n$ case to construct codes for the more complicated case in which $k < n$. No excess rate for a particular set of $k$ descriptions requires that information transmitted over the corresponding channels be independent. Since we impose no excess rate for every size-$k$ subset of descriptions, information transmitted over any $k$ channels must be mutually independent. The coding scheme for $k = n$ ensures that this condition is met by dividing an erased version of the source string into $n$ disjoint (and therefore independent) parts and transmitting them uncoded over the $n$ channels. This strategy of sending independent uncoded bits works as long as the bits transmitted over each channel are disjoint. In particular, if $R = (1-D_k)/k \le 1/n$ (equivalently, $D_k \ge 1-k/n$), the source string can always be divided into $n$ disjoint, equal parts, each containing a fraction $R$ of the total number of bits. If $D_k < 1-k/n$, however, then $R > 1/n$ and it is not possible to divide the source string into $n$ disjoint parts each containing a fraction $R$ of the total number of bits, since each part must then contain more than $1/n$ of the total number of bits. Transmitting uncoded bits, therefore, will only be optimal for a rate up to $1/n$ only; in order to achieve a rate larger than $1/n$, additional information about the source must be transmitted along with each description, and this information must be mutually independent for every set of $k$ descriptions.

The threshold $D_k = 1 - k/n$ therefore plays an important role in our coding schemes for both average-case and worst-case distortions. If $D_k \ge 1 - k/n$, our coding scheme is based solely on the transmission of independent uncoded bits over the $n$ channels as described above. If $D_k < 1-k/n$, then in addition to sending uncoded bits, we employ random binning (for average-case distortion) and MDS codes (for worst-case distortion) to communicate additional information about the source sequence. The random binning component works by randomly binning an erased version of all possible source sequences at each encoder. Each encoder transmits uncoded bits from the observed source sequence along with the bin index of the corresponding erased version. The decoder uses the uncoded bits and the bin indices to output a partial reconstruction of the source sequence. Decoding the binned erased version in particular allows the decoder to reconstruct source bits other than the ones it receives uncoded. The average-case distortion scenario is conceptually simple, but provides weaker guarantees on optimality. The MDS coding scheme for worst-case distortion is based on a similar idea (transmission of uncoded bits plus encoded information about an erased version of the source string), but as we will see later, the worst-case distortion scenario provides much stronger guarantees on optimality than average-case distortion. The coding schemes for average-case and worst-case are described in detail in Sections~\ref{sec:achievability_avg-case} and \ref{sec:worst-case-achievability}, respectively.

\section{The Average-case Distortion Criterion}
\label{sec:average-case}

\subsection{An Achievability Result}
\label{sec:achievability_avg-case}

\begin{definition}
Given $n$, $k \le n$, and $D_k \in [0,1]$, define
\begin{align*}
\tilde{\textbf{R}} &= \left(R,1-R,1-2R,\ldots,1-(k-1)R,D_k,D_k-R,D_k-2R,\ldots,D_k-(n-k)R\right), \ \text{ and} \\
\hat{\textbf{R}} &= \left(R,1-\frac{1}{n},1-\frac{2}{n},\ldots,1-\frac{k-1}{n},D_k,\left(\frac{n-k-1}{n-k}\right)D_k,\left(\frac{n-k-2}{n-k}\right)D_k,\ldots,\left(\frac{1}{n-k}\right)D_k,0\right).
\end{align*}
\end{definition}
The following theorem shows that it is possible to achieve good intermediate performance when $m < k$ descriptions are received at the decoder.
\begin{theorem}
\label{thm:achievability_avg-case}
Let $D_k \in [0,1]$. For any $n$ and $k \le n$, if $D_k \ge 1-\frac{k}{n}$, then $\tilde{\textbf{R}} \in \overline{\mathcal{RD}}_{avg}$. If $D_k < 1-\frac{k}{n}$, then $\hat{\textbf{R}} \in \overline{\mathcal{RD}}_{avg}$.
\end{theorem}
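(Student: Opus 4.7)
The plan is to prove the two achievability claims by exhibiting explicit coding schemes for the two cases and verifying their distortions. The starting point is the paragraph immediately preceding the theorem, which already handles the $k=n$ instance by uncoded partitioning; I would extend this construction in two different directions depending on whether $D_k \ge 1-k/n$ or $D_k < 1-k/n$.

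\textbf{Case 1} ($D_k \ge 1-k/n$, equivalently $R = (1-D_k)/k \le 1/n$): since $nR \le 1$, I would partition $[l]$ into $n$ disjoint index sets $S_1,\dots,S_n$ of size $\lfloor lR\rfloor$ each. Encoder $i$ transmits $X_{S_i}$ uncoded; decoder $\mathcal{K}$ outputs $X_t$ at positions $t \in \bigcup_{i\in\mathcal{K}} S_i$ and the erasure symbol everywhere else. A decoder receiving $m$ descriptions recovers exactly $m\lfloor lR\rfloor$ source bits, so the average distortion tends to $1-mR$ as $l\to\infty$. The no-excess-rate identity $1-kR = D_k$ then writes this as $1-mR$ for $m\le k$ and as $D_k - (m-k)R$ for $m\ge k$, matching $\tilde{\mathbf{R}}$ entry-by-entry, so $\tilde{\mathbf{R}} \in \overline{\mathcal{RD}}_{avg}$.

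\textbf{Case 2} ($D_k < 1-k/n$, so $R > 1/n$): the per-description rate exceeds the uncoded limit $1/n$, so I augment the partition scheme with a random-binning layer. Each encoder $i$ transmits (i) an uncoded block $X_{S_i}$, where $S_1,\dots,S_n$ now partition $[l]$ into $n$ equal parts of size $l/n$ (rate $1/n$), together with (ii) a rate-$(R-1/n)$ random bin index of a judiciously chosen auxiliary sequence $Y^l$ produced from $X^l$ by a memoryless erasure channel. Following the ``concatenation of $(n,1)$ and $(n,k)$ source-channel erasure codes'' interpretation indicated in the introduction and in \cite{puri:nkcodes_2}, the binning component is structured so that: with fewer than $k$ descriptions the binning cannot be decoded and only the $m/n$ uncoded bits are produced, yielding distortion $1-m/n$; with exactly $k$ descriptions the $k$ bin indices combine with the $k/n$ uncoded side information to decode $Y^l$ and thereby recover a $(1-D_k)$ fraction of the source, giving $D_k$; and with $m>k$ descriptions each additional bin index resolves one more refinement layer worth $D_k/(n-k)$ extra source bits, producing the interpolated distortion $\frac{n-m}{n-k}D_k$ down to $D_n = 0$. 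A standard random-coding and joint-typicality argument shows that the binning-decoding error probability tends to zero as $l\to\infty$, and taking the closure $\overline{\mathcal{RD}}_{avg}$ absorbs the $o(1)$ slack, giving $\hat{\mathbf{R}} \in \overline{\mathcal{RD}}_{avg}$.

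\textbf{Main obstacle.} The delicate step is the design of the binning layer in Case 2. A single-shot bin of one erased version $Y^l$ cleanly delivers distortion $D_k$ at $m=k$ but gives no improvement for $m>k$; conversely, naive layered binnings (one $(n,k+j)$ code per refinement) have per-description rates $\sum_{j=0}^{n-k}\Delta_j/(k+j)$ that strictly exceed the available budget $R - 1/n$ unless the layers share information carefully. Achieving the claimed $D_{k+j} = \frac{n-k-j}{n-k}D_k$ therefore requires a binning design in which each layer encodes exactly the conditional increment of $Y^l$ given all previously decoded material and the uncoded side information, so that the per-layer rates collapse onto the tight budget. Verifying this rate/entropy accounting, and then running the random-binning error analysis on the resulting joint code, is where essentially all of the technical work lies; once they are in place the matching to $\hat{\mathbf{R}}$ is immediate and the closure operation in $\overline{\mathcal{RD}}_{avg}$ handles the vanishing block-length losses.
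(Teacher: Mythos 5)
Your Case 1 matches the paper's Case I essentially verbatim: divide $[l]$ into $n$ disjoint parts of size $l/n$, send $lR$ uncoded bits per part, and a receiver with $m$ descriptions recovers $mlR$ bits, giving distortion $1-mR$ for all $m$ and matching $\tilde{\mathbf{R}}$.

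Your Case 2 contains a genuine misdiagnosis of how the refinement for $m>k$ is obtained. You state that ``a single-shot bin of one erased version $Y^l$ cleanly delivers distortion $D_k$ at $m=k$ but gives no improvement for $m>k$'' and then propose a multi-layer binning that must be engineered so the per-layer rates collapse onto the tight budget. That extra machinery is not needed, and the claim that extra bin indices ``resolve one more refinement layer'' is not the mechanism at work. In the paper's scheme the erased version $X_e^l$ is \emph{deterministic}, not generated by a memoryless erasure channel: each of the $n$ disjoint blocks of size $l/n$ has its \emph{last} $lD_k/(n-k)$ positions erased, always at the same coordinates. Encoder $i$ sends the entire $i$th block $X_i$ uncoded plus a bin index for $X_e^l$. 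All $n$ bin indices point at the same $X_e^l$, so any $k$ of them suffice to decode it and the remaining bin indices are redundant. The gain for $m>k$ comes entirely from the extra \emph{uncoded} blocks $X_{k+1},\dots,X_m$: because the erasure pattern is aligned with the block structure, the non-erased coordinates of $X_e^l$ inside block $i$ are a \emph{subset} of the uncoded $X_i$. Thus when $m$ blocks arrive uncoded, only the $(n-m)$ remaining blocks contribute new information from $X_e^l$, namely $(n-m)\bigl(l/n - lD_k/(n-k)\bigr)$ bits, which combines with the $ml/n$ uncoded bits to give exactly $D_m = \frac{n-m}{n-k}D_k$. No conditional-increment layering and no additional rate accounting is required; the single-bin scheme already achieves $\hat{\mathbf{R}}$. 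The deterministic, block-aligned erasure pattern also makes the error analysis elementary, since the number of erased versions consistent with any $k$ uncoded blocks is exactly $2^{(n-k)(l/n - lD_k/(n-k))}$, which is what the rate $R'=(1-D_k)/k-1/n+\epsilon$ per bin index is sized to beat.
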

\begin{proof}
\textbf{Case I}: $D_k \ge 1-\frac{k}{n}$
\newline
Assume without loss of generality that $D_k$ is rational (if $D_k$ is irrational, then we can prove achievability for a sequence of rational distortions in $[1-k/n,1]$ converging to $D_k$ and take limits). Then there exists a positive integer $l'$ such that $l'R$ is a positive integer. Choose a blocklength $l=\alpha nl'$, where $\alpha$ is any positive integer. Observe a length-$l$ source sequence $X^l$, and divide $X^l$ into $n$ disjoint parts such that each part contains $l/n=\alpha l'$ bits. (The division is the same regardless of the source realization.) Label the parts $X_i, \ i \in \mathcal{N}$. Choose $lR$ bits from each of the $n$ parts (since $D_k \ge 1-\frac{k}{n}$, $lR \le \frac{l}{n}$ and therefore $lR$ bits can be chosen from each part). Denote by $y_i$ the set of $lR$ bits chosen from $X_i$. Transmit $y_i$ uncoded over the $i^{th}$ channel.

The decoding is trivial. If $m$ descriptions, say $(y_1,\ldots,y_m)$, are received, output $\hat{X}_m^l$ as the reconstruction of $X^l$, where $\hat{X}_m^l$ is such that the $mlR$ bits corresponding to $(y_1,\ldots,y_m)$ are non-erased and the other ($l- mlR)$ bits are erasures. The distortion, therefore, is $(l- mlR)/l = 1 - mR$. When $k$ descriptions are received, the distortion is $1-kR = D_k$. Thus the rate-distortion vector $(R,1-R,1-2R,\ldots,1-(k-1)R,D_k,D_k-R,D_k-2R,\ldots,D_k-(n-k)R) \in \mathcal{RD}_{avg}$, and therefore also lies in $\overline{\mathcal{RD}}_{avg}$.
\newline
\newline
\textbf{Case II}: $D_k < 1-\frac{k}{n}$
\newline
The scheme for this case is an extension of the scheme for Case I. It has two components; random binning and transmission of uncoded source bits. An erased version of every source sequence is binned separately at each encoder. The observed source string is divided into $n$ disjoint parts. Each uncoded part is then sent on one of the $n$ channels along with the corresponding bin index of the erased version of the source. If less than $k$ descriptions are received, the decoder outputs a partial reconstruction based solely on the uncoded parts; if $k$ or more descriptions are received, the decoder outputs a reconstruction based on the uncoded parts and the bin indices.

Assume again that $D_k$ is rational. Choose $\epsilon > 0$, and define $R' = (1-D_k)/k - 1/n + \epsilon$. Since $D_k$ is rational, there exists a positive integer $l'$ such that $l'D_k/(n-k)$ is an integer. Choose a blocklength $l=\alpha nl'$, where $\alpha$ is any positive integer.

\emph{Random binning:} Construct $n$ sets of bins such that every set contains $2^{lR'}$ bins. For every length-$l$ source string $x^l \in \mathcal{X}^l$, construct an erased version as follows. Divide $x^l$ into $n$ disjoint parts such that each part contains $l/n=\alpha l'$ bits (the division is done identically for all source sequences). For each part, replace the last $lD_k/(n-k)$ bits by erasures (since $D_k < 1-\frac{k}{n}$, each part contains $l/n > lD_k/(n-k)$ bits). Assign the resulting erased version $x_e^l$ uniformly at random, and independently from other strings, to one of the $2^{lR'}$ bins in the $i^{th}$ set, for all $i\in \mathcal{N}$. The assignment is done only once for each erased version. This is important because multiple source strings can have the same erased version. Denote the assignments by $\Gamma_i$.

\emph{Encoding:} Let $X^l$ be the observed source sequence. Divide $X^l$ into $n$ disjoint parts each containing $l/n$ bits as described above. Label the parts $X_i$, $i \in \mathcal{N}$. Let $b_i = \Gamma_i(X^l)$ be the index of the bin containing the erased version of $X^l$ in the $i^{th}$ bin set. Transmit $(X_i,b_i)$ over the $i^{th}$ channel.

\emph{Decoding:} If $m$ descriptions, say $\{(X_1,b_1),\ldots,(X_m,b_m)\}$, are received, where $m < k$, output $\hat{X}_m^l$ as the reconstruction of $X^l$, where $\hat{X}_m^l$ is such that the $ml/n$ bits corresponding to $(X_1,\ldots,X_m)$ are non-erased and the other ($l- ml/n)$ bits are erasures. If $m > k$ descriptions are received, say $\{(X_1,b_1),\ldots,(X_m,b_m)\}$, choose any $k$ descriptions, say $\{(X_1,b_1),\ldots,(X_k,b_k)\}$, and search the bins $(b_1,\ldots,b_k)$ for a sequence $y$ such that $\Gamma_i(y)=b_i$, $i = 1,\ldots,k$, and $y$ is consistent with the partially revealed source string $(X_1,\ldots,X_k)$. Output $\hat{X}_m^l = \{(X_1,\ldots,X_m)\} \cup \{y\}$ as the reconstruction of $X^l$. (Thus the non-erased bits in $\hat{X}_m^l$ are the bits revealed by $(X_1,\ldots,X_m)$ or by the erased version $y$, or both.) There is guaranteed to be at least one such sequence $y$ in the bins indexed by $b_1,\ldots,b_k$. If there is more than one such sequence, output the all-erasure string as the reconstruction of $X^l$. (This will suffice to meet our distortion constraint.)

\emph{Error analysis:} We say an error $E_\mathcal{S}$ has occurred at the decoder if, for a set $\mathcal{S} = \{s_1,\ldots,s_k\}$ of $k$ descriptions, there exists an erased version $y\neq X_e^l$ such that $\Gamma_{s_i}(y)=\Gamma_{s_i}(X_e^l)$ for all $s_i \in \mathcal{S}$ and $y$ is consistent with $(X_{s_1},\ldots,X_{s_k})$. Let $\mathcal{C_S}$ be the set of erased versions that are consistent with $(X_{s_1},\ldots,X_{s_k})$. Define $E = \bigcup_{\mathcal{S}, |\mathcal{S}| = k} E_\mathcal{S}$.
We bound $\Pr(E)$ as follows.
\begin{align*}
\Pr(E) &\le \sum _{\mathcal{S}, |\mathcal{S}| = k} \Pr(E_\mathcal{S}) \\
&= \sum_{\mathcal{S}, |\mathcal{S}| = k} \Pr(\exists y\neq X_e^l,y \in \mathcal{C_S}: \Gamma_{s_i}(y)= \Gamma_{s_i}(X_e^l) \ \forall s_i \in \mathcal{S}) \\
&= \sum_x p(x) \sum_{\mathcal{S}, |\mathcal{S}| = k} \Pr(\exists y\neq x_e^l, y \in \mathcal{C_S}: \Gamma_{s_i}(y)= \Gamma_{s_i}(x_e^l))\\
&\le \sum_x p(x) \sum_{\mathcal{S}, |\mathcal{S}| = k} \sum_{\substack{y\neq x_e^l \\ y\in \mathcal{C_S}}} \Pr(\Gamma_{s_i}(y)= \Gamma_{s_i}(x_e^l) \ \forall s_i \in \mathcal{S}) \\
&\le \sum_x p(x) \sum_{\mathcal{S}, |\mathcal{S}| = k} 2^{-klR'} |\mathcal{C_S}| \\
&= \sum_x p(x) \sum_{\mathcal{S}, |\mathcal{S}| = k} 2^{-kl(\frac{1-D_k}{k}-\frac{1}{n}+\epsilon)}\cdot 2^{(n-k)(\frac{l}{n}-l\frac{D_k}{n-k})} \\
&= \sum_x p(x) \sum_{\mathcal{S}, |\mathcal{S}| = k} 2^{-lk\epsilon}\\
&\le { n \choose k } 2^{-lk\epsilon}.
\end{align*}
We now show that for any $\epsilon > 0$, the $(n+1)$-tuple $(R+\epsilon,1-\frac{1}{n}+\epsilon,1-\frac{2}{n}+\epsilon,\ldots,1-\frac{k-1}{n}+\epsilon,D_k+\epsilon,(\frac{n-k-1}{n-k})D_k+\epsilon,(\frac{n-k-2}{n-k})D_k+\epsilon,\ldots,(\frac{1}{n-k})D_k+\epsilon,\epsilon)$ is achievable, and thus $(R,1-\frac{1}{n},1-\frac{2}{n},\ldots,1-\frac{k-1}{n},D_k,(\frac{n-k-1}{n-k})D_k,(\frac{n-k-2}{n-k})D_k,\ldots,(\frac{1}{n-k})D_k,0) \in \overline{\mathcal{RD}}_{avg}$. Fix $\epsilon > 0$ and define $R'$ as above. In our scheme, any description $(X_i,b_i)$ has rate $R = 1/n + R'$, where $1/n$ is the rate due to $X_i$ and $R'$ is the rate due to binning. Thus $R = 1/n + (\frac{1-D_k}{k} - 1/n + \epsilon) = (1-D_k)/k + \epsilon$. Moreover, if $m < k$ descriptions are received, the decoder outputs $ml/n$ bits as revealed by the $m$ descriptions and the other $(l-ml/n)$ bits as erasures. Thus $D_m = 1 - m/n < 1 - m/n + \epsilon$. If $k$ descriptions are received, say $\mathcal{S}=\{s_1,\ldots,s_k\}$, the decoder either outputs an erased version of the correct source sequence if $E_\mathcal{S}^c$ occurs, or outputs an all erasure string if $E_\mathcal{S}$ occurs. If $E_\mathcal{S}^c$ occurs, then the decoder receives $kl/n$ bits uncoded from the $k$ descriptions, and is able to figure out a further $(n-k)(l/n-lD_k/(n-k))=l(1-k/n-D_k)$ bits by using the bin indices to decode the erased version of the source sequence. Hence the maximum per-letter distortion over sets of $k$ descriptions is $1-(k/n+1-k/n-D_k)=D_k$ if $E^c$ occurs, and $1$ if $E$ occurs. Let $d_{\mathcal{S},x}$ be the per-letter distortion achieved using the set $\mathcal{S}$ of descriptions if the observed source string is $x^l$. Thus
\beqano
\bE_{f,g} \max_{\mathcal{S}, |\mathcal{S}| = k}  \bE_{X} [d_{\mathcal{S},X}] &\le& \bE_{f,g} \bE_{X} [\max_{\mathcal{S}, |\mathcal{S}| = k} d_{\mathcal{S},X}]  \\
&=& \bE_{f,g} \bE_{X} [1_E + D_k\cdot1_{E^c}] \\
&=& \Pr(E)+D_k(1-\Pr(E)) = (1-D_k)\Pr(E) + D_k\\
&\le& (1-D_k)\left[{ n \choose k } 2^{-kl\epsilon}\right]+ D_k,
\eeqano
which can be made smaller than $D_k + \epsilon$ by letting $\alpha\rightarrow \infty$. Thus $D_k + \epsilon$ is achievable for some sufficiently large $l$. If $m > k$ descriptions are received, then the decoder receives $ml/n$ bits uncoded, and is able to figure out a further $(n-m)(l/n-lD_k/(n-k))$ bits by decoding the binned erased version. Thus, if $E^c$ occurs, the maximum per-letter distortion is $1-m/n-((n-m)/n-(n-m)D_k/(n-k))=(\frac{n-m}{n-k})D_k$, and by the same analysis as above, a distortion of $(\frac{n-m}{n-k})D_k + \epsilon$ can be achieved for some sufficiently large $l$.
\end{proof}

\subsection{Optimality Results}
In this section we present optimality results for the random binning coding scheme described in the previous subsection. We first establish some preliminary results in Appendix~\ref{app:prelim} which will be used in the proofs of the following theorems. Our optimality results for the average-case deal deal primarily with single-message optimality, i.e., when only one message is received at the decoder. In the next section, we shall see that stronger optimality results can be established for the worst-case distortion criterion.

The following theorem shows that when only one message is received at the decoder, the scheme is optimal, modulo a closure operation, for all $n$ and $k$ satisfying $\left(1 - \frac{1}{n}\right)^k \le \frac{1}{2}$. Recall that, given $D_k$, we use $R$ to denote $(1-D_k)/k$.

\begin{definition}
For any fixed $D_k$, define $$D_1^* = \inf \{D_1:(R,D_1,\ldots,D_k,\ldots,D_n) \in \mathcal{RD}_{avg}\}.$$
\end{definition}

\begin{theorem}
\label{thm:conv_avg-case}
For any $n$ and $k \le n$, if $D_k \ge 1-\frac{k}{n}$, then for any $(R,D_1,\ldots,D_k,\ldots,D_n) \in \mathcal{RD}_{avg}$, $D_m \ge 1-mR$ for all $m \in \mathcal{N}$. If $D_k < 1-\frac{k}{n}$, $D_k$ is rational\footnote{For this theorem and subsequent theorems in this subsection, we consider rational values for $D_k$ since any code over a finite blocklength can yield only rational distortions.}, and $\left(1 - \frac{1}{n}\right)^k \le \frac{1}{2}$, then $D_1^* \ge 1 - \frac{1}{n}$.
\end{theorem}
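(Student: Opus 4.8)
The first statement (the case $D_k \ge 1 - k/n$) is essentially the converse matching Case I of Theorem~\ref{thm:achievability_avg-case}, and follows from the same counting argument already given for $k=n$: each description has rate $R = (1-D_k)/k$, so any $m$ descriptions carry at most $mlR$ bits of information about $X^l$, hence the expected fraction of non-erased (and correctly recovered) positions is at most $mR$, giving $D_m \ge 1 - mR$. I would write this out using the fact that under the erasure distortion measure a finite expected distortion forces every non-erased reconstruction symbol to equal the source symbol, so $\mathbf{E}[\frac1l\sum_t d(X_t,\hat X_{\mathcal{K},t})] = 1 - \frac1l \mathbf{E}[\#\{t : \hat X_{\mathcal{K},t}\neq 0\}] \ge 1 - \frac1l H(X^l) \cdot (\text{something})$; more precisely $\mathbf{E}[\#\text{non-erasures}] \le H(\hat X_{\mathcal K}^l) \le \sum_{k\in\mathcal K}\log M_k^{(l)} \le mlR$. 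This is the easy half.

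**The substantive claim** is the second one: when $D_k < 1 - k/n$ and $(1-1/n)^k \le \frac12$, then $D_1^\star \ge 1 - 1/n$. The approach is a converse argument. Fix a code achieving distortion vector $(R, D_1, \dots, D_n)$ with $D_1 < 1 - 1/n$; I want a contradiction (or rather, to show $D_1$ cannot dip below $1-1/n$ while $D_k$ meets its target). The key tension: to get $D_1 < 1 - 1/n$, a single description must reveal (in expectation) strictly more than a fraction $1/n$ of the source bits; but no-excess-rate for every $k$-subset forces the $k$ descriptions to be "nearly independent," which caps how much total information $k$ descriptions can jointly convey, and hence — because single-description performance better than $1-1/n$ forces correlation/redundancy structure — limits things. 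Concretely I would: (i) use the no-excess-rate condition $kR = 1 - D_k$ together with the distortion-$D_k$ requirement to show that the $k$ descriptions $(W_{i_1},\dots,W_{i_k})$ must be essentially mutually independent and that their union determines a fraction $1-D_k$ of source bits with the non-erased set being (asymptotically) a deterministic function union; (ii) quantify, via entropy, that $H(W_i) \le lR$ for each $i$ while $H(W_{i_1},\dots,W_{i_k}) \ge l(1-D_k) = klR - $ (slack), forcing $\sum_j H(W_{i_j}) - H(W_{i_1},\dots,W_{i_k})$ to be $o(l)$, i.e., near-independence; (iii) translate near-independence plus "each $W_i$ reveals $\ge l(1-D_1)$ positions" into a statement about the erasure patterns: letting $S_i \subseteq \{1,\dots,l\}$ be the (random) set of positions description $i$ can resolve, the $S_i$ must be "spread out" — if they overlapped too much the union couldn't reach size $l(1-D_k)$, and if $|S_i|$ is large relative to $1/n$ the near-independence combined with the uniform-source assumption is violated. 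This last step is where the hypothesis $(1-1/n)^k \le \frac12$ enters: it is exactly the condition under which a fraction-$(1/n)$ (or larger) coverage by $k$ independent-ish sets would cover more than half the bits, contradicting the entropy budget — so I would look for an inequality of the form $l(1-D_k) \le l\big(1 - (1-|S_i|/l)^{\,k}\big)$ (union bound under independence), which when $|S_i|/l > 1/n$ yields $1 - D_k > 1 - (1-1/n)^k \ge 1/2 \ge D_k$ after using $(1-1/n)^k \le 1/2$ — i.e., forces $D_k > 1/2 > 1 - k/n$ if $k$ is small, the contradiction. I expect the bookkeeping around ``near-independence'' (making the $o(l)$ slack terms rigorous, presumably via the preliminary lemmas in Appendix~\ref{app:prelim}) and the precise probabilistic coupling between the entropy deficit and the set-overlap structure to be the main obstacle; the clean role of $(1-1/n)^k \le 1/2$ strongly suggests the intended argument is exactly this union-bound-under-independence computation, so I would organize the proof to isolate that inequality as the crux and relegate the independence-slack estimates to citations of the appendix.
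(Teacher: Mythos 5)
The first half of your proposal (the regime $D_k \ge 1 - k/n$) is fine and matches the paper's one-line argument: each of the $m$ messages carries at most $lR$ bits, so by the point-to-point binary-erasure rate-distortion function $D_m \ge 1 - mR$.

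For the substantive half there is a genuine gap, and your proposed union-bound computation goes in the wrong direction. Under independence of the revealed sets $S_1,\ldots,S_k$, the fraction of positions covered by their union is (in expectation) $1 - \prod_i(1 - q_i)$ with $q_i = |S_i|/l$, and the $k$-description constraint gives $\prod_i(1-q_i) \le D_k$. This is an \emph{upper} bound on the product, which cannot by itself rule out any individual $q_i$ being large. Your stated chain ``$1 - D_k > 1 - (1-1/n)^k \ge 1/2 \ge D_k$, hence $D_k > 1/2$'' is internally inconsistent: it derives $D_k < 1/2$ and then asserts $D_k > 1/2$. No contradiction arises, and the hypothesis $(1-1/n)^k \le 1/2$ is not doing what you want it to.

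What your plan misses is the structural constraint that drives the paper's Lemma~\ref{lemma:cases_n_k}: the single-message reconstructions are \emph{erased versions of the same underlying source bit}, so they can never disagree in the symbol they reveal. Combined with pairwise (hence $k$-wise) independence, this forces a one-sided structure: if some reconstruction ever outputs $+$, then no other reconstruction can ever output $-$. Consequently whenever the source bit is $-$, every reconstruction must erase, giving the \emph{lower} bound $\Pr(X_{s_1}=\cdots=X_{s_k}=0) \ge \Pr(X = -) = 1/2$, and by independence $\prod_{j=1}^k \Pr(X_{s_j}=0) \ge 1/2$ for every size-$k$ subset. Minimizing $\sum_i \Pr(X_i=0)$ subject to these constraints (a convex program after taking logs) gives $\sum_i \Pr(X_i=0) \ge n\,(1/2)^{1/k}$, and \emph{this} is where $(1-1/n)^k \le 1/2$ enters, yielding $n(1/2)^{1/k} \ge n(1-1/n) = n-1$ and hence $D_1 \ge 1-1/n$. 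Without the consistency/one-sidedness observation you only get constraints bounding $\prod(1-q_i)$ from above, which is the wrong direction.

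A smaller point: your ``near-independence'' bookkeeping with $o(l)$ slack is not needed. The no-excess-rate constraint $klR = l(1-D_k)$ together with $\sum_j H(f_{s_j}) \le klR$ and $H(f_{s_1},\ldots,f_{s_k}) \ge l(1-D_k)$ makes the multi-letter mutual information $I_k(f_{s_1};\ldots;f_{s_k}) = \sum_j H(f_{s_j}) - H(f_{s_1},\ldots,f_{s_k})$ \emph{exactly} zero, and the data-processing inequality for $I_k$ (Property~3 in Appendix~\ref{app:prelim}) then forces exact per-letter independence of the single-message reconstructions. There is no slack to control.
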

\begin{proof}
See Appendix~\ref{app:conv_avg-case}.
\end{proof}
We note that $\left(1 - \frac{1}{n}\right)^k \le \frac{1}{2}$ implies $k \ge \frac{1}{\log \left(n/n-1\right)} := \lambda(n)$. Since $\lambda(n)/n \ra 1/\log e$ as $n \ra \infty$, the second part of Theorem~\ref{thm:conv_avg-case} provides a lower bound on $D_1^*$ for a large range of $k$ when $n$ is large.

The following theorem proves single-message optimality for the coding scheme when $n=4$ and $k=2$. This case is not included in Theorem~\ref{thm:conv_avg-case}.

\begin{theorem}
\label{thm:(4,2)_avg-case}
Let $D_k < 1-\frac{k}{n}$ and rational. If $n=4$ and $k=2$, then $D_1^* \ge 1 - \frac{1}{n}$.
\end{theorem}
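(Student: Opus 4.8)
The plan is to exploit the rigid structure that the no-excess-rate condition forces on any code, and to reduce the claim to a counting statement about which descriptions can reveal which source bits.

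\emph{Structure.} Using the preliminary results of Appendix~\ref{app:prelim}, the no-excess-rate condition on every $2$ of the $4$ descriptions forces, for any blocklength-$l$ code in question, that each $f_i$ satisfies $H(f_i)=lR=l(1-D_2)/2$, that $f_i$ and $f_j$ are independent for every $i\neq j$, and that for every $i\neq j$ and every realized pair of values the fibre-intersection $f_i^{-1}(\cdot)\cap f_j^{-1}(\cdot)$ is a subcube of $\{+,-\}^l$ on which the source is deterministic on the fixed coordinates and i.i.d.\ uniform on the free ones. Since a finite distortion forces erasure-only reconstructions, receiving description $i$ lets the decoder reveal exactly the coordinates on which the fibre $f_i^{-1}(f_i(X^l))$ is constant; write $U_i$ for this random set, so the single-message distortion for description $i$ is $1-\mathbf{E}|U_i|/l$, and the best valid $D_1$ for a code is $1-\min_i\mathbf{E}|U_i|/l$.

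\emph{Reduction.} First I would symmetrize: concatenating the $4!$ codes obtained by permuting the descriptions yields a code with the same per-description rate, the same $2$-description distortion, all $\mathbf{E}|U_i|$ equal, and $D_1$ no larger than that of the original code; so it suffices to prove $\sum_{i=1}^4\mathbf{E}|U_i|\leq l$ (which, incidentally, the random-binning scheme of Theorem~\ref{thm:achievability_avg-case} meets with equality, so the target is exactly right). Writing $N_t$ for the number of descriptions that reveal $X_t$, we have $\sum_i\mathbf{E}|U_i|=\sum_{t=1}^l\mathbf{E}[N_t]$, so the goal becomes a bound on the expected revelation multiplicity of each source coordinate.

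\emph{Main step.} Fix a coordinate $t$ and let $E_i$ be the event ``$f_i$ reveals $X_t$'', which is measurable with respect to $f_i$, so pairwise independence gives $\Pr(E_i\cap E_j)=\Pr(E_i)\Pr(E_j)$. On $E_i$ the value of $X_t$ is a function of $f_i$, and on $E_i\cap E_j$ the two determinations must agree; expanding $\Pr(X_t=+,E_i,E_j)$ in the two natural ways and using independence shows that if some description reveals $X_t$ as $+$ with positive probability \emph{and} as $-$ with positive probability, then no other description can reveal $X_t$ at all, so that $\mathbf{E}[N_t]\le1$. The residual ``one-sided'' cases (a description that reveals $X_t$ only when $X_t$ takes one fixed value) are controlled by combining the subcube structure of the $3$- and $4$-description fibre intersections with the global constraint $H(f_1,f_2,f_3,f_4)\leq l$ — equivalently $h+h'\leq lD_2$, where $h$ and $h'$ denote the common conditional entropies of one description given two others, resp.\ given three others — and with the identity $F(\{1,2,3,4\})=F(\{1,2\})\cup F(\{3,4\})$ (valid for each of the three $2{+}2$ pairings because $\bigcap_i f_i^{-1}(\cdot)$ is an intersection of two subcubes), which pins down $\mathbf{E}|F(\{1,2\})\cap F(\{3,4\})|=2lR-h-h'$. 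Summing the per-coordinate bounds over $t$ yields $\sum_i\mathbf{E}|U_i|\leq l$, hence $D_1^*\geq 1-1/n$.

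\emph{Main obstacle.} The delicate point is that the hypothesis supplies only \emph{pairwise} independence of $f_1,\ldots,f_4$, so the triple and quadruple overlaps among the $U_i$ — and the overlaps among the pair-revealed coordinates that are not individually revealed — are not products of marginals and must be bounded through the subcube structure and the single scalar inequality $h+h'\leq lD_2$. Making these higher-order terms fit together tightly, and in particular preventing the one-sided revelations from consuming more than the available budget, is precisely what places $n=4$, $k=2$ outside the reach of Theorem~\ref{thm:conv_avg-case} and is where the computation is heaviest.
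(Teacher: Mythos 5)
Your structural observations are genuinely interesting, and the ``main step'' is essentially the paper's Case~I argument (if one description can reveal $X_t$ as both $+$ and $-$ with positive probability, pairwise independence forces every other description to be erased at coordinate $t$). But there is a real gap in how you dispose of the one-sided residual cases, and that is exactly where all of the difficulty resides; in the paper it occupies Lemma~\ref{lemma:case2_n_3_4} and Corollary~\ref{cor:max4}, neither of which your sketch reproduces or replaces.

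Concretely, the one-sided situation is: fix $t$, every description reveals $X_t$ only when $X_t=+$ (or never). Writing $p_i=\Pr(\hat{X}_{it}=0)$, pairwise independence and ``erased whenever $X_t=-$'' give a purely probabilistic constraint set, and the content of Corollary~\ref{cor:max4} is the nonobvious inequality $\sum_{i=1}^4 p_i\ge 3$. This is a per-coordinate statement, and its proof is a delicate optimization (note that the analogous Chebyshev-style bound used in the paper's Theorem~\ref{thm:chebyshev_avg-case} for the same regime only yields $\sum_i p_i\ge n-2$, i.e.\ falls short by $1$). Your replacement — the global entropy inequality $h+h'\le lD_2$ together with the pairing identity $F(\{1,2,3,4\})=F(\{1,2\})\cup F(\{3,4\})$ — lives at the blocklength level, not the per-coordinate level, and nothing in your outline translates it into the per-coordinate bound $\mathbf{E}[N_t]\le 1$. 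If instead you intend only the averaged bound $\sum_t\mathbf{E}[N_t]\le l$, you would need a genuinely different accounting, and the natural decomposition $\sum_i|U_i|\le|C_{12}|+|C_{34}|+|U_1\cap U_2|+|U_3\cap U_4|$ is already too weak: $\mathbf{E}|C_{12}|=\mathbf{E}|C_{34}|=l(1-D_2)>l/2$ when $D_2<1/2$, so the right-hand side exceeds $l$ before any overlap terms are added.

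Two secondary issues. First, the claim that the \emph{three-description} fibre intersections are subcubes is not justified: the no-excess-rate condition applies only to pairs, so only the pair fibres $f_i^{-1}(\cdot)\cap f_j^{-1}(\cdot)$ are forced to be subcubes, and the quadruple fibre inherits the property as an intersection of two pair subcubes, but a triple fibre $F_{12}\cap f_3^{-1}(\cdot)$ need not be one. Second, the pair-subcube fact itself, while correct, deserves a proof in the body (it follows from the chain of equalities $\mathbf{E}|C_{ij}|\ge l(1-D_2)=\,l-\mathbf{E}\log|F_{ij}|\ge \mathbf{E}|C_{ij}|$, so all the inequalities are tight and $|F_{ij}|=2^{\,l-|C_{ij}|}$). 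In summary, your route is genuinely different and the structural setup is sound, but the hard step — bounding the one-sided revelation multiplicity using only pairwise independence — is asserted rather than proved, and that is precisely the content of the paper's Corollary~\ref{cor:max4}.
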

\begin{proof}
See Appendix~\ref{app:(4,2)_avg-case}.
\end{proof}

Theorem~\ref{thm:conv_avg-case} handles the regime in which $k$ is large. We now study the other extreme, i.e., when $k$ is small. In particular, we look at the $k=2$ case. The following theorem provides a lower bound on the optimal single-message distortion for $n > 3$ and $k=2$. This lower bound differs from the distortion achieved by our coding scheme by exactly $1/n$, and thus becomes progressively tighter as $n$ increases.

\begin{theorem}
\label{thm:chebyshev_avg-case}
Let $D_k < 1-\frac{k}{n}$ and rational. If $k = 2$, then for $n > 3$, $D_1^* \ge 1 - \frac{2}{n}$.
\end{theorem}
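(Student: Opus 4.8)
The plan is to show that \emph{every} code with no excess rate for every pair of descriptions must contain some single description that, averaged over source sequences, correctly reveals at most $2l/n$ of the $l$ source bits; such a description then incurs average distortion at least $1-2/n$, which gives $D_1^\ast \ge 1-2/n$.

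First I would set up the relevant quantities. Because the erasure distortion measure assigns $\infty$ to any erroneous symbol, a decoder meeting a finite distortion must, with probability one, output at each coordinate either the correct source symbol or an erasure. Hence each single-description decoder $g_{\{i\}}$ determines a (source-dependent) set $T_i\subseteq\{1,\dots,l\}$ of correctly revealed positions, and the pair decoder $g_{\{i,j\}}$---which without loss of generality we may take to reveal at least $T_i\cup T_j$---reveals a set $S_{ij}\supseteq T_i\cup T_j$. Since $D_1\ge 1-\tfrac1l\min_i\mathbf{E}|T_i|$, it suffices to prove $\min_i\mathbf{E}|T_i|\le 2l/n$.

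Next I would extract the structural consequences of no excess rate for pairs, using the preliminary results of Appendix~\ref{app:prelim}. For a uniform binary source under erasure distortion the one-shot rate--distortion converse holds with equality, so achieving $D_2$ with sum-rate $R(D_2)=1-D_2$ forces each $M_i$ to be uniform on its range with $H(M_i)=lR$, forces every pair $(M_i,M_j)$ to be exactly independent, and forces each pair to reveal $\mathbf{E}|S_{ij}|=2lR$ positions with the unrevealed coordinates conditionally uniform. Pairwise independence of the $M_i$ makes the reveal indicators $\mathbf 1\{t\in T_i\}$ pairwise independent in $i$ for each fixed $t$, and gives $\mathbf{E}|T_i|+\mathbf{E}|T_j|-\mathbf{E}|T_i\cap T_j|=\mathbf{E}|T_i\cup T_j|\le 2lR$ for every pair. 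A further consequence (from the per-coordinate tightness of the converse) is that a position which a single description ever reveals ``one-sidedly'' (only for one value of the source bit) must be revealed by the pair for both values, so such positions cost double against the pair's budget of $2lR$.

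The crux is a second-moment (Chebyshev) estimate on the coverage counts $Z_t:=\#\{i:t\in T_i\}$. Writing $\Sigma=\sum_i\mathbf{E}|T_i|=\sum_t\mathbf{E}[Z_t]$, pairwise independence of the reveal indicators makes the summands of $Z_t$ uncorrelated, so $\mathrm{Var}(Z_t)=\sum_i\Pr(t\in T_i)\bigl(1-\Pr(t\in T_i)\bigr)$ is comparable to $\mathbf{E}[Z_t]$; combining this control of the second moment of the coverage with the $\binom n2$ pairwise union bounds, the doubling cost above, and the trivial bound $\mathbf{E}\,|\bigcup_i T_i|\le l$, I would conclude $\min_i\mathbf{E}|T_i|\le 2l/n$, which needs $n>3$. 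I expect this balancing step to be the main obstacle: a naive union-bound argument only yields $D_1^\ast\ge 1-R$, which is weaker than $1-2/n$ precisely when $R>2/n$, so one must genuinely exploit that the overlaps $\mathbf{E}|T_i\cap T_j|$ cannot all be large simultaneously without $\bigcup_iT_i$ overflowing $\{1,\dots,l\}$. The factor-of-two gap between the bound $1-2/n$ and the distortion $1-1/n$ achieved by Theorem~\ref{thm:achievability_avg-case} is the slack inherent in this second-moment estimate; removing it seems to require the finer, case-based analysis used for $n=4$ in Theorem~\ref{thm:(4,2)_avg-case} rather than the present approach.
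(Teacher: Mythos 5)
Your second-moment/Chebyshev idea on the per-coordinate coverage counts $Z_t=\#\{i:t\in T_i\}$ is indeed the key tool the paper uses (its Lemma~\ref{lemma:chebyshev_bound} is exactly a Chebyshev bound on $\frac{1}{n}\sum_i \mathbf{1}\{t\in T_i\}$), and the route from per-coordinate independence $I(f_i;f_j)=0$ to pairwise independence of the single-description erased reconstructions is the same. But the constraint you propose to combine with Chebyshev is the wrong one, and without the right one the estimate does not close.

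The decisive inequality that makes Chebyshev bind is, for each coordinate $t$ in the relevant case, $\Pr(Z_t=0)\ge 1/2$. This does not come from the trivial bound $\mathbf{E}|\bigcup_i T_i|\le l$ (which holds with probability one and carries no information) nor from the pairwise budgets $\mathbf{E}|T_i\cup T_j|\le 2lR$: aggregating the latter over all $\binom{n}{2}$ pairs gives only $\mathbf{E}[Z_t]\lesssim n(1-\sqrt{1-2R})$, which does not improve on $\mathbf{E}[Z_t]\le 2$ when $R$ is close to $1/2$ (i.e.\ when $D_2$ is near $0$), precisely the regime the theorem must cover. The constraint $\Pr(Z_t=0)\ge 1/2$ instead comes from a per-coordinate case analysis on the erased versions $\hat X_{1,t},\ldots,\hat X_{n,t}$: if some $\hat X_{i,t}$ is ``two-sided'' (ever outputs both $+$ and $-$), pairwise independence plus the no-disagreement property forces every other $\hat X_{j,t}$ to be constantly erased, yielding the stronger bound $\sum_i\Pr(\hat X_{i,t}=0)\ge n-1$ directly; otherwise all $\hat X_{i,t}$ are one-sided, pairwise independence again forces them all to be one-sided \emph{with the same sign}, and then whenever the uniform source bit $X_t$ takes the opposite value (probability $1/2$) every description erases, giving $\Pr(Z_t=0)\ge 1/2$. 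Your ``doubling cost against the pair's budget'' remark gestures at one-sidedness but never reaches this probabilistic consequence, and without it the Chebyshev variance bound has nothing to push against. Supplying the case split and the $\Pr(Z_t=0)\ge 1/2$ bound would essentially reconstruct the paper's Lemmas~\ref{lemma:chebyshev_bound} and~\ref{lemma:cases_n_k=2} and close the argument.
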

\begin{proof}
See Appendix~\ref{app:chebyshev_avg-case}.
\end{proof}

We conjecture that the lower bound in Theorem~\ref{thm:chebyshev_avg-case} is not tight and that our scheme is in fact optimal. Evidence of this is provided by Theorem~\ref{thm:(4,2)_avg-case}.

\section{The Worst-case Distortion Criterion}
\label{sec:worst-case}

We turn now to the worst-case distortion criterion. We begin by presenting a practical, zero-error coding scheme based on systematic MDS codes that works for finite blocklengths. Like the random binning coding scheme for average-case distortion, the MDS coding scheme consists of two parts - uncoded bits and an MDS-code component. The uncoded component is similar to the uncoded component of the average-case coding scheme. The difference lies in the encoded component; instead of randomly binning an erased version of the source and then sending bin indices to the decoder (as the average-case distortion encoder does), the worst-case distortion encoder encodes the erased version using an $(n,k)$ systematic MDS code. The decoder outputs the uncoded bits and the bits revealed by the systematic part of the MDS code as the source reconstruction if less than $k$ descriptions are received. If $k$ or more descriptions are received, the decoder uses the uncoded bits and the bits revealed by the systematic part of the MDS code to decode the encoded erased version by applying an MDS decoding algorithm. The following subsection discusses the achievable distortion region of the MDS coding scheme.

\subsection{An Achievability Result}
\label{sec:worst-case-achievability}
\begin{theorem}
\label{thm:achievability_worst-case}
Let $D_k$ be a rational number in the interval $[0,1]$. For any $n$ and $k \le n$, if $D_k \ge 1-\frac{k}{n}$, then $\tilde{\textbf{R}} \in \mathcal{RD}_{worst}$. If $D_k < 1-\frac{k}{n}$, then $\hat{\textbf{R}} \in \mathcal{RD}_{worst}$.
\end{theorem}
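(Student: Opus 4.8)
The plan is to parallel the two-case structure of the proof of Theorem~\ref{thm:achievability_avg-case}, but with the random-binning layer replaced by a deterministic MDS-coded layer. Since the MDS scheme makes no errors at a finite blocklength, this puts $\tilde{\textbf{R}}$ (resp.\ $\hat{\textbf{R}}$) in $\mathcal{RD}_{worst}$ itself, with no closure and no $\epsilon$-slack. Case~I ($D_k \ge 1-\frac{k}{n}$) is immediate: here $R=(1-D_k)/k\le 1/n$, so one reuses the uncoded scheme of Case~I of Theorem~\ref{thm:achievability_avg-case}, partitioning $X^l$ into $n$ disjoint parts of $l/n$ bits and sending $lR\le l/n$ of the bits of part $i$ on channel $i$; the decoder reproduces the received bits exactly and declares erasures elsewhere, so on \emph{every} source sequence and for \emph{every} $m$-subset the per-letter distortion is exactly $1-mR$, which equals the $m$-th coordinate of $\tilde{\textbf{R}}$ since $D_k=1-kR$. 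Hence $\tilde{\textbf{R}}\in\mathcal{RD}_{worst}$.

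For Case~II ($D_k<1-\frac{k}{n}$), I would first fix a finite field $\mathbb{F}_q$ with $q$ a sufficiently large power of two (see below) and choose the blocklength $l$ --- a large multiple of a suitable integer, which is possible because $D_k$ is rational --- so that $l/n$, $lD_k/(n-k)$, and $l(\frac{1-D_k}{k}-\frac1n)$ are integers that are multiples of $\log q$. Partition $X^l$ into disjoint parts $X_1,\dots,X_n$ of $l/n$ bits and split each as $X_i=(U_i,V_i)$ with $|U_i|=l/n-lD_k/(n-k)$ and $|V_i|=lD_k/(n-k)$; the concatenation $(U_1,\dots,U_n)$, with the $V_i$ erased, is exactly the ``erased version'' $x_e^l$ of the average-case proof, whose length in bits is $n|U_i|=l-\frac{nlD_k}{n-k}$. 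Channel $i$ transmits $X_i$ uncoded together with a parity block $c_i$ of $l(\frac{1-D_k}{k}-\frac1n)=\frac{n-k}{k}|U_i|$ bits, where $(c_1,\dots,c_n)$ is a fixed $\mathbb{F}_q$-linear function of $(U_1,\dots,U_n)$ \emph{alone}. The rate per description is then $\frac1n+(\frac{1-D_k}{k}-\frac1n)=R$, and each description still delivers $l/n$ distinct uncoded source bits, so for every $m<k$ and every $m$-subset the decoder can output those $ml/n$ bits and meet distortion $1-m/n$; these are the coordinates $1-\frac1n,\dots,1-\frac{k-1}{n}$ of $\hat{\textbf{R}}$.

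The crux is to choose the parity map so that the length-$n$ code whose $i$-th symbol is the pair $(U_i,c_i)$ is MDS, i.e.\ so that for every $k$-subset $S$ the data $\{U_i:i\in S\}\cup\{c_i:i\in S\}$ determines all of $(U_1,\dots,U_n)$. The rate bookkeeping shows there is exactly enough room: $k$ received descriptions contribute $k$ known blocks $U_i$ and $k\cdot\frac{n-k}{k}|U_i|=(n-k)|U_i|$ parity bits, while the $n-k$ unknown blocks carry $(n-k)|U_i|$ bits, so writing every block as a string of $\mathbb{F}_q$-symbols the recovery requirement for each $S$ becomes the invertibility of a fixed square matrix over $\mathbb{F}_q$ (of size the number of $\mathbb{F}_q$-symbols in $x_e^l$); a generic choice of the parity coefficients makes all $\binom{n}{k}$ of these determinants nonzero simultaneously once $q$ exceeds roughly $n\binom{n}{k}$ (Schwartz--Zippel), and an explicit systematic Reed--Solomon/Cauchy construction serves equally well. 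Given such a code, when $m\ge k$ descriptions are received the decoder MDS-decodes to recover $U_1,\dots,U_n$ and appends the $m$ uncoded $V_i$'s, thus knowing $n|U_i|+m|V_i|=l-\frac{n-m}{n-k}lD_k$ source bits, i.e.\ distortion $\frac{n-m}{n-k}D_k$ on every source sequence and every $m$-subset --- equal to $D_k$ at $m=k$ and to $0$ at $m=n$, matching the remaining coordinates of $\hat{\textbf{R}}$. Hence $\hat{\textbf{R}}\in\mathcal{RD}_{worst}$. I expect this last step --- forcing a single MDS-type code both to carry a distinct uncoded block on each of the $n$ channels (for the intermediate distortions) and to possess the any-$k$-recover property for $x_e^l$ --- to be the only real obstacle; everything else is the deterministic analogue of computations already done for the average-case scheme.
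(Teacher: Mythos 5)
Your Case~I matches the paper's Case~I verbatim; for Case~II you take a genuinely different construction than the one in the paper. The paper does \emph{not} send all of $X_i=(U_i,V_i)$ uncoded on channel $i$. It sends only $V_i$ (in the paper's notation these are the last $lD_k/(n-k)$ bits of part $i$, i.e., exactly the bits that get erased in $x_e^l$) uncoded, together with the $i$-th coordinate of each of $n$ \emph{interleaved} MDS codewords: each part $p_j\cong U_j$ of $x_e^l$ is a $k$-vector over $\mathrm{GF}(q)$, encoded into $y_j=p_j\mathbf{G}_j$, and channel $i$ carries $(y_{1,i},\dots,y_{n,i})$. The generators $\mathbf{G}_1,\dots,\mathbf{G}_n$ are cyclic column-shifts of a single systematic generator of a punctured $(q-1,k)\to(n,k)$ Reed--Solomon code, arranged so that for each channel $i$ exactly $k$ of the symbols $y_{j,i}$ fall in a systematic column. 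Channel $i$ therefore still reveals $lD_k/(n-k)+mk = l/n$ uncoded source bits, but these are scattered across $V_i$ and one coordinate of each of $p_{i-k+1},\dots,p_i$, rather than being concentrated in part $i$ as in your scheme. By contrast you keep the full part $X_i$ uncoded on channel $i$ and relegate all of the coded redundancy to a separate parity block $c_i$, which is cleaner and makes the analogy with the average-case proof transparent. The price is that the required ``any $k$ of the $n$ pairs $(U_i,c_i)$ determine $(U_1,\dots,U_n)$'' property is not a textbook systematic MDS code --- the per-channel data and parity portions have unequal sizes unless $k=n/2$ --- and you correctly fall back on a super-regular/Cauchy parity matrix or a genericity argument. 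Both routes achieve the same distortion vector; the paper's, using $n$ interleaved copies of one off-the-shelf punctured RS code, gets by with $q$ just above $n$ and needs no genericity argument, whereas yours needs a larger field ($q$ on the order of $n^2a/k$ for the Cauchy construction, where $a$ is the number of $\mathrm{GF}(q)$ symbols per $U_i$, or $\binom{n}{k}(n-k)a$ for Schwartz--Zippel; the $n\binom{n}{k}$ bound you quote is the wrong order of magnitude, though the observation that every square submatrix of a Cauchy matrix is nonsingular already closes the gap explicitly).
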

\begin{proof}
\textbf{Case I}: $D_k \ge 1-\frac{k}{n}, \ D_k$ rational
\newline
Since $D_k$ is rational, there exists a positive integer $l'$ such that $l'R$ is a positive integer. Choose a blocklength $l=\alpha nl'$, where $\alpha$ is any positive integer. Observe a length-$l$ source sequence $X^l$, and divide $X^l$ into $n$ disjoint parts such that each part contains $l/n=\alpha l'$ bits. (The division is the same regardless of the source realization.) Label the parts $X_i, \ i \in \mathcal{N}$. Choose $lR$ bits from each of the $n$ parts (since $D_k \ge 1-\frac{k}{n}$, $lR \le \frac{l}{n}$ and therefore $lR$ bits can be chosen from each part). Denote by $y_i$ the set of $lR$ bits chosen from $X_i$. Transmit $y_i$ uncoded over the $i^{th}$ channel.

The decoding is trivial. If $m$ descriptions, say $(y_1,\ldots,y_m)$, are received, output $\hat{X}_m^l$ as the reconstruction of $X^l$, where $\hat{X}_m^l$ is such that the $mlR$ bits corresponding to $(y_1,\ldots,y_m)$ are non-erased and the other ($l- mlR)$ bits are erasures. Since the reconstruction sequence has $l- mlR$ erasures regardless of the source sequence, the worst-case distortion $D_m$ is $(l- mlR)/l = 1 - mR$. When $k$ descriptions are received, the worst-case distortion is $1-kR = D_k$. Thus the rate-distortion vector $(R,1-R,1-2R,\ldots,1-(k-1)R,D_k,D_k-R,D_k-2R,\ldots,D_k-(n-k)R) \in \mathcal{RD}_{worst}$.
\newline
\newline
\textbf{Case II}: $D_k < 1-\frac{k}{n}, \ D_k$ rational
\newline
For this case, we present an achievability scheme based on MDS (\emph{maximum distance separable}) codes\footnote{An $(n,k)$ MDS code is a linear code that satisfies the Singleton bound, i.e., the Hamming distance between any two codewords is $n-k+1$. Reed-Solomon codes, for instance, are MDS codes.}. Just as the achievability scheme for the average-case, this scheme has two components; uncoded bits and an MDS-code component. Let $m$ be the smallest integer such that $2^m \ge n$ and $\frac{mnk(n-k)}{n(1-D_k)-k}$ is an integer (such an $m$ exists because $D_k$ is rational). Define $q = 2^m$, and construct a $q$-ary MDS code of length $q-1$ and dimension $k$. By repeatedly puncturing this $(q-1,k)$ MDS code, we obtain a punctured MDS code of size $(n,k)$ \cite[p.~190]{Wicker:coding}. The punctured coordinates are revealed to the decoder. Let $\textbf{G}_1$ be the generator matrix of the punctured $(n,k)$ MDS code, and assume without loss of generality that $\textbf{G}_1$ is systematic, i.e., $\textbf{G}_1$ is of the form $[\textbf{I}_k|\textbf{A}]$, where $\textbf{I}_k$ is the $k \times k$ identity matrix and $\textbf{A}$ is a $k \times n-k$ matrix over the finite field GF($q$). Construct matrices $\textbf{G}_2,\ldots,\textbf{G}_n$ by shifting the columns of $\textbf{G}_1$ to the right, i.e., $\textbf{G}_i$ is the matrix formed by shifting the columns of $\textbf{G}_1$ by $i-1$ places, with the last $i-1$ columns of $\textbf{G}_1$ wrapping around. In particular, if $\textbf{G}_1 = [\textbf{I}_k|A_1\ldots A_n]$, where $A_1,\ldots,A_n$ are the columns of \textbf{A}, then $\textbf{G}_i = [A_{n-i+2}\ldots A_n|\textbf{I}_k|A_1\ldots A_{n-i+1}]$.

\emph{Encoding:} Let $X^l$ be the observed source string, of length $l = \frac{mnk(n-k)}{n(1-D_k)-k}$ bits. Divide $X^l$ into $n$ disjoint parts, each of length $\frac{mk(n-k)}{n(1-D_k)-k}$ bits. (The division is done the same way regardless of the source realization.) Let $X_i, \ i \in \mathcal{N}$ denote the last $lD_k/(n-k)$ bits of the $i^{th}$ part. Construct an erased version $X_e^l$ by replacing the last $lD_k/(n-k)$ bits in each of the $n$ parts by erasures. Thus $X_e^l$ has $l(1-\frac{nD_k}{n-k})= mnk$ bits. Each of the $n$ parts of $X_e^l$ has $mk$ bits and can therefore be treated as a concatenation of $k$ binary strings of length $m$, such that each of these binary strings is the binary representation of an element in GF($q$). Thus each of the $n$ parts of $X_e^l$ can be mapped to a vector of length $k$ in GF($q$). Label these vectors $p_j, \ j \in \mathcal{N}$. Let $y_j = p_j\textbf{G}_j, \ j \in \mathcal{N}$. Thus the $y_j$ are length-$n$ vectors in GF($q$). Let $y_{ji} = p_jG_{ji}$ denote the $i^{th}$ element of $y_j$ (here $G_{ji}$ is the $i^{th}$ column of $\textbf{G}_j$). Transmit ($X_i,y_{ji}: j \in \mathcal{N}$) over the $i^{th}$ channel.

\emph{Decoding:} Suppose $c < k$ descriptions are received at the decoder. Let $\mathcal{M} \subset \mathcal{N}$ denote the set of indices of the received descriptions. Assume without loss of generality that $i \in \mathcal{M}$. Thus the decoder receives $X_i$ and $y_{ji} = p_jG_{ji}: j \in \mathcal{N}$. Thus $lD_k/(n-k)$ bits are revealed to the decoder via $X_i$. Now for a fixed $i$, exactly $k$ of the $\textbf{G}_j, \ j \in \mathcal{N}$, (in particular, $\textbf{G}_{i-k+1},\ldots,\textbf{G}_i$) will have their $i^{th}$ column in the systematic part. Thus one symbol from $k$ of the $p_j, \ j \in \mathcal{N}$, can be decoded. By mapping these decoded symbols to their binary representations, the decoder can obtain a partial reconstruction of $X$. Let $\hat{X}_i$ represent the reconstructed source bits due to the $i^{th}$ description. Output $(\hat{X}_i:i \in \mathcal{M})$ as the reconstruction of $X^l$. If $m > k$ descriptions are received, then any $k$ descriptions reveal $k$ symbols from each of the $y_j, \ j \in \mathcal{N}$. Also, since the punctured coordinates are known to the decoder, it can construct a longer codeword from every partially received codeword by adding erasures in place of the punctured coordinates. The longer codewords can be treated as codewords from the original $(q-1,k)$ MDS code. The original MDS code can subsequently be decoded by applying an erasure decoding algorithm \cite[Ch. 9]{Wicker:coding} and all the $p_j$ vectors can be recovered. Mapping the $p_j$ vectors to their binary representations reveals the erased version $X_e^l$ of the original source string $X^l$. Output $\{(X_1,\ldots,X_m)\} \cup \{X_e^l \backslash (X_1,\ldots,X_m)\}$ as the reconstruction of $X^l$.

\emph{Analysis:} We now argue that the above scheme achieves the rate-distortion vector $(R,1-\frac{1}{n},1-\frac{2}{n},\ldots,1-\frac{k-1}{n},D_k,(\frac{n-k-1}{n-k})D_k,(\frac{n-k-2}{n-k})D_k,\ldots,(\frac{1}{n-k})D_k,0)$. For any source string $X^l$, every description (say the $i^{th}$ description) consists of $(X_i,y_{ji}: j \in \mathcal{N})$. $X_i$ consists of $lD_k/(n-k)$ bits. Now since $y_{ji}$ is an element of GF($q$), it can be represented by $m$ bits. Thus $(y_{ji}: j \in \mathcal{N})$ is a length-$n$ vector in GF($q$), and can be represented by $mn$ bits. Every description therefore consists of $mn + lD_k/(n-k)$ bits. Since the source string consists of $l=mnk(n-k)/(n(1-D_k)-k)$ source symbols, every description has rate $$ \frac{mn + lD_k/(n-k)}{l} = \frac{1-D_k}{k}=R.$$ Moreover, every description received at the decoder reveals $lD_k/(n-k)$ bits via $X_i$, and exactly one symbol from $k$ of the $p_j, \ j \in \mathcal{N}$. Each of these $k$ symbols is an element of GF($q$) and can be represented by $m$ bits. Thus every description reveals $lD_k/(n-k)+ mk$ bits to the decoder. (We note that the bits revealed by any two descriptions are disjoint. The uncoded bits $X_a$ and $X_b$ are disjoint by definition for any two descriptions $a$ and $b$. Now suppose descriptions $a$ and $b$ revealed the same symbol from some $p_j$. Then $y_{ja} = p_j\textbf{G}_{ja} = p_j\textbf{G}_{jb} = y_{jb}$, which implies $a = b$.) Thus if $c < k$ descriptions are received, the decoder can reconstruct $c(lD_k/(n-k)+mk)$ bits of the original source sequence. Thus
\begin{align*}
D_c &= 1 - \frac{c(\frac{lD_k}{n-k}+mk)}{l} \\
&= 1-\frac{cD_k}{n-k}-\frac{cn(1-D_k)-ck}{n(n-k)} \\
&= 1-\frac{c}{n}.
\end{align*}

If $c \ge k$ descriptions are received, say descriptions $1,\ldots,m$, then $(X_1,\ldots,X_m)$ reveal $clD_k/(n-k)$ bits. Moreover, the erased version of the source sequence, $X_e^l$, can be reconstructed by applying the MDS erasure decoding algorithm. The bits revealed by $(X_1,\ldots,X_m)$ are disjoint from the bits revealed by $X_e^l$. The total number of bits revealed, therefore, is $clD_k/(n-k)+mnk$. Thus
\begin{align*}
D_c &= 1-\frac{c\frac{lD_k}{n-k}+mnk}{l} \\
&= 1-\frac{cD_k}{n-k}-\frac{n(1-D_k)-k}{n-k} \\
&= \left(\frac{n-c}{n-k}\right)D_k.
\end{align*}
Thus $(R,1-\frac{1}{n},1-\frac{2}{n},\ldots,1-\frac{k-1}{n},D_k,(\frac{n-k-1}{n-k})D_k,(\frac{n-k-2}{n-k})D_k,\ldots,(\frac{1}{n-k})D_k,0) \in \mathcal{RD}_{worst}$.
\end{proof}

Figure~\ref{fig:results} depicts how the achievable distortion varies with the number of descriptions received at the decoder when $D_k = 0$.

\begin{figure}[htp]
\centering
  \includegraphics[scale=0.25]{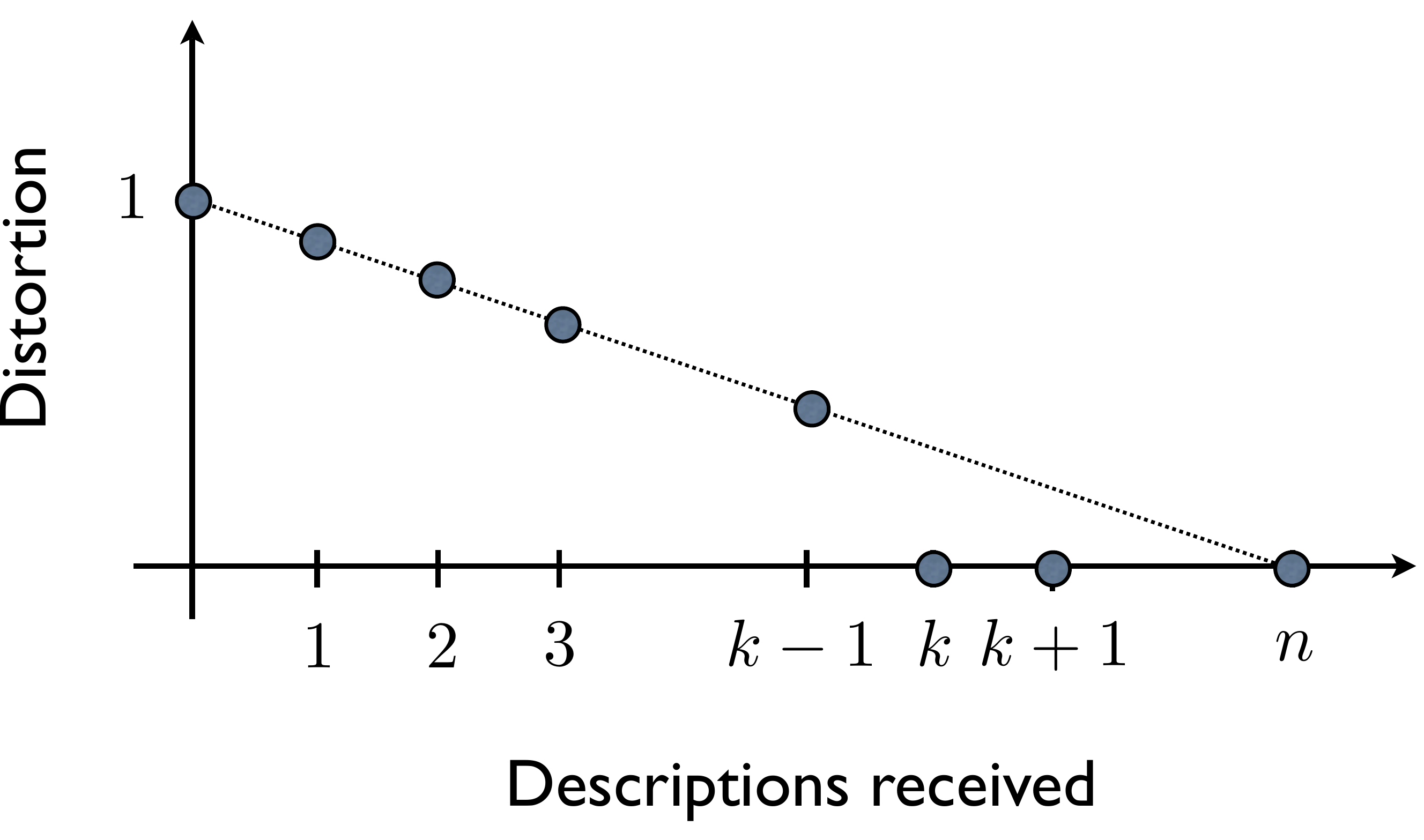}
\caption{The achievable distortion region for $D_k = 0$. The achievable distortion decreases linearly with the number of descriptions received up to $k-1$ descriptions, and drops abruptly to zero upon reception of $k$ or more descriptions.}\label{fig:results}
\end{figure}

\subsection{Optimality Results}
We now present optimality results for the MDS coding scheme described in the previous subsection. These optimality results are stronger than those for average-case distortion and yield a more complete characterization of the achievable distortion region. Since we are dealing with worst-case distortion constraints, the following results hold for any source distribution.

\begin{theorem}
\label{thm:conv_w-case_pt-2-pt}
For any $n$ and $k$, if $D_k \ge 1-\frac{k}{n}$ and rational\footnote{For this theorem and subsequent theorems in this subsection, we consider rational values for $D_k$ since any code over a finite blocklength can yield only rational distortions.}, then for any $(R,D_1,\ldots,D_k,\ldots,D_n) \in \mathcal{RD}_{worst}$, $D_m \ge 1-mR$ for all $m \in \mathcal{N}$.
\end{theorem}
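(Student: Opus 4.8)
\emph{Proof plan.} The plan is to prove the bound by a direct covering (volume) argument that exploits the defining feature of the worst-case criterion: a reconstruction string of finite distortion can never disagree with the source on a non-erased coordinate (disagreement costs $\infty$), so it is obtained from the source sequence only by turning some coordinates into erasures. Fix $m \in \mathcal{N}$ and a code $\{f_i^{(l)}\}$, $\{g_\mathcal{K}^{(l)}\}$ achieving $(R,D_1,\ldots,D_n)$. If the worst-case distortion of this code for $m$ received descriptions is infinite, then $D_m = \infty \ge 1-mR$ and there is nothing to prove; so assume it is finite, which means that for every $\mathcal{K}$ with $|\mathcal{K}|=m$ and every $x^l \in \mathcal{X}^l$, the reconstruction $\hat{X}_\mathcal{K}^l$ contains no erroneous symbol and has at most $lD_m$ erasures.

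First I would fix an arbitrary $\mathcal{K} \subseteq \mathcal{N}$ with $|\mathcal{K}|=m$ and consider the set $\mathcal{Y}_\mathcal{K}$ of all strings $g_\mathcal{K}^{(l)}(\mathbf{j})$ as $\mathbf{j}$ ranges over $\prod_{k \in \mathcal{K}}\{1,\ldots,M_k^{(l)}\}$. Since $R \ge \frac{1}{l}\log M_k^{(l)}$ for each $k$, we have $M_k^{(l)} \le 2^{lR}$ and hence $|\mathcal{Y}_\mathcal{K}| \le \prod_{k \in \mathcal{K}} M_k^{(l)} \le 2^{mlR}$. Now every source string $x^l$ agrees with its own reconstruction $\hat{x}_\mathcal{K}^l(x^l) \in \mathcal{Y}_\mathcal{K}$ on all non-erased coordinates, and a fixed string in $\hat{\mathcal{X}}^l$ with $e$ erasures is consistent with exactly $2^e$ binary strings, where here $e \le lD_m$ for every realized reconstruction. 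Since every source string is consistent with its own reconstruction, the sets of source strings consistent with the elements of $\mathcal{Y}_\mathcal{K}$ cover $\mathcal{X}^l$, so
\begin{align*}
2^l = |\mathcal{X}^l| \le |\mathcal{Y}_\mathcal{K}| \cdot 2^{lD_m} \le 2^{mlR} \cdot 2^{lD_m} = 2^{l(mR+D_m)}.
\end{align*}
Taking base-$2$ logarithms and dividing by $l$ gives $D_m \ge 1-mR$; since $m \in \mathcal{N}$ was arbitrary, this proves the claim.

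I do not anticipate a genuine obstacle here: the heart of the argument is the one-line observation that finite worst-case distortion forces the reconstruction to be a coarsening of the source, after which the statement is a sphere-covering count. The only points requiring care are the reduction to the finite-distortion case and keeping the bit-counting consistent (at most $2^{lR}$ values per description, at most $lD_m$ erasures per reconstruction). It is worth remarking that the hypothesis $D_k \ge 1-\frac{k}{n}$ (equivalently $R \le 1/n$) is not used in the counting step itself; it serves to place us in the regime where the bound $1-mR$ is nonnegative for all $m \le n$ and matches the achievability of $\tilde{\textbf{R}}$ in Theorem~\ref{thm:achievability_worst-case}, since for $m > k$ one has $1-mR = D_k - (m-k)R$. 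Finally, note that this argument has no analogue under the average-case criterion, where the decoder may err on a vanishingly small fraction of source strings; this is precisely why the average-case converses of the previous section are weaker and require more delicate reasoning.
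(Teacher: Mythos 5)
Your proof is correct, but it takes a genuinely different route from the paper's. The paper disposes of this theorem in one line by observing that worst-case achievability implies average-case achievability for the same rate-distortion vector, and then invoking the first part of Theorem~\ref{thm:conv_avg-case}, whose proof in turn is just the Shannon point-to-point rate-distortion bound for the erasure source: $m$ descriptions of rate $R$ each carry a total of $mR$ bits per symbol, so $D_m \ge 1-mR$. Your argument instead stays entirely inside the worst-case setting and replaces the information-theoretic step with an explicit sphere-covering count: finite worst-case distortion forces every realized reconstruction to be an erasure-coarsening of the source with at most $lD_m$ erasures; there are at most $\prod_{k\in\mathcal{K}} M_k^{(l)} \le 2^{mlR}$ such reconstructions; each covers at most $2^{lD_m}$ source strings; and the covering must exhaust all $2^l$ source strings. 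This buys you a somewhat more elementary and self-contained proof (no appeal to the rate-distortion function and no detour through the average-case criterion), and it makes transparent that the bound $D_m \ge 1-mR$ holds for every achievable worst-case vector regardless of the hypotheses $D_k \ge 1-k/n$ and $D_k$ rational, which -- as you correctly remark -- merely place the bound in the regime where it is nonnegative for all $m \le n$ and matches the achievability of $\tilde{\textbf{R}}$. What the paper's route buys is economy: it reuses an already-proved lemma and avoids re-deriving anything. Both are fine; your version is arguably the more instructive one for the worst-case section, since it isolates exactly the zero-error structure that the subsequent worst-case theorems exploit.
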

\begin{proof}
Let $D_k \ge 1-\frac{k}{n}$. If a code achieves a certain distortion under worst-case distortion, then it will achieve that distortion under average-case distortion as well. The result therefore follows from the first part of Theorem~\ref{thm:conv_avg-case}.
\end{proof}

The following lemma is integral to the proofs of our optimality results for worst-case distortion.

\begin{definition}
Let $X^l$ be a random vector taking values in $\mathcal{X}^l$. An \textit{erased version} of $X^l$ is a random vector $\tilde{X}^l$, taking values in $\mathcal{\hat{X}}^l$, such that $\nexists \ t \in \{1,\ldots,l\}$ such that $\tilde{X}_t = +$ and $X_t = -$ or  $\tilde{X}_t = -$ and $X_t = +$.
\end{definition}

\begin{lemma}
\label{lemma:cases}
Let $X_1^l(X),X_2^l(X),\ldots,X_n^l(X)$ be erased versions of the source string $X^l \in \mathcal{X}^l$. Suppose $X^l$ is \emph{i.i.d.} uniform over $\mathcal{X}^l$. If for all $t \in \{1,\ldots,l\}$, $I(X_{it}(X);X_{jt}(X))= 0$ $\forall \ i,j \in \mathcal{N}$, $i \neq j$, then
\begin{align*}
\max_{x^l \in \mathcal{X}^l} \sum_{i=1}^n \left[\frac{1}{l}\sum_{t=1}^l d(x_t,X_{it}(x))\right] \ge n-1.
\end{align*}
\end{lemma}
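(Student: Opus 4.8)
The plan is to reduce the statement to a coordinate-by-coordinate construction of a worst-case source string. First I would unpack the distortion: since each $X_i^l(x)$ is an erased version of $x^l$, we have $X_{it}(x)\in\{x_t,0\}$ for every $t$, so $d(x_t,X_{it}(x))$ equals $1$ when $X_{it}(x)=0$ and $0$ otherwise (in particular $d$ never attains $\infty$). Writing $N_t(x):=|\{i\in\mathcal{N}:X_{it}(x)=0\}|$, the quantity to be bounded is $\max_{x^l}\frac{1}{l}\sum_{t=1}^{l}N_t(x)$, so it suffices to exhibit a single $x^l\in\mathcal{X}^l$ with $N_t(x)\ge n-1$ for \emph{every} $t$; averaging then gives the claim. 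As I explain below, once the value at $t$ is fixed the other coordinates of $x^l$ turn out to be irrelevant to meeting the bound at $t$, so the choice of $x_t$ can be made independently for each $t$.

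Next I would analyze a fixed coordinate $t$ using the pairwise-independence hypothesis at that coordinate. Set $a_i:=\Pr(X_{it}(X^l)=0\mid X_t=+)$ and $b_i:=\Pr(X_{it}(X^l)=0\mid X_t=-)$; since $X_i^l$ is an erased version, conditioning on $X_t=+$ gives $X_{it}\in\{+,0\}$ with $\Pr(X_{it}=+\mid X_t=+)=1-a_i$, and similarly $\Pr(X_{it}=-\mid X_t=-)=1-b_i$. The key observation is that $\{X_{it}=+\}$ forces $X_t=+$ while $\{X_{jt}=-\}$ forces $X_t=-$, so these events are disjoint and $\Pr(X_{it}=+,X_{jt}=-)=0$; since $I(X_{it}(X);X_{jt}(X))=0$ makes $X_{it}$ and $X_{jt}$ independent, $\Pr(X_{it}=+)\Pr(X_{jt}=-)=\tfrac14(1-a_i)(1-b_j)=0$, i.e.\ $a_i=1$ or $b_j=1$, for every ordered pair $i\ne j$. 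Putting $A_t=\{i:a_i<1\}$ and $B_t=\{i:b_i<1\}$, this says there is no pair $i\in A_t$, $j\in B_t$ with $i\ne j$, which forces one of: (i) $A_t=\emptyset$; (ii) $B_t=\emptyset$; (iii) $A_t=B_t=\{i_0\}$ for a single index $i_0$.

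I would then finish by handling the three cases. If $A_t=\emptyset$, then $a_i=1$ for all $i$; since $X_{it}$ is a deterministic function of $X^l$ and the source is uniform on the finite set $\mathcal{X}^l$, $a_i=1$ means $X_{it}(x)=0$ for \emph{every} $x$ with $x_t=+$, so taking $x_t=+$ yields $N_t(x)=n$ regardless of the other coordinates. The case $B_t=\emptyset$ is symmetric with $x_t=-$. In case (iii), every $i\ne i_0$ has $a_i=b_i=1$, hence $X_{it}(x)=0$ for all $x$, so $N_t(x)\ge n-1$ for any choice of $x_t$. Making these coordinate-wise choices yields an $x^l$ with $N_t(x)\ge n-1$ simultaneously for all $t$, which is exactly what we need.

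I expect the one genuinely delicate point to be the structural step in the second paragraph. It is tempting to argue via conditional independence of $X_{it}$ and $X_{jt}$ given $X_t$, but that need not hold, since an erased version may depend on the whole source vector; the right move is instead the disjointness of the one-sided events $\{X_{it}=+\}$ and $\{X_{jt}=-\}$, which converts zero mutual information directly into the vanishing of a product of marginals. A minor but worth-noting subtlety is that the appearance of $X_{it}(x)$ inside $\max_{x^l}$ commits us to reading the erased versions as deterministic functions of the source, and it is exactly this that lets ``$a_i=1$'' be promoted from an almost-sure statement to one about every realization with $x_t=+$.
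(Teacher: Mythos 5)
Your proof is correct and follows essentially the same approach as the paper's: a coordinate-wise case analysis driven by the observation that the one-sided events $\{X_{it}=+\}$ and $\{X_{jt}=-\}$ are disjoint, so pairwise independence forces a product of marginals to vanish, which in turn (via uniformity on the finite alphabet) promotes the conclusion to a sure statement about each realization. The only difference is cosmetic: you organize the cases by the sets $A_t,B_t$ of indices with nonunit conditional erasure probability, whereas the paper uses four explicit cases on the marginals $\Pr(X_{it}=\pm)$; the underlying argument and the construction of the worst-case string are the same.
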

\begin{proof}
See Appendix~\ref{app:lemma_cases}.
\end{proof}

The following theorem proves that the MDS coding scheme is optimal for all $n$ and $k$ when a single-message is received at the decoder.

\begin{theorem}
\label{thm:conv_w-case_D_1}
For any $n$ and $k$, if $D_k < 1-\frac{k}{n}$ and rational, then for any $(R,D_1,\ldots,D_k,\ldots,D_n) \in \mathcal{RD}_{worst}$, $D_1 \ge 1-\frac{1}{n}$.
\end{theorem}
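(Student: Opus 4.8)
The plan is to reduce the claim to Lemma~\ref{lemma:cases}. Fix a code achieving $(R,D_1,\ldots,D_n)\in\mathcal{RD}_{worst}$ with $D_k<1-\frac{k}{n}$ and $D_k$ rational, over some blocklength $l$. For each $i\in\mathcal{N}$, let $X_i^l(x^l):=g_{\{i\}}^{(l)}(f_i^{(l)}(x^l))$ be the single-description reconstruction. Since $d$ takes the value $\infty$ on any genuine error and we are under the worst-case criterion, $D_1<\infty$ forces each $X_i^l(x^l)$ to be an \emph{erased version} of $x^l$ in the sense of the definition preceding the lemma: every non-erased coordinate must agree with $x^l$. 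So each single-description decoder produces a bona fide erased version, and the per-letter distortion $\frac1l\sum_t d(x_t,X_{it}(x^l))$ is exactly the erasure fraction of $X_i^l(x^l)$.

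The key step is to verify the pairwise independence hypothesis of Lemma~\ref{lemma:cases}, namely $I(X_{it}(X);X_{jt}(X))=0$ for all $t$ and all $i\neq j$, where $X^l$ is i.i.d.\ uniform. This is exactly where the \emph{no excess rate for every $k$ out of $n$} hypothesis enters. Because any $k$ descriptions have total rate $R(D_k)=1-D_k$ and must together allow reconstruction to within $D_k$, the $k$ messages $f_{s_1}^{(l)}(X^l),\ldots,f_{s_k}^{(l)}(X^l)$ on any size-$k$ subset $\mathcal{S}$ must be mutually independent (a standard consequence: a rate-$R(D_k)$ description of a memoryless source achieving distortion $D_k$ is, in the relevant sense, a ``tight'' quantization, and correlation among the $k$ messages would strictly reduce the joint entropy below $k l R$, contradicting the rate bound $\frac1l\log M_i^{(l)}\le R$ combined with $kR=1-D_k$). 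I expect the preliminary results in Appendix~\ref{app:prelim} to supply precisely this independence statement; I would cite it. In particular, taking $\mathcal{S}\supseteq\{i,j\}$, the messages $f_i^{(l)}(X^l)$ and $f_j^{(l)}(X^l)$ are independent, hence so are the functions $X_i^l(X)$ and $X_j^l(X)$ of them, and therefore $I(X_{it}(X);X_{jt}(X))=0$ coordinatewise. (One must be slightly careful: independence of the full vectors gives independence of each pair of coordinates, which is all the lemma needs.)

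With the hypothesis of Lemma~\ref{lemma:cases} in hand, the lemma gives
\begin{align*}
\max_{x^l\in\mathcal{X}^l}\sum_{i=1}^n\Bigl[\frac1l\sum_{t=1}^l d(x_t,X_{it}(x^l))\Bigr]\ge n-1.
\end{align*}
By the pigeonhole principle there is some $i^\star$ and some $x^l$ with $\frac1l\sum_t d(x_t,X_{i^\star t}(x^l))\ge\frac{n-1}{n}=1-\frac1n$; but by the worst-case achievability constraint $D_1\ge\max_{\mathcal{K}:|\mathcal{K}|=1}\max_{x^l}\frac1l\sum_t d(x_t,\hat X_{\mathcal{K},t})\ge 1-\frac1n$, which is the claim. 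The main obstacle is the independence step: pinning down that no excess rate on \emph{every} $k$-subset forces pairwise (indeed $k$-wise) independence of the messages, and hence of the induced single-description erased versions, rather than merely of their rates. This is the content I would lean on Appendix~\ref{app:prelim} for; everything else is bookkeeping and an appeal to Lemma~\ref{lemma:cases}.
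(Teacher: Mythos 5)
Your plan is correct and follows essentially the same route as the paper's proof in Appendix~\ref{app:conv_w-case_D_1}: observe that finite worst-case distortion forces each single-description output to be an erased version, use the no-excess-rate condition to force the joint entropy chain $klR \ge \sum_j H(f_{s_j}) \ge H(f_{s_1},\ldots,f_{s_k}) \ge I(X^l;\hat{X}_{\mathcal{S}}^l) \ge l(1-D_k) = klR$ to be tight (equivalently $I_k(f_{s_1};\ldots;f_{s_k})=0$ via the multi-letter mutual information of Appendix~\ref{app:prelim}), propagate this to pairwise per-letter independence of the reconstructions via the data-processing property, and finally invoke Lemma~\ref{lemma:cases}. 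The only small mislocation is that the independence derivation itself lives in the theorem's proof rather than in Appendix~\ref{app:prelim}, which merely supplies the multi-letter mutual information machinery; substantively your outline matches the paper.
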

\begin{proof}
See Appendix~\ref{app:conv_w-case_D_1}.
\end{proof}

The following theorem shows that the MDS coding scheme is Pareto optimal in the distortions $D_1,\ldots,D_{k-1}$.

\begin{theorem}
\label{thm:pareto_optimality}
For any $n$ and $k$, $(R,1-\frac{1}{n},1-\frac{2}{n},\ldots,1-\frac{k-1}{n},D_k,(\frac{n-k-1}{n-k})D_k,(\frac{n-k-2}{n-k})D_k,\ldots,(\frac{1}{n-k})D_k,0)$ is \emph{Pareto optimal} in $D_1,\ldots,D_{k-1}$, i.e., there does not exist $(R',D'_1,\ldots,D'_n) \in \mathcal{RD}_{worst}$ such that either $R' < R$, or $R' \le R$, $D'_i \le 1-\frac{i}{n}$ for all $1 \le i \le k-1$ and $D'_j < 1-\frac{j}{n}$ for at least one $j$, $1 \le j \le k-1$.
\end{theorem}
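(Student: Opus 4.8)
The plan is to reduce Pareto optimality in $D_1,\ldots,D_{k-1}$ to an application of Lemma~\ref{lemma:cases}, exactly as I expect Theorem~\ref{thm:conv_w-case_D_1} is proved. Suppose for contradiction that some $(R',D'_1,\ldots,D'_n) \in \mathcal{RD}_{worst}$ beats the claimed vector. First I would dispose of the rate component: since the no-excess-rate constraint ties $R$ to $D_k$ via $R = (1-D_k)/k$, and since any code achieving worst-case distortion $D'_k$ on a subset of $k$ channels must have each such channel carry at least $(1-D'_k)/k$ bits per source symbol (a subset of $k$ channels can reveal at most $k R' l$ bits, so $D'_k \ge 1 - kR'$, i.e.\ $kR' \ge 1-D'_k$), the case $R' < R$ is ruled out by the fact that the output vector is already on the no-excess-rate boundary — more precisely, $R' < R$ with the distortion constraints $D'_i \le 1-i/n$ would force a total rate on any $k$ channels below $R(D'_k)$, contradicting the rate–distortion bound. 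So it suffices to handle the case $R' \le R$, $D'_i \le 1 - i/n$ for $1 \le i \le k-1$, with strict inequality for some $j$.

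The heart of the argument is the single-channel reconstruction. Fix a code of blocklength $l$ achieving the primed vector. For each $i \in \mathcal{N}$, let $X_i^l(X^l) = g_{\{i\}}^{(l)}(f_i^{(l)}(X^l))$ be the decoder output from channel $i$ alone; because the distortion measure assigns $\infty$ to a true error, finiteness of $D'_1$ forces each $X_i^l$ to be an \emph{erased version} of $X^l$ in the sense of the Definition preceding Lemma~\ref{lemma:cases}. Next I would argue that no excess rate for every size-$k$ subset forces a pairwise independence condition on the single-channel reconstructions: for any $i \ne j$, the messages on channels $i$ and $j$ (being part of a size-$k$ subset at the no-excess-rate boundary) must be independent, hence so are $X_i^l(X^l)$ and $X_j^l(X^l)$, and in particular $I(X_{it}(X);X_{jt}(X)) = 0$ for each coordinate $t$. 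This is where I expect to lean on the preliminary results promised for Appendix~\ref{app:prelim} together with the worst-case analogue of the no-excess-rate argument; making the per-coordinate independence precise (as opposed to blockwise) is the step I anticipate being the main obstacle, since one must rule out the possibility that a clever code trades off dependence across coordinates while staying at the rate boundary. The systematic structure of the MDS scheme — where channel $i$ reveals a fixed disjoint block of $lD_k/(n-k)$ uncoded bits plus $k$ fresh GF($q$) symbols, all disjoint across channels — shows the hypotheses of Lemma~\ref{lemma:cases} are exactly met with equality by the scheme itself, which is the sanity check that the bound is the right one.

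With the pairwise-independence hypothesis in hand, Lemma~\ref{lemma:cases} gives
\begin{align*}
\max_{x^l \in \mathcal{X}^l} \sum_{i=1}^n \left[\frac{1}{l}\sum_{t=1}^l d(x_t, X_{it}(x))\right] \ge n-1,
\end{align*}
so there is a single source string $x^l$ for which $\sum_{i=1}^n \delta_i \ge n-1$, where $\delta_i := \frac1l \sum_t d(x_t, X_{it}(x))$ is the per-letter erasure fraction of channel $i$'s reconstruction on that string. By symmetry of the problem (the distortion guarantee $D'_m$ depends only on $|\mathcal{K}|$, and relabeling channels is free), I may average over all $n$ cyclic relabelings — or simply observe that the bound $D'_i \le 1 - i/n$ applies to \emph{every} singleton, so each $\delta_i \le D'_1 \le 1 - 1/n$. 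Summing, $n-1 \le \sum_i \delta_i \le n D'_1$, giving $D'_1 \ge 1 - 1/n$; combined with $D'_i \le 1-i/n \le 1 - 1/n$ for $i=1$ this forces $D'_1 = 1 - 1/n$. To upgrade this to all of $D'_1,\ldots,D'_{k-1}$ and extract the strict-inequality contradiction, I would iterate the same reasoning on sub-collections, or more efficiently invoke the fact — to be established alongside — that at the no-excess-rate boundary the reconstruction from any $m < k$ channels is the \emph{union} of the $m$ individual single-channel reconstructions up to erasures, so that $D'_m \ge 1 - m/n$ for every $m \le k-1$ follows from the same disjointness-forced-by-independence argument applied to $m$ channels at a time. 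This contradicts $D'_j < 1 - j/n$, completing the proof.
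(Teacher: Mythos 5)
Your setup is right: dispose of $R' < R$ by rate--distortion, then combine Theorem~\ref{thm:conv_w-case_D_1} (which already gives $D'_1 \ge 1 - 1/n$) with the constraint $D'_1 \le 1 - 1/n$ to pin $D'_1 = 1 - 1/n$, and then try to show $D'_m \ge 1 - m/n$ for $m < k$ to contradict the strict inequality. But the step where you actually extend to $m > 1$ has a genuine gap.

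You propose two routes for the extension, and neither works as stated. The first, ``the reconstruction from any $m < k$ channels is the union of the $m$ individual single-channel reconstructions up to erasures,'' is simply not available: the decoder $g_\mathcal{K}$ sees all $m$ messages jointly and can output bits that no single-channel decoder would output, so you cannot reduce $X_\mathcal{M}^l$ to a union of the $X_{s_i}^l$. The second, ``iterate the same reasoning on sub-collections,'' is too vague to check. The paper's actual move sidesteps both issues: fix $\mathcal{M}$ with $|\mathcal{M}| = m$, treat the single $m$-message reconstruction $X_\mathcal{M}^l$ as \emph{one} erased version, and group it with the remaining $n - m$ singleton reconstructions $X_{s_{m+1}}^l,\ldots,X_{s_n}^l$. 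Property~2 of the multi-letter mutual information (functions of disjoint subsets of mutually independent messages) plus $I_k(f_{s_1};\ldots;f_{s_k}) = 0$ yields pairwise independence of these $n - m + 1$ erased versions per coordinate, so Lemma~\ref{lemma:cases} applied to that collection gives a lower bound of $n - m$, not $n - 1$, on the sum of worst-case per-letter distortions. You then subtract the $n - m$ singleton contributions, each bounded above by $D'_1 = 1 - 1/n$, and get $D'_m \ge (n-m) - (n-m)(1 - 1/n) = 1 - m/n$. This grouping trick --- apply Lemma~\ref{lemma:cases} to a mixed collection of one ``super-channel'' and $n - m$ singletons --- is the idea your proposal is missing.

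One smaller point: the ``per-coordinate independence'' you flag as the main anticipated obstacle is actually immediate. $I_k(X_{s_1}^l;\ldots;X_{s_k}^l) = 0$ for finite-alphabet variables means the joint law factorizes, i.e., the vectors are mutually independent, and independence is preserved by coordinate projection, so $I_k(X_{s_1,t};\ldots;X_{s_k,t}) = 0$ for each $t$ with no extra work. There is no way for a code to ``trade off dependence across coordinates while staying at the rate boundary'' --- the rate boundary forces exact independence of the message vectors, hence of every coordinate slice.
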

\begin{proof}
See Appendix~\ref{app:pareto_optimality}.
\end{proof}

The following theorem shows that for certain values of $m$, $n$ and $k$, the MDS coding scheme is optimal when $m$ messages are received.

\begin{theorem}
\label{thm:other_wc_distortions}
For any $n$ and $k$, if $m \le \frac{k}{2}$ and $m|n$ ($m$ divides $n$), then for any $(R,D_1,\ldots,D_k,\ldots,D_n) \in \mathcal{RD}_{worst}$, $D_m \ge 1-\frac{m}{n}$.
\end{theorem}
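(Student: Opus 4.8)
The plan is to generalize the single-message argument of Theorem~\ref{thm:conv_w-case_D_1} from one received description to an arbitrary set of $m$ received descriptions, exploiting the hypotheses $m \le k/2$ and $m \mid n$. The starting point is the observation that, under no excess rate for every $k$-subset and worst-case distortion, whenever a receiver obtains any $m < k$ descriptions with $m \le k/2$, the reconstruction it produces must be an \emph{erased version} of the true source string (no errors, only erasures), because otherwise some $k$-subset containing those $m$ descriptions would carry strictly more than $R(D_k)$ bits of information about the source — this is essentially the argument already used to lower-bound $D_1$, and it extends verbatim to $m$ descriptions provided $2m \le k$ so that two disjoint copies of the $m$-subset still fit inside a $k$-subset. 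More precisely, I would first argue that for a fixed size-$m$ set $\mathcal{M}$ of descriptions, the reconstruction $\hat X^l_{\mathcal{M}}$ is a deterministic function of $(f_j^{(l)}(X^l): j \in \mathcal{M})$ that is an erased version of $X^l$ for every source string, and that the no-excess-rate constraint on every $k$-subset forces the information content of any $m$ descriptions to be "mutually independent" in the per-letter sense needed to invoke Lemma~\ref{lemma:cases}.

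Next I would use the condition $m \mid n$ to partition $\mathcal{N}$ into $n/m$ disjoint blocks $\mathcal{M}_1,\ldots,\mathcal{M}_{n/m}$, each of size $m$. Set $n' = n/m$ and let $X^l_j(X) := \hat X^l_{\mathcal{M}_j}$ be the (erased-version) reconstruction obtained from block $j$. By the independence/no-excess-rate argument above, for each coordinate $t$ the random variables $X_{1t}(X),\ldots,X_{n't}(X)$ are pairwise mutually independent (here the source is forced to be i.i.d.\ uniform for the converse, as in the other worst-case proofs, or we simply impose a worst-case distribution), so Lemma~\ref{lemma:cases} applies with $n'$ in place of $n$ and yields
\begin{align*}
\max_{x^l \in \mathcal{X}^l} \sum_{j=1}^{n'} \left[\frac{1}{l}\sum_{t=1}^l d\bigl(x_t, X_{jt}(x)\bigr)\right] \ge n' - 1.
\end{align*}
Since each block $\mathcal{M}_j$ has exactly $m$ descriptions, the worst-case distortion incurred by block $j$ is at least the per-block average on the left-hand side evaluated at the maximizing $x^l$; averaging over the $n'$ blocks and using the definition of $D_m$ as the maximum worst-case distortion over all size-$m$ sets gives
\begin{align*}
D_m \ge \frac{1}{n'}\sum_{j=1}^{n'} \left[\frac{1}{l}\sum_{t=1}^l d\bigl(x^{l*}_t, X_{jt}(x^{l*})\bigr)\right] \ge \frac{n'-1}{n'} = 1 - \frac{m}{n},
\end{align*}
which is exactly the claimed bound and matches the distortion $1 - m/n$ achieved by the MDS scheme in Theorem~\ref{thm:achievability_worst-case}.

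The main obstacle I anticipate is rigorously establishing the pairwise-independence hypothesis of Lemma~\ref{lemma:cases} for the block reconstructions $X_j(X)$ from the no-excess-rate assumption — i.e., showing that $I(X_{it}(X); X_{jt}(X)) = 0$ for $i \ne j$ at every coordinate $t$. This requires carefully combining two facts: (i) no excess rate for a $k$-subset means the descriptions in that subset are jointly "rate-lossless," forcing the revealed source bits from disjoint sub-collections to be non-overlapping and hence independent (for a uniform i.i.d.\ source); and (ii) the condition $2m \le k$ guarantees that any two disjoint blocks $\mathcal{M}_i, \mathcal{M}_j$ together form a subset of size $2m \le k$, so they sit inside some $k$-subset to which the no-excess-rate constraint applies. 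I expect this step to parallel the preliminary results in Appendix~\ref{app:prelim} and the proof of Theorem~\ref{thm:conv_w-case_D_1}; the new ingredient is simply the bookkeeping that keeps the $n/m$ blocks pairwise disjoint and each pair jointly inside a $k$-subset. Everything downstream — invoking Lemma~\ref{lemma:cases} and averaging — is routine.
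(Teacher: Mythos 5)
Your proposal is correct and follows essentially the same route as the paper: partition $\mathcal{N}$ into $n/m$ disjoint size-$m$ blocks, use $2m \le k$ to place any pair of blocks inside a $k$-subset so that the no-excess-rate constraint (via $I_k(f_{s_1};\ldots;f_{s_k})=0$ and Properties 2--3 of the multi-letter mutual information) yields per-coordinate pairwise independence of the block reconstructions, then apply Lemma~\ref{lemma:cases} with $n'=n/m$ blocks. The only cosmetic difference is that the paper bounds $D_m$ by the maximum over blocks rather than the average, and spells out the mutual-information chain $I(X^l_{\mathcal{M}_i};X^l_{\mathcal{M}_j}) \le I_{k-2m+2}(\cdots) \le I_k(f_{s_1};\ldots;f_{s_k})=0$ that you describe informally.
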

\begin{proof}
See Appendix~\ref{app:other_wc_distortions}.
\end{proof}

\section{A General Multiple Descriptions Architecture}
\label{sec:gauss_md}
The schemes described in this paper provide a substrate that can be used to construct no-excess-rate multiple descriptions codes for a general source using only a point-to-point rate-distortion code for that source. We illustrate this idea for a Gaussian source, where the resulting scheme is optimal in a certain sense. The extension to arbitrary sources should be clear from the proof. Suppose that $(\textbf{X}_t)_{t=1}^\infty$ is a memoryless Gaussian process, where $\textbf{X}_t$ is a vector of length $N$ and has a marginal distribution $\mathcal{N}(0,\textbf{K}_x)$. The distortion for a source-reconstruction pair $(\textbf{X}^l,\hat{\textbf{X}}^l)$ is measured as $\textbf{E}\left[\frac{1}{l}\sum_{t=1}^l (\textbf{X}_t-\hat{\textbf{X}}_t)(\textbf{X}_t-\hat{\textbf{X}}_t)^T\right].$ We compare distortions in the positive definite sense, i.e., $\textbf{D}_A \succcurlyeq \textbf{D}_B$ iff $\textbf{D}_A - \textbf{D}_B \succcurlyeq \textbf{0}$.

\begin{definition}
The rate-distortion vector $(R,\textbf{D}_1,\ldots,\textbf{D}_n)$ is \emph{achievable} if for some $l$ there exist encoders $f_i^{(l)}: \mathbb{R}^{N\times l} \rightarrow \{1,\ldots,M_i^{(l)}\}$, $i\in \mathcal{N}$ and decoders $g_\mathcal{K}^{(l)}: \prod_{k \in \mathcal{K}} \{1,\ldots,M_k^{(l)}\}  \rightarrow \mathbb{R}^{N\times l}$, $\mathcal{K}\subseteq \mathcal{N}$, $\mathcal{K} \neq \emptyset$, such that
\begin{align*}
R &\ge \frac{1}{l}\log M_i^{(l)} \ \textrm{ for all $i$, and} \\
\textbf{D}_k &\succcurlyeq \textbf{E}\left[\frac{1}{l}\sum_{t=1}^l (\textbf{X}_t-\hat{\textbf{X}}_{\mathcal{K},t})(\textbf{X}_t-\hat{\textbf{X}}_{\mathcal{K},t})^T\right] \ \text{for all $\mathcal{K} \subseteq \mathcal{N}$, $|\mathcal{K}| = k$},
\end{align*}
where $\hat{\textbf{X}}_{\mathcal{K}}^l = \textbf{E}[\textbf{X}^l|f_i^{(l)}(\textbf{X}^l), i \in \mathcal{K}].$
\end{definition}
We use $\mathcal{RD}_{gauss}$ to denote the set of achievable rate-distortion vectors and $\overline{\mathcal{RD}}_{gauss}$ to denote its closure. We consider symmetric descriptions, i.e., each description has the same rate $R_g$ and the distortion constraint depends only on the number of descriptions received. We consider the case where there is \emph{no excess rate} for every $k$ out of $n$ descriptions, i.e., $kR_g = R(\textbf{D}_k)$, where $R(\cdot)$ is the Shannon rate-distortion function and
\begin{align*}
R(\textbf{D}_k) &= \min_{\tilde{\textbf{D}}} \frac{1}{2}\log \frac{|\textbf{K}_x|}{|\tilde{\textbf{D}}|} \\
&\phantom{=} \text{s.t. } \tilde{\textbf{D}} \preccurlyeq \textbf{D}_k \text{ and } \\
&\phantom{=\text{s.t. }} \tilde{\textbf{D}} \preccurlyeq \textbf{K}_x.
\end{align*}
Thus $R_g = \frac{1}{k}R(\textbf{D}_k)$ bits/symbol.

\begin{theorem}
$\left(R_g,\frac{\textbf{D}_k + (n-1)\textbf{K}_x}{n},\frac{2\textbf{D}_k + (n-2)\textbf{K}_x}{n},\ldots,\frac{(k-1)\textbf{D}_k + (n-k+1)\textbf{K}_x}{n},\textbf{D}_k,\ldots,\textbf{D}_k\right) \in \overline{\mathcal{RD}}_{gauss}$.
\end{theorem}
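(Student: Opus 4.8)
The plan is to build the Gaussian MD code as a concatenation of a point-to-point vector Gaussian rate-distortion code with the erasure-dispersal idea behind Case~II of Theorems~\ref{thm:achievability_avg-case} and~\ref{thm:achievability_worst-case}. A useful first reduction: the decoder is pinned to the conditional-expectation (MMSE) rule, but MMSE minimizes error covariance in the positive-definite order (orthogonality principle), so it suffices to produce encoders $f_1^{(l)},\dots,f_n^{(l)}$ for which \emph{some} estimator based on $\{f_i^{(l)}(\textbf{X}^l):i\in\mathcal{K}\}$ meets the targets; moreover, since enlarging any $\textbf{D}_m$ only relaxes the achievability constraint, it is enough to meet each target in the ``$\preccurlyeq$'' sense. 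Let $\tilde{\textbf{D}}$ be a minimizer in the definition of $R(\textbf{D}_k)$, so $\tilde{\textbf{D}}\preccurlyeq\textbf{D}_k$, $\tilde{\textbf{D}}\preccurlyeq\textbf{K}_x$, and $R(\textbf{D}_k)=\frac{1}{2}\log(|\textbf{K}_x|/|\tilde{\textbf{D}}|)$.

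Second, fix a good point-to-point vector Gaussian rate-distortion code of blocklength $L$, of rate at most $R(\textbf{D}_k)+\epsilon'$, whose MMSE reconstruction has per-letter error covariance $\preccurlyeq\tilde{\textbf{D}}+\epsilon''\textbf{I}$ (such a code exists by the vector Gaussian rate-distortion theorem, since $\textbf{0}\prec\tilde{\textbf{D}}\preccurlyeq\textbf{K}_x$). Set $l=nL$ and split $\textbf{X}^l$ into $n$ consecutive time-blocks of $L$ vector symbols; encoder $i$ quantizes block $i$ with this code, producing an index $W_i$ of about $L(R(\textbf{D}_k)+\epsilon')$ bits. Because the source is memoryless, the $n$ blocks---and hence $W_1,\dots,W_n$---are mutually independent, and $\textbf{E}[\textbf{X}^{(j)}\mid W_1,\dots,W_n]=\textbf{E}[\textbf{X}^{(j)}\mid W_j]$ for each block $j$.

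Third, encoder $i$ also transmits parity symbols $P_i$ computed deterministically from $(W_1,\dots,W_n)$ by a systematic MDS code, arranged exactly as the coded component of the Case~II worst-case scheme: regard each $W_i$ as $k$ symbols over a large finite field (choosing $L$ so the needed divisibility holds) and pick the parities so that any $k$ descriptions---which directly supply $k^2$ systematic symbols together with $k(n-k)$ parity symbols---determine all $nk$ symbols $(W_1,\dots,W_n)$. Each description then carries $k+(n-k)=n$ field symbols, i.e.\ has rate $\frac{1}{k}(R(\textbf{D}_k)+\epsilon')=R_g+\epsilon'/k$, which is exactly where the no-excess-rate budget is exhausted and why a closure is needed. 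The decoding analysis is then direct. If $m\ge k$ descriptions arrive, MDS decoding recovers all of $(W_1,\dots,W_n)$; since the received data is, on its support, a bijective function of $(W_1,\dots,W_n)$, the MMSE estimate equals $\textbf{E}[\textbf{X}^l\mid W_1,\dots,W_n]$, with per-letter covariance $\preccurlyeq\tilde{\textbf{D}}+\epsilon''\textbf{I}\preccurlyeq\textbf{D}_k+\epsilon''\textbf{I}$. If $m<k$ descriptions arrive, each received $W_i$ reconstructs its own block at covariance $\preccurlyeq\tilde{\textbf{D}}+\epsilon''\textbf{I}$, while every other block contributes at most $\textbf{K}_x$ (the MMSE error is always $\preccurlyeq$ the prior covariance), so the per-letter average is $\preccurlyeq\frac{m(\tilde{\textbf{D}}+\epsilon''\textbf{I})+(n-m)\textbf{K}_x}{n}\preccurlyeq\frac{m\textbf{D}_k+(n-m)\textbf{K}_x}{n}+\frac{m\epsilon''}{n}\textbf{I}$. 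Thus for every $\epsilon',\epsilon''>0$ (and every admissible $L$) an achievable rate-distortion vector lies within $O(\epsilon'+\epsilon'')$ of the claimed one; letting $\epsilon',\epsilon''\downarrow0$ along a suitable sequence of blocklengths places the claimed vector in $\overline{\mathcal{RD}}_{gauss}$.

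The step I expect to require the most care is the MDS dispersal: matching the description rate \emph{exactly} to $R_g=R(\textbf{D}_k)/k$ forces the subdivision of each $W_i$ into $k$ field symbols and a blocklength $L$ for which $LR(\textbf{D}_k)$ and the associated field parameters are integral, and one must verify that a zero-slack (genuinely MDS) code exists for the resulting lengths. The remaining ingredients---the reduction to MMSE decoders, the independence of the time-blocks, and the positive-definite distortion bookkeeping---are routine.
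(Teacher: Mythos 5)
Your proposal is correct and takes essentially the same route as the paper: compress the source with a point-to-point Gaussian rate-distortion code, disperse the quantized bits over the $n$ channels using the $D_k=0$ binary erasure MDS scheme, and obtain the intermediate distortions by time-sharing. You make explicit two points the paper leaves implicit: that fixing the decoder to be the conditional mean is without loss (MMSE is optimal in the positive-semidefinite order), and that the MDS dispersal must be arranged so that the $k$ systematic field symbols on channel $i$ are exactly the quantization index $W_i$ of sub-block $i$---a rearrangement that is possible within the paper's circular-shift construction but that its time-sharing step tacitly assumes.
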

\begin{proof}
Fix $\textbf{D}_k$ and consider an integer $l$. We know from rate-distortion theory that there exists an integer $l' \ge lR(\textbf{D}_k)$ such that any source sequence $\textbf{X}^l$ of $l$ symbols can be compressed to a sequence $Y^{l'}$ consisting of $l'$ bits and then reproduced from $Y^{l'}$ with distortion $\preccurlyeq \textbf{D}_k + \epsilon \mathbf{I}$ for $l$ sufficiently large. Chose now a blocklength $nl$. The $nl$ source symbols can be compressed into a binary sequence $Y^{nl'}$ taking values in $\mathcal{X}$, which can then be transmitted to the decoder over the $n$ channels using the achievability scheme proposed in Section~\ref{sec:worst-case-achievability}. Thus every description contains $l'$ uncoded bits of $Y^{nl'}$. In particular, the decoder should be able to completely reconstruct $Y^{nl'}$ upon reception of any $k$ descriptions, i.e, there is no distortion for every $k$ out of $n$ descriptions (this corresponds to a special case of Theorem~\ref{thm:achievability_avg-case} with $D_k = 0$). Thus every set of $k$ descriptions must reveal $nl'$ bits, and therefore the rate of a single description is $\tilde{R} = nl'/knl = l'/kl$ bits per symbol of $\textbf{X}^l$. Moreover, since every description contains $l'$ uncoded bits, the decoder can reconstruct $ml'$ bits of $Y^{l'}$ upon reception of $m < k$ descriptions.

We now argue that $\left(R_g,\frac{\textbf{D}_k + (n-1)\textbf{K}_x}{n},\ldots,\frac{(k-1)\textbf{D}_k + (n-k+1)\textbf{K}_x}{n},\textbf{D}_k,\ldots,\textbf{D}_k\right) \in \overline{\mathcal{RD}}_{gauss}$. The rate of every description is $\tilde{R} = l'/kl$. Moreover, any $m < k$ descriptions reveal $ml'$ bits of $Y^{nl'}$. It follows from a time-sharing argument that upon receptions of $m < k$ descriptions, the decoder can reconstruct $\textbf{X}^{nl}$ with distortion $\frac{m(\textbf{D}_k + \epsilon\mathbf{I}) + (n-m)\textbf{K}_x}{n}$. When $k$ or more descriptions are received, the decoder is able to reconstruct $Y^{nl'}$ completely and can reconstruct $\textbf{X}^{nl}$ with distortion less than $ \preccurlyeq \textbf{D}_k + \epsilon \mathbf{I}$. Now let $l \rightarrow \infty$. Then we can let $l' \rightarrow \infty$ such that $\frac{l'}{l} \rightarrow R(\textbf{D}_k)$ and $\epsilon \rightarrow 0$. Thus $\tilde{R} = \frac{l'}{kl} \rightarrow \frac{1}{k}R(\textbf{D}_k) = R_g$, and so
$\left(R_g,\frac{\textbf{D}_k + (n-1)\textbf{K}_x}{n},\ldots,\frac{(k-1)\textbf{D}_k + (n-k+1)\textbf{K}_x}{n},\textbf{D}_k,\ldots,\textbf{D}_k\right) \in \overline{\mathcal{RD}}_{gauss}$.
\end{proof}

Next, we show that, for the special case of symmetric scalar Gaussian multiple descriptions with two levels of receivers (where one receiver reconstructs the source from any $k$ out of $n$ descriptions with distortion $\textbf{D}_k$ and the second receiver reconstruct the source from all $n$ description with distortion $\textbf{D}_n$), and no excess rate for the second receiver, the aforementioned scheme achieves the optimal $\textbf{D}_k$. It has been shown by Wang and Viswanath \cite[Theorem 1]{wang_visw:two-levels} that given distortion constraints $\textbf{D}_k$ and $\textbf{D}_n$, the symmetric multiple description rate for an \emph{i.i.d.} vector Gaussian source with mean $\textbf{0}$ and convariance $\textbf{K}_x$ is $$\hat{R} = \sup_{\textbf{K}_z \succ \textbf{0}} \quad \frac{1}{2} \log \left(\frac{|\textbf{K}_x|^{\frac{1}{n}} |\textbf{K}_x + \textbf{K}_z|^{\frac{n-k}{kn}} |\textbf{D}_n + \textbf{K}_z|^{\frac{1}{n}}}{|\textbf{D}_n|^{\frac{1}{n}} |\textbf{D}_k + \textbf{K}_z|^{\frac{1}{k}}}\right).$$ Thus the sum rate of the $n$ descriptions is
\begin{align}
\label{eqn:vw_sum-rate}
n\hat{R} &=  \sup_{\textbf{K}_z \succ \textbf{0}} \quad \frac{1}{2} \log \left(\frac{|\textbf{K}_x| |\textbf{K}_x + \textbf{K}_z|^{\frac{n-k}{k}} |\textbf{D}_n + \textbf{K}_z|}{|\textbf{D}_n| |\textbf{D}_k + \textbf{K}_z|^{\frac{n}{k}}}\right).
\end{align}

\begin{theorem}
For scalar Gaussian multiple descriptions (\emph{i.i.d.} $\mathcal{N}(0,\sigma_x^2)$ Gaussian source) with two levels of receivers (distortion constraints $D_k$ and $D_n$, respectively) and no excess rate for the second receiver, $D_k \ge \frac{k}{n}D_n + \frac{n-k}{n}\sigma_x^2$.
\end{theorem}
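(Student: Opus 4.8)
The plan is to invoke the converse half of the Wang--Viswanath characterization \cite[Theorem~1]{wang_visw:two-levels} recorded in \eqref{eqn:vw_sum-rate}, specialize it to the scalar case, and then extract the claimed inequality by comparing leading coefficients of two polynomials. Since the rate is symmetric and the code operates at rate $R_g$, achievability of the distortion pair $(D_k,D_n)$ forces $R_g\ge\hat R$, hence $nR_g\ge n\hat R$. Specializing \eqref{eqn:vw_sum-rate} to the scalar setting ($\mathbf{K}_x=\sigma_x^2$, $\mathbf{K}_z=z>0$, $\mathbf{D}_k=D_k$, $\mathbf{D}_n=D_n$) gives
$$
n\hat R=\tfrac12\sup_{z>0}\log f(z),\qquad f(z):=\frac{\sigma_x^2\,(\sigma_x^2+z)^{(n-k)/k}\,(D_n+z)}{D_n\,(D_k+z)^{n/k}}.
$$

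The key observation is that $f(z)\to\sigma_x^2/D_n$ as $z\to\infty$, because the numerator and denominator are both asymptotic to a constant times $z^{n/k}$. Hence $\sup_{z>0}f(z)\ge\sigma_x^2/D_n$, so that $n\hat R\ge\tfrac12\log(\sigma_x^2/D_n)=R(D_n)$. Combining this with no excess rate for the central receiver, $nR_g=R(D_n)$, we obtain $R(D_n)=nR_g\ge n\hat R\ge R(D_n)$, so all three quantities coincide; in particular $\sup_{z>0}f(z)=\sigma_x^2/D_n$. By definition of the supremum this means $f(z)\le\sigma_x^2/D_n$ for \emph{every} $z>0$, which after clearing denominators and raising to the $k$-th power is equivalent to
$$
(\sigma_x^2+z)^{\,n-k}(D_n+z)^{k}\ \le\ (D_k+z)^{n}\qquad\text{for all }z>0.
$$

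Finally I would compare leading coefficients. Set $P(z):=(D_k+z)^{n}-(\sigma_x^2+z)^{n-k}(D_n+z)^{k}$. Both terms are monic polynomials of degree $n$, so $P$ has degree at most $n-1$, and expanding shows the coefficient of $z^{n-1}$ in $P$ equals $nD_k-(n-k)\sigma_x^2-kD_n$. Since $P(z)\ge0$ for all sufficiently large $z$, this coefficient cannot be negative (otherwise $P(z)\to-\infty$), which gives $nD_k\ge(n-k)\sigma_x^2+kD_n$, i.e.\ $D_k\ge\frac{k}{n}D_n+\frac{n-k}{n}\sigma_x^2$, as claimed. The degenerate cases are harmless: if $D_n=0$ then $R(D_n)=\infty$ and no finite-rate code exists, and $D_k=0$ simply renders the derived inequality impossible, consistent with the conclusion.

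The only nonroutine point --- and hence the main obstacle --- is the second step: recognizing that the supremum defining $n\hat R$ is approached in the limit $z\to\infty$ precisely at the value $R(D_n)$, so that ``no excess rate for the central receiver'' is \emph{equivalent} to the clean pointwise bound $f(z)\le\sigma_x^2/D_n$. Once that is in hand, the leading-coefficient comparison delivers the exact constant $\frac{k}{n}D_n+\frac{n-k}{n}\sigma_x^2$ with no further work. Should the polynomial bookkeeping prove awkward, an equivalent route is to take $n$-th roots in the displayed inequality and evaluate $\lim_{z\to\infty}\bigl[(\sigma_x^2+z)^{(n-k)/n}(D_n+z)^{k/n}-z\bigr]=\frac{n-k}{n}\sigma_x^2+\frac{k}{n}D_n$, which lower-bounds $D_k$ directly.
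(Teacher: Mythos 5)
Your proof is correct, and while it shares the same starting point as the paper (the Wang--Viswanath converse, specialized to the scalar case, combined with the no-excess-rate condition to obtain $\sup_{z>0}\log f(z)=0$), you extract the final inequality by a genuinely different and cleaner device. The paper's proof keeps the condition in logarithmic form, expands $\log_e f(\lambda)$ as a sum of terms of the form $\log_e(1+x)$, invokes the two-sided Taylor-type bound $\log_e(1+x)\ge x-\tfrac{x^2}{2(1-|x|)^2}$, and then sends $\lambda\to\infty$ while tracking an explicit remainder function $g(\lambda)$. You instead observe that $f(z)\to\sigma_x^2/D_n$ as $z\to\infty$, which pins down the supremum and upgrades the sup-condition to the clean pointwise inequality $(\sigma_x^2+z)^{n-k}(D_n+z)^k\le(D_k+z)^n$ for all $z>0$; since both sides are monic of degree $n$, comparing $z^{n-1}$ coefficients immediately delivers $nD_k\ge(n-k)\sigma_x^2+kD_n$. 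Your route avoids the analytic inequality and the remainder bookkeeping entirely and reduces to elementary polynomial comparison, at the small cost of the preliminary argument that the supremum equals its value at infinity (which, as you note, is exactly where ``no excess rate for the central receiver'' becomes an equality rather than merely an inequality). The paper's approach has the modest advantage of not needing to clear the $k$-th root first, but yours is the more transparent of the two.
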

\begin{IEEEproof}
Assume WLOG that $\sigma_x^2 = 1$. Reducing \eqref{eqn:vw_sum-rate} to the scalar case and using the no excess rate condition gives
\begin{align*}
\frac{1}{2} \log \left(\frac{1}{D_n}\right) &=  \sup_{\lambda > 0} \quad \frac{1}{2} \log \left(\frac{1}{D_n} \cdot \frac{(1 + \lambda)^{\frac{n-k}{k}} (D_n + \lambda)}{(D_k + \lambda)^{\frac{n}{k}}}\right),
\end{align*}
which implies
\begin{align*}
0 = \sup_{\lambda > 0} \quad \frac{1}{2} \log \left(\frac{(1 + \lambda)^{\frac{n-k}{k}} (D_n + \lambda)}{(D_k + \lambda)^{\frac{n}{k}}}\right).
\end{align*}
Define $f(\lambda) = \frac{(1+\lambda)^{\frac{n}{k}-1}(D_n+\lambda)}{(D_k+\lambda)^{\frac{n}{k}}}.$ Then
\begin{align*}
0 &= \sup_{\lambda > 0} \log_e f(\lambda) \\
&= \sup_{\lambda > 0} \left(\frac{n}{k}-1\right) \log_e (1 + \lambda) + \log_e (D_n + \lambda) - \frac{n}{k} \log_e (D_k + \lambda) \\
&= \sup_{\lambda > 0} \log_e \frac{D_n + \lambda}{1 + \lambda} + \frac{n}{k} \log_e \frac{1 + \lambda}{D_k + \lambda} \\
&= \sup_{\lambda > 0} \log_e \left(1 + \frac{D_n - 1}{1 + \lambda}\right) + \frac{n}{k} \log_e \left(1 + \frac{1 - D_k}{D_k + \lambda}\right).
\end{align*}

Define $$g(\lambda) = \frac{(\frac{D_n - 1}{1+\lambda})^2}{2(1 - |\frac{D_n - 1}{1+\lambda}|)^2} + \frac{(\frac{1 - D_k}{D_k+\lambda})^2}{2(1 - |\frac{1 - D_k}{D_k+\lambda}|)^2}.$$ Using the fact that
\begin{align*}
\log_e (1 + x) \ge x - \frac{x^2}{2(1 - |x|)^2} \ \text{for $|x| < 1$}
\end{align*}
we obtain
\begin{align*}
0 &\ge \sup_{\lambda > 0} \left(\frac{D_n - 1}{1 + \lambda} + \frac{n}{k} \left(\frac{1 - D_k}{D_k + \lambda}\right) - g(\lambda)\right) \\
\frac{1-D_n}{1 + \lambda} &\ge \frac{n}{k} \left(\frac{1 - D_k}{D_k + \lambda}\right) - g(\lambda) \\
\frac{D_k + \lambda}{1 + \lambda} &\ge \frac{n}{k} \left(\frac{1 - D_k}{1 - D_n}\right) - \frac{D_k + \lambda}{1 - D_n}g(\lambda).
\end{align*}
Now let $\lambda \rightarrow \infty$. Then $\frac{D_k + \lambda}{1 - D_n}g(\lambda) \rightarrow 0$ and $\frac{D_k + \lambda}{1 + \lambda} \rightarrow 1$. We thus have
\begin{align*}
1 &\ge \frac{n}{k} \left(\frac{1 - D_k}{1 - D_n}\right) \\
\Rightarrow D_k &\ge \frac{k}{n}D_n + \frac{n-k}{n}.
\end{align*}
\end{IEEEproof}

\section{Decentralized Encoding}
\label{sec:dec_encoding}
In this section we characterize the optimal distortion tradeoff for the robust binary erasure CEO problem. The robust binary erasure CEO problem is a generalization of the multiple descriptions problem studied earlier in that the encoders observe an erased version of the source instead of the source itself. In particular, let $Y_i = N_i \cdot X, \ i \in \mathcal{N}$, where $X \in \mathcal{X}$ and $N_1,\ldots,N_n$ are \emph{i.i.d.} Bernoulli with $0 < \Pr(N_i = 0) = p < 1$. Thus the $Y_i$ take values in $\mathcal{\hat{X}} = \{+,-,0\}$. A \emph{encoder} is a function $f_i:\mathcal{\hat{X}}^l \rightarrow \left\{1,\ldots,M_i^l\right\}, \ i \in \mathcal{N}$. A \emph{decoder} is a function $g_\mathcal{K}: \prod_{k \in \mathcal{K}} \left\{1,\ldots,M_k^l\right\} \rightarrow \hat{\mathcal{X}}^l$, where $\mathcal{K} \subseteq \mathcal{N}$ is the set of messages received. There are $n$ encoders. Encoder $f_i, \ i \in \mathcal{N}$, observes $Y_i^l$  and transmits an encoded version of it over channel $i$. The receiver either receives this description without errors or is not able to receive it at all. Excluding the case where none of the messages is received, the receiver may receive $2^n - 1$ different combinations of the $n$ messages. Thus it can be represented by the $2^n-1$ decoding functions $g_\mathcal{K}, \ \mathcal{K} \subseteq \mathcal{N}$, $\mathcal{K} \neq \emptyset$. Based on the set of received messages $\mathcal{K}$, the receiver employs the corresponding decoding function to output a reconstruction $\hat{X}_\mathcal{K}^l$ of the original source string $X^l$ subject to a distortion constraint. We consider symmetric rates, i.e., each message has the same rate $R$ and the distortion constraint depends only on the number of messages received.

We measure the fidelity of the reconstruction using a family of distortion measures, $\{d^\lambda\}_{\lambda > 0}$, where
\begin{equation*}
d^\lambda(x,\hat{x}) =
\begin{cases}
0 & \text{if $\hat{x} = x$} \\
1 & \text{if $\hat{x} = 0$} \\
\lambda & \text{otherwise}.
\end{cases}
\end{equation*}
We are particularly interested in the large-$\lambda$ limit. In this regime, $d^\lambda$ approximates the erasure distortion measure. We use this family of finite distortion measures because an infinite distortion measure is too harsh for this setup: it does not allow decoding errors at all, even those that have negligible probability.

\begin{definition}
The rate-distortion vector $(R,D_1,D_2,\ldots,D_n)$ is \emph{achievable} if there exists a block length $l$ for which there exist encoders
$f_i, \ i \in \mathcal{N}$, and decoders $g_\mathcal{K}$, $\mathcal{K}\subseteq \mathcal{N}$, $\mathcal{K} \neq \emptyset$ such that
\begin{equation}
\label{ceo_constraints}
\begin{split}
R & \ge \frac{1}{l} \log M_i^{(l)} \ \text{for all $i \in \mathcal{N}$, and} \\
D_k & \ge E\left[\frac{1}{l} \sum_{t = 1}^l
d^\lambda(X_t,\hat{X}_{\mathcal{K}t,})\right] \
\text{for all subsets of messages $\mathcal{K}, |\mathcal{K}|=k$}.
\end{split}
\end{equation}
Let $\mathcal{RD}_{CEO}(\lambda)$ denote the set of achievable rate-distortion vectors. Define
\begin{align*}
\mathcal{RD}_{CEO} = \bigcap_{\lambda \ge 1}^\infty \mathcal{RD}_{CEO}(\lambda).
\end{align*}
\end{definition}
We use $\overline{\mathcal{RD}}_{CEO}$ to denote the closure of $\mathcal{RD}_{CEO}$. Our main result is the characterization of the optimal distortion tradeoff for an arbitrary code with respect to the number of messages received. We show that if a code comes arbitrarily close to achieving the minimum achievable distortion $D_k$ upon reception of $k$ messages, then the distortion it can achieve upon reception of $\ell$ messages cannot be lower than $D_k^{\ell/k}$. Achievability can be shown by using a random binning scheme based on $(n,k)$ source-channel erasure codes, proposed in \cite{pradpuriramc04}. The result therefore proves that $(n,k)$ source-channel erasure codes are optimal for this setup. Informally, the scheme involves constructing a codebook $\mathcal{C}_i$ for $Y_i$ at encoder $f_i$ and then binning all the codewords independently and uniformly. Encoder $f_i$ observes $Y_i^l$ and then sends the bin index of the corresponding codeword to the decoder. Upon receiving the messages, the decoder searches the corresponding bins and outputs a reconstruction of the source sequence based on the bits revealed by the codewords. If none of the decoded codewords reveal a particular source bit, then the decoder just outputs an erasure in place of that bit. It can be verified that, for this scheme, if the distortion upon reception of any $k$ messages is $D_k$, then the distortion upon reception of any $\ell$ messages is $D_k^{\ell/k}$. The intuition is that if $q$ is the probability that a particular bit is not revealed by a particular message, then the chance that $k$ messages will not reveal that bit is $q^k$, and the chance that $\ell$ messages will not reveal that bit is $q^\ell = (q^k)^{\ell/k}$.

Before proving the converse for this problem, we will state and prove an outer bound on the rate region of the multi-terminal source coding problem in the next subsection. We will then use this bound to prove our result in Section~\ref{sec:CEO_sub}.

\subsection{Outer Bound on the Rate Region of the Multi-terminal Source Coding Problem}
\label{sec:out_bound}
The term ``multi-terminal source coding" typically refers to the problem of reconstructing two correlated, separately encoded sources, each subject to a distortion constraint. In this paper we use the term to refer to the more general model considered in \cite{wagnervenkat05}: we have an arbitrary number of sources $Y_1,\ldots,Y_n$, with $Y_i$ taking values in the set $\mathcal{Y}_i$, encoders $f_i, \ i \in \mathcal{N}$, a hidden source $Y_0$ which is not directly observed by any encoder or the decoder, and a side information source $Y_{n+1}$, taking values in the set $\mathcal{Y}_{n+1}$, which is observed by the decoder but not by any encoder. In particular, $\{Y_{0,t},Y_{1,t},\ldots,Y_{n,t},Y_{n+1,t}\}_{t=1}^\infty$ is a vector-valued, finite-alphabet and memoryless source. Encoder $f_i$ observes a length-$l$ sequence of $Y_i$ and transmits a message to the decoder based on the mapping $$f_i^{(l)}:\mathcal{Y}_i^l \rightarrow \left\{1,\ldots,M_i^{(l)}\right\}.$$ We allow the decoder to reconstruct arbitrary functions of the sources $V_1,\ldots,V_J$ (with $V_j, \ j = 1,\ldots,J$ taking values in the set $\mathcal{V}_j$) instead of, or in addition to, the sources themselves. We also allow the decoder to reconstruct $V_1,\ldots,V_J$ from subsets of messages $f_\mathcal{K} = \{f_k^{(l)}, \ k \in \mathcal{K}\}$, where $\mathcal{K} \subset \mathcal{N}, \mathcal{K} \ne \emptyset$. The decoder thus uses the mappings $$\left(g_\mathcal{K}^j\right)^{(l)}:\mathcal{Y}_{n+1}^l \times \prod_{k \in \mathcal{K}} \left\{1,\ldots,M_k^{(l)}\right\} \rightarrow \mathcal{V}_j^l, \ \text{for } \mathcal{K} \subset \mathcal{N}, \mathcal{K} \ne \emptyset, j = 1,\ldots,J.$$ We thus have $J$ distortion measures $$d_j:\prod_{i=0}^{n+1} \mathcal{Y}_i \times \mathcal{V}_j \rightarrow \mathbb{R}^+.$$ For every $j = 1,\ldots,J$, we impose a common distortion constraint for all size-$k$ subset of messages used to reconstruct $V_j$. More precisely, for every $j = 1,\ldots,J$, all ${n \choose k}$ subsets of messages of size $k$, when used to reconstruct $V_j$, must satisfy a single distortion constraint. Thus there are $nJ$ distortion constraints in total. We will use the following notation and definitions from \cite{wagnervenkat05}. Let $\mathbf{Y}_\mathcal{K}$ denote $(Y_k)_{k \in \mathcal{K}}$, and $Y_{i^c}$ denote $Y_{\{i\}^c}$. Moreover, $Y_{i,a:b}$ denotes $\{Y_{i,a},Y_{i,a+1},\ldots,Y_{i,b}\}$.

\begin{definition}
The rate-distortion vector $$(\mathbf{R},\mathbf{D}) = (R_1,R_2,\ldots,R_n,D_{1,1},D_{2,1},\ldots,D_{n,1},D_{1,2},\ldots,D_{n,2},\ldots,D_{1,J},\ldots,D_{n,J})$$ is achievable if for some $l$ there exist encoders $f_i^{(l)}, \ i \in \mathcal{N}$, and decoders $(g_\mathcal{K}^j)^l, \ \mathcal{K} \subset \mathcal{N}, \mathcal{K} \ne \emptyset, j = 1,\ldots,J$, such that
\begin{align}
\label{constraints}
\begin{split}
R_i &\ge \frac{1}{l} \log M_i^{(l)}, \ i \in \mathcal{N}, \text{ and } \\
D_{k,j} &\ge \max_{\mathcal{K}:|\mathcal{K}|=k} \mathbf{E}\left[\frac{1}{l} \sum_{t=1}^l d_j(Y_{0,t},\mathbf{Y}_{\mathcal{K},t},Y_{n+1,t},V_{j,t})\right] \ \text{for } j = 1,\ldots,J.
\end{split}
\end{align}
\end{definition}
As in \cite{wagnervenkat05}, we use $\mathcal{RD}_\star$ to denote the set of achievable rate-distortion vectors and $\overline{\mathcal{RD}_\star}$ to denote its closure. We use the following definitions from \cite{wagnervenkat05}.

\begin{definition}
Let $Y_0,Y_1,\ldots,Y_{n+1}$ be generic random variables with the distribution of the source at a single time. Let $\Gamma_o$ denote the set of finite-alphabet random variables $\gamma = (U_1,\ldots,U_n,V_1,\ldots,V_j,W,T)$ satisfying
\begin{enumerate}
\item[(i)]
$(W,T)$ is independent of $(Y_0,\mathbf{Y}_\mathcal{N},Y_{n+1})$,
\item[(ii)]
$U_i \leftrightarrow (Y_i,W,T) \leftrightarrow (Y_0,\mathbf{Y}_{i^c},Y_{n+1},\mathbf{U}_{i^c}$), shorthand for
``$U_i$, $(Y_i,W,T)$ and $(Y_0,\mathbf{Y}_{i^c},Y_{n+1},\mathbf{U}_{i^c})$ form a Markov chain in this order'', for all $i \in \mathcal{N}$, and
\item[(iii)]
$(Y_0,\mathbf{Y}_\mathcal{N},W) \leftrightarrow (\mathbf{U}_\mathcal{N},Y_{n+1},T) \leftrightarrow (V_1,\ldots,V_j)$.
\end{enumerate}
\end{definition}

\begin{definition}
Let $\psi$ denote the set of finite-alphabet random variables $Z$ with the property that $Y_1,\ldots,Y_n$ are conditionally independent given $(Z,Y_{n+1})$.
\end{definition}
There are many ways of coupling a given $Z \in \psi$ and $\gamma \in \Gamma_o$ to the source. In this paper, we shall only consider the \emph{Markov coupling} for which $Z \leftrightarrow (Y_0,\mathbf{Y}_\mathcal{N},Y_{n+1}) \leftrightarrow \gamma$. We now state our outer bound.

\begin{definition}
\label{Me:defn} Let
\begin{align*}
\mathcal{RD}_o(Z,\gamma) & = \Bigg\{(\mathbf{R},\mathbf{D}) :
\sum_{i \in \mathcal{K}} R_i \ge \max \Big(I(Z;\mathbf{U}_\mathcal{K}|Y_{n+1},T),I(Z;\mathbf{U}_\mathcal{K}|\mathbf{U}_{\mathcal{K}^c},Y_{n+1},T)\Big) \\
& \phantom{= \Bigg\{}   \mbox{} + \sum_{i \in \mathcal{K}} I(Y_i;U_i|Z,Y_{n+1},W,T) \ \text{for all $\mathcal{K} \subseteq \mathcal{N}$}, \\
& \phantom{= \Bigg\{} \ \text{and} \ D_{k,j} \ge \max_{\mathcal{K}:|\mathcal{K}|=k} \mathbf{E}[d_j(Y_0,\mathbf{Y}_\mathcal{K},Y_{n+1},V_j)] \ \text{for } j = 1,\ldots,J \Bigg\}.
\end{align*}
Then define
\begin{equation*}
\mathcal{RD}_o = \bigcap_{Z \in \psi} \bigcup_{\gamma \in \Gamma_o}
          \mathcal{RD}_o(Z,\gamma).
\end{equation*}
\end{definition}

\begin{theorem}
\label{main} $\mathcal{RD}_\star \subseteq \mathcal{RD}_o$.
\end{theorem}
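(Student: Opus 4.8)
I would prove Theorem~\ref{main} by a standard single-letterization argument, adapted from the converse in \cite{wagnervenkat05}, but tracking the auxiliary random variable $Z \in \psi$ through the chain of inequalities. Fix any achievable $(\mathbf{R},\mathbf{D})$, with encoders $f_i^{(l)}$ and decoders $(g_\mathcal{K}^j)^{(l)}$ meeting the constraints in \eqref{constraints}. Write $F_i = f_i^{(l)}(Y_i^l)$ for the $i$th message. Fix an arbitrary $Z \in \psi$, coupled to the i.i.d.\ source via the Markov coupling $Z^l \leftrightarrow (Y_0^l,\mathbf{Y}_\mathcal{N}^l,Y_{n+1}^l) \leftrightarrow (\text{messages})$, so that $(Z_t,Y_{0,t},\mathbf{Y}_{\mathcal{N},t},Y_{n+1,t})$ is i.i.d.\ and $Y_{1,t},\ldots,Y_{n,t}$ are conditionally independent given $(Z_t,Y_{n+1,t})$. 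The goal is to exhibit a choice of $\gamma \in \Gamma_o$ (the time-sharing variable $T$ uniform on $\{1,\ldots,l\}$, $W$ a further randomization variable if randomized encoders are allowed, $U_{i,t}$ built from $F_i$ and the past/future of the appropriate sources, $V_{j,t}$ the $t$th decoder output) such that $(\mathbf{R},\mathbf{D}) \in \mathcal{RD}_o(Z,\gamma)$.

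The core is the sum-rate bound. For any $\mathcal{K} \subseteq \mathcal{N}$,
\begin{align*}
l \sum_{i \in \mathcal{K}} R_i &\ge H(\mathbf{F}_\mathcal{K}) \ge I(Z^l, \mathbf{Y}_\mathcal{K}^l ; \mathbf{F}_\mathcal{K} \mid Y_{n+1}^l) \\
&= I(Z^l; \mathbf{F}_\mathcal{K} \mid Y_{n+1}^l) + I(\mathbf{Y}_\mathcal{K}^l ; \mathbf{F}_\mathcal{K} \mid Z^l, Y_{n+1}^l),
\end{align*}
and I would handle the two terms separately: the first via a telescoping/single-letterization that produces $l \cdot I(Z_T; \mathbf{U}_{\mathcal{K},T} \mid Y_{n+1,T}, T)$, and the second by exploiting the conditional independence of the $Y_i$ given $(Z, Y_{n+1})$ so that $I(\mathbf{Y}_\mathcal{K}^l; \mathbf{F}_\mathcal{K} \mid Z^l, Y_{n+1}^l) \ge \sum_{i \in \mathcal{K}} I(Y_i^l; F_i \mid Z^l, Y_{n+1}^l)$, which single-letterizes to $\sum_{i \in \mathcal{K}} l \cdot I(Y_{i,T}; U_{i,T} \mid Z_T, Y_{n+1,T}, W, T)$. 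To get the $\max(\cdot,\cdot)$ in Definition~\ref{Me:defn} I would run the same argument a second time conditioning additionally on $\mathbf{F}_{\mathcal{K}^c}$ (hence on $\mathbf{U}_{\mathcal{K}^c}$), starting from $H(\mathbf{F}_\mathcal{K}) \ge H(\mathbf{F}_\mathcal{K} \mid \mathbf{F}_{\mathcal{K}^c}) \ge I(Z^l; \mathbf{F}_\mathcal{K} \mid \mathbf{F}_{\mathcal{K}^c}, Y_{n+1}^l) + \cdots$; since both lower bounds hold, so does their maximum. The usual care is needed to verify the resulting $\gamma$ lies in $\Gamma_o$: the Markov chains (i)--(iii) follow from the memorylessness of the source, the definition $U_{i,t} = (F_i, \text{appropriate side variables})$, and the fact that each decoder is a function of its received messages and $Y_{n+1}^l$; property (i) needs $W,T$ chosen independently of the source. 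The distortion inequalities $D_{k,j} \ge \max_{|\mathcal{K}|=k} \mathbf{E}[d_j(Y_0,\mathbf{Y}_\mathcal{K},Y_{n+1},V_j)]$ follow from \eqref{constraints} by the i.i.d.\ averaging and the definition of $V_{j,t}$.

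**The main obstacle.** The delicate point — and where this bound improves on \cite{wagnervenkat05} — is the placement of $Z$ inside the rate expression, specifically getting the term $\sum_{i\in\mathcal{K}} I(Y_i;U_i\mid Z,Y_{n+1},W,T)$ rather than the weaker $\sum_{i\in\mathcal{K}} I(Y_i;U_i\mid Y_{n+1},W,T)$, while simultaneously keeping the $I(Z;\mathbf{U}_\mathcal{K}\mid\cdots)$ term. This requires the step $I(\mathbf{Y}_\mathcal{K}^l;\mathbf{F}_\mathcal{K}\mid Z^l,Y_{n+1}^l)\ge\sum_{i\in\mathcal{K}}I(Y_i^l;F_i\mid Z^l,Y_{n+1}^l)$, which is exactly where the hypothesis $Z\in\psi$ (conditional independence of the $Y_i$ given $(Z,Y_{n+1})$) is used, together with the fact that $F_i$ depends on $(Y_i^l,W)$ only; one expands $I(\mathbf{Y}_\mathcal{K}^l;\mathbf{F}_\mathcal{K}\mid Z^l,Y_{n+1}^l)$ by chain rule and drops nonnegative cross terms using that, given $(Z^l,Y_{n+1}^l)$, the pairs $(Y_i^l,F_i)$ are mutually independent across $i$. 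I would also need to check the single-letterization of the first term is valid — i.e.\ that the standard Csisz\'ar-sum / telescoping identity applied to $I(Z^l;\mathbf{F}_\mathcal{K}\mid Y_{n+1}^l)$ yields $\sum_t I(Z_t;\mathbf{U}_{\mathcal{K},t}\mid Y_{n+1,t}, \text{past})$ with the "past" absorbable into a valid $U_{i,t}$ consistent with the Markov structure (ii). Beyond this the argument is bookkeeping; the conceptual work is entirely in choosing $U_{i,t}$, $W$, $T$ so that all three membership conditions for $\Gamma_o$ hold simultaneously with the rate and distortion bounds.
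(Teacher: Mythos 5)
Your plan matches the paper's proof: single-letterize with a time-sharing variable $T$, define $U_i$ from $f_i^{(l)}(Y_i^l)$ plus rest-of-block side variables, use the chain-rule split of $I(Z^l,\mathbf{Y}_\mathcal{K}^l;\mathbf{F}_\mathcal{K}\mid Y_{n+1}^l)$ together with the conditional independence furnished by $Z\in\psi$, and obtain the $\max(\cdot,\cdot)$ by running the argument both with and without conditioning on $\mathbf{F}_{\mathcal{K}^c}$ (the paper literally says the only change from \cite{wagnervenkat05} is dropping that conditioning in \eqref{blockchain} and then taking the maximum of the two bounds). One correction to your sketch: $W$ is not an encoder-randomization variable and cannot be taken trivial even though the encoders are deterministic; in the paper it is set to $W=(\{Z^l\}\setminus\{Z_T\},\{Y_{n+1}^l\}\setminus\{Y_{n+1,T}\})$, the rest-of-block side information, and this is essential because $U_i$ contains $f_i^{(l)}(Y_i^l)$, which depends on $Y_{i,t}$ for $t\ne T$, so the Markov chain (ii) $U_i\leftrightarrow(Y_i,W,T)\leftrightarrow(Y_0,\mathbf{Y}_{i^c},Y_{n+1},\mathbf{U}_{i^c})$ only holds because conditioning on $W$ renders $Y_{i,\ne T}$ conditionally independent of $\mathbf{Y}_{i^c,\ne T}$ (and hence of $\mathbf{U}_{i^c}$), while independence of $(W,T)$ from the single-letter $(Y_0,\mathbf{Y}_\mathcal{N},Y_{n+1})=(Y_{0,T},\mathbf{Y}_{\mathcal{N},T},Y_{n+1,T})$ follows from memorylessness, not from $W$ being source-independent.
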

\begin{proof}
See Appendix~\ref{app:outer_bound}.
\end{proof}

The proposed bound differs in two ways from the bound in \cite{wagnervenkat05} as follows. Whereas the bound in \cite{wagnervenkat05} lower bounds the sum rate of a subset $\mathcal{K}$ of messages by $I(Z;\mathbf{U}_\mathcal{K}|\mathbf{U}_{\mathcal{K}^c},Y_{n+1},T)$, the proposed bound potentially improves upon it by taking the maximum of $I(Z;\mathbf{U}_\mathcal{K}|\mathbf{U}_{\mathcal{K}^c},Y_{n+1},T)$ and $I(Z;\mathbf{U}_\mathcal{K}|Y_{n+1},T)$. Moreover, the proposed bound imposes distortion constraints for source reproductions based on subsets of messages, rather than only for reproductions based on all of the messages. These improvements were needed in order to use the bound to prove our converse result for the robust CEO problem: the robust CEO problem requires the decoder to be able to reconstruct the source sequence from a subset $\mathcal{K}$ of the encoded messages, subject to a distortion constraint, without having any knowledge about the messages in $\mathcal{K}^c$. The outer bound in \cite{wagnervenkat05} cannot be applied to this problem, since, as mentioned earlier, it lower bounds the sum rate of the subset of messages $\mathcal{K}$ by $I(Z;\mathbf{U}_\mathcal{K}|\mathbf{U}_{\mathcal{K}^c},Y_{n+1},T)$ which involves conditioning on the messages in $\mathcal{K}^c$.

Although we apply our improved outer bound to the robust binary erasure CEO problem in this paper, we believe that it could potentially be useful for other instances of the multi-terminal source coding problem.

\subsection{Optimal Distortion Tradeoffs for Robust CEO}
\label{sec:CEO_sub}

As defined earlier, the robust binary erasure CEO problem is an instance of the general multi-terminal source coding problem in which the hidden source $Y_0$ takes values in $\mathcal{X} = \{+,-\}$. There is no side information $Y_{n+1}$ and the decoder is interested in reproducing an estimate $V_1$ of the hidden source $Y_0$ only. In order to be consistent with the notation used in the beginning of this section, we shall henceforth use $X$ instead of $Y_0$ and $\hat{X}$ instead of $V_1$. Here $\hat{X}$ takes values in $\{+,-,0\}$. We begin with a few lemmas.

Let $g(\cdot)$ denote the function on $[p,\infty)$ defined by
\begin{equation*}
g(x) = \begin{cases}
h(x) - (1-p) h(\frac{x - p}{1 - p}) & p \le x \le 1 \\
0 & x > 1.
\end{cases}
\end{equation*}
The following corollary and lemma, which we state without proof, are from \cite{wagnervenkat05}.

\begin{corollary}\emph{\textbf{\cite[Corollary 1]{wagnervenkat05}}}
\label{convexcor}
The function $g(y^{1/n})$ is non-increasing and convex in $y$ on $[p^n,\infty)$.
\end{corollary}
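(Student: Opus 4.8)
The plan is to establish the two properties for $g$ itself on $[p,\infty)$ and then transfer them through the substitution $y\mapsto y^{1/n}$ by a convexity–composition argument, which is cleaner than differentiating $y\mapsto g(y^{1/n})$ directly.

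First I would differentiate $g$ on the nontrivial piece $p<x<1$. With $h$ the binary entropy function, $h'(x)=\log\frac{1-x}{x}$, so by the chain rule the logarithmic terms in $g'$ collapse:
\[
g'(x) \;=\; h'(x) - h'\!\Bigl(\tfrac{x-p}{1-p}\Bigr) \;=\; \log\tfrac{1-x}{x} - \log\tfrac{1-x}{x-p} \;=\; \log\tfrac{x-p}{x},
\]
and hence $g''(x)=\frac{1}{\ln 2}\cdot\frac{p}{x(x-p)}$. On $(p,1)$ we have $0<\frac{x-p}{x}<1$, so $g'(x)<0$ and $g''(x)>0$; on $(1,\infty)$, $g\equiv 0$. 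Since $g$ is continuous at $x=1$ (both pieces vanish there, as $h(1)=0$) and the one-sided derivatives satisfy $g'(1^-)=\log(1-p)<0=g'(1^+)$ — an upward kink — the function $g$ is non-increasing and convex on all of $[p,\infty)$.

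Now observe that for integer $n\ge 1$ the map $y\mapsto y^{1/n}$ is increasing and concave on $[0,\infty)$ and carries $[p^n,\infty)$ onto $[p,\infty)$. A composition of a non-increasing function with an increasing one is non-increasing, so $y\mapsto g(y^{1/n})$ is non-increasing; and a composition of a non-increasing convex function ($g$) with a concave function ($y^{1/n}$) is convex. This yields both assertions on $[p^n,\infty)$.

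The computations are routine; the one place needing care is the junction $x=1$ (equivalently $y=1$), where $g$ is only piecewise smooth — one must verify continuity there and that the derivative jumps \emph{up}, not down, so that convexity and monotonicity hold across the seam and not merely on each side. If one prefers to bypass the composition lemma, the same conclusion follows by direct computation: with $x=y^{1/n}$,
\[
\frac{d^2}{dy^2}\,g(y^{1/n}) \;=\; \frac{y^{1/n-2}}{n^2}\Bigl[\,x\,g''(x)+(1-n)\,g'(x)\,\Bigr] \;=\; \frac{y^{1/n-2}}{n^2}\Bigl[\,\tfrac{1}{\ln 2}\cdot\tfrac{p}{x-p}+(n-1)\log\tfrac{x}{x-p}\,\Bigr],
\]
which is nonnegative since $n\ge 1$ and $p<x<1$, while $\frac{d}{dy}\,g(y^{1/n})=\frac{1}{n}\,x^{1-n}\log\frac{x-p}{x}\le 0$ on the same range.
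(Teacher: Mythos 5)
Your proof is correct. The paper cites this corollary from \cite{wagnervenkat05} without reproducing a proof, so there is no internal argument to compare against; your approach — establishing that $g$ itself is non-increasing and convex on $[p,\infty)$ by computing $g'(x)=\log\frac{x-p}{x}$ and $g''(x)=\frac{1}{\ln 2}\cdot\frac{p}{x(x-p)}$, checking continuity and the upward jump of the one-sided derivatives at the seam $x=1$, and then transferring both properties through the increasing concave map $y\mapsto y^{1/n}$ via the standard composition rule — is the natural one and is sound, with the direct second-derivative computation serving as a useful sanity check.
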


\begin{lemma}\emph{\textbf{\cite[Lemma 6]{wagnervenkat05}}}
\label{continuity}
Suppose $p^n \le D$ and $(\mathbf{U},\hat{X})$ is such that
\begin{enumerate}
\item[(\emph{i})] $\mathbf{E}[d^\lambda(X,\hat{X})] \le D$,
\item[(\emph{ii})] $U_i \leftrightarrow Y_i \leftrightarrow (X,\mathbf{Y}_{i^c},
   \mathbf{U}_{i^c})$ for all $i \in \mathcal{N}$, and
\item[(\emph{iii})] $(X,\mathbf{Y}) \leftrightarrow \mathbf{U} \leftrightarrow \hat{X}$.
\end{enumerate}
If
\begin{equation*}
\frac{32n}{p(1-p)} \left(\frac{2D}{\lambda}\right)^{1/n} \le \delta
  \le \frac{1}{2},
\end{equation*}
then
\begin{equation*}
\frac{1}{n} \sum_{i = 1}^n I(Y_i;U_i|X) \ge
  g\left((D+ \delta)^{1/n}\right)
   + 2\delta \log\frac{\delta}{5}.
\end{equation*}
\end{lemma}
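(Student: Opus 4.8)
\medskip
\noindent\textbf{Proof strategy (sketch).} The inequality is a ``robust'' version of a clean converse bound, and the plan is to prove the clean bound first and then charge the effect of a finite $\lambda$ (which lets the reconstruction $\hat X$ err) against the free parameter $\delta$. The clean bound I would target is: if $\hat X$ is a genuine \emph{erasure} reconstruction ($\hat X\in\{X,0\}$ always) with $\Pr(\hat X=0)\le\Delta$, $p^n\le\Delta$, and the Markov chains of the hypothesis hold, then $\tfrac1n\sum_i I(Y_i;U_i\mid X)\ge g(\Delta^{1/n})$. This target is exactly what the symmetric test channel achieves --- encoder $i$ forwards $Y_i$ whenever $Y_i\ne 0$ with probability $\tfrac{1-\Delta^{1/n}}{1-p}$ and otherwise reports an erasure; for that channel a one-line computation gives $I(Y_i;U_i\mid X)=h(1-\Delta^{1/n})-(1-p)h\!\big(\tfrac{1-\Delta^{1/n}}{1-p}\big)=g(\Delta^{1/n})$ and $\Pr(\hat X=0)=\Delta$ --- so it also tells us the right ``residual-erasure'' bookkeeping for the converse.

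For the converse half of the clean bound, for each $i$ I would reveal $(\mathbf Y_{i^c},\mathbf U_{i^c})$ to a genie so that the residual uncertainty about $X$ given $U_i$ is captured by a single number $y_i\in[p,1]$, the effective erasure probability of $X$ through the cascade ``$Y_i=N_iX$'' $\to$ ``$U_i$''. Using Markov chain (ii) and the binary-erasure structure of $Y_i=N_iX$, a single-letter computation of the same shape as the achievability calculation yields the per-encoder bound $I(Y_i;U_i\mid X)\ge g(y_i)$: resolving a $(1-y_i)$-fraction of $X$'s uncertainty across the erasure channel $Y_i$ costs at least $g(y_i)$ bits of $I(Y_i;U_i\mid X)$. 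Since $U_1,\dots,U_n$ are conditionally independent given $X$ and each $U_i$ acts as an independent partial eraser of $X$, the decoder's residual uncertainty multiplies: $H(X\mid\mathbf U)\ge\prod_i y_i$, and because $\hat X$ is an erasure reconstruction, $\prod_i y_i\le H(X\mid\mathbf U)\le\Pr(\hat X=0)\le\Delta$. It then remains to show $\tfrac1n\sum_i g(y_i)\ge g(\Delta^{1/n})$ whenever $y_i\in[p,1]$ and $\prod_i y_i\le\Delta$; this is where Corollary~\ref{convexcor} is used --- the convexity properties of $g$ it records make the objective $\tfrac1n\sum_i g(y_i)$ attain its constrained minimum at the symmetric point $y_i=\Delta^{1/n}$, with value $g(\Delta^{1/n})$.

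To get from the clean bound to the stated lemma, note that $\mathbf E[d^\lambda(X,\hat X)]\le D$ forces the error probability $P_e:=\Pr(\hat X\notin\{X,0\})\le D/\lambda$ and $\Pr(\hat X=0)\le D$. Replacing every error of $\hat X$ by an erasure produces a genuine erasure reconstruction with erasure probability $\le D+P_e$, but one that is only an \emph{approximate} function of $\mathbf U$ (it disagrees with a bona fide function of $\mathbf U$ only on the probability-$P_e$ error event). Pushing this through the converse above costs two things. First, a shift of the argument of $g$ from $D^{1/n}$ to roughly $(D+\text{const})^{1/n}$, where the constant is controlled by $(2D/\lambda)^{1/n}$; the $n$-th root and the factor $\tfrac{n}{p(1-p)}$ appear because $P_e$ enters through the product constraint $\prod_i y_i$ and through the modulus of continuity of $y\mapsto g(y^{1/n})$ near the endpoint $y=p$ --- where it is steep, with $g''$ of order $1/(x-p)$ --- and choosing $\delta\ge\tfrac{32n}{p(1-p)}(2D/\lambda)^{1/n}$ (the stated hypothesis) caps this shift at $\delta$, so $g$ is evaluated at $(D+\delta)^{1/n}$. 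Second, additive ``Fano residue'' terms incurred when conditional entropies are bounded by binary-entropy functions of $P_e$ and $\delta$; these sum to exactly $2\delta\log\tfrac\delta5$, which is negative since $\delta\le\tfrac12$ and hence a genuine (small) weakening of the clean bound. Choosing $\delta$ as the single quantity against which every lossy estimate is charged, and verifying the resulting chain of inequalities termwise, gives the claim.

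The step I expect to be the real obstacle is this last perturbation accounting: carrying the $P_e$-error through both the per-encoder bound $I(Y_i;U_i\mid X)\ge g(y_i)$ and the multiplicative constraint $\prod_i y_i\le D+(\text{small})$ while staying inside the domain $[p,1]$ on which $g$ is even defined, and then checking that the constants really do collapse to $\tfrac{32n}{p(1-p)}$ and $5$. The delicacy is entirely in the (poorly behaved near $y=p$) modulus of continuity of $g(y^{1/n})$, for which Corollary~\ref{convexcor} --- giving convexity, hence one-sided control of slopes --- is once more the right instrument; the rest is routine.
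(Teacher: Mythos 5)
The paper does not prove this lemma. It is explicitly introduced with the sentence ``The following corollary and lemma, \emph{which we state without proof}, are from \cite{wagnervenkat05},'' and both Corollary~\ref{convexcor} and Lemma~\ref{continuity} are quoted verbatim from Wagner and Anantharam's paper (where the lemma is their Lemma~6). There is therefore no internal proof in this paper to compare against; you would have to check your sketch against the argument in \cite{wagnervenkat05} itself.

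On the merits of the sketch itself, the high-level architecture (establish a ``clean'' per-encoder bound $I(Y_i;U_i\mid X)\ge g(y_i)$, combine with a multiplicative constraint $\prod_i y_i\le D+\text{small}$, optimize via the convexity of $y\mapsto g(y^{1/n})$, then charge the finite-$\lambda$ leakage to $\delta$) is the right shape for this kind of result and is consistent with the spirit of the Wagner--Anantharam CEO converse. However, several load-bearing steps are asserted rather than argued: the per-encoder bound $I(Y_i;U_i\mid X)\ge g(y_i)$ is the hard extremal fact in the whole development and is not ``a single-letter computation of the same shape as the achievability calculation''; the multiplicative bound $H(X\mid\mathbf U)\ge\prod_i y_i$ needs a proof (it is not immediate from conditional independence of the $U_i$ given $X$); and the passage from $\hat X$ being a function of $\mathbf U$ to an \emph{erasure} reconstruction costs something that must be tracked through both $y_i$ and the product constraint. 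Finally, the claim that the Fano-type residues ``sum to exactly $2\delta\log\frac{\delta}{5}$'' and that the constant ``really does collapse to $\frac{32n}{p(1-p)}$'' is exactly the delicate bookkeeping the lemma packages, and in a blind sketch it cannot simply be asserted. In short: plausible outline, but the crux computations are left as promissory notes, and the lemma itself is external to this paper.
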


For the robust binary CEO problem, let $\hat{X}_\mathcal{K}^l$ be the source reconstruction when the subset $\mathcal{K}$ of messages is received at the receiver. We have the following lemma:

\begin{lemma}
\label{contrapositive} Suppose $p^\ell \le D$ and that $(\mathbf{U},X,\hat{X}_\mathcal{K},\mathbf{Y},W,T)$ for all $\mathcal{K}, |\mathcal{K}|= \ell$ is such that
\begin{enumerate}
\item[(i)] $(X,\mathbf{Y},\mathbf{U}_{\mathcal{K}^c},W) \leftrightarrow (\mathbf{U}_\mathcal{K},T) \leftrightarrow \hat{X}_\mathcal{K}$,
\item[(ii)] $U_i \leftrightarrow (Y_i,W,T) \leftrightarrow (X,\mathbf{Y}_{i^c},\mathbf{U}_{i^c})$ for all $i \in \mathcal{N}$, and
\item[(iii)] $\frac{1}{\ell} \sum_{i \in \mathcal{K}} I(Y_i;U_i|X, W, T) \le g(D^{1/\ell})$.
\end{enumerate}
Let $\tilde{D} = \max_{\mathcal{K}:\mathcal{K}=\ell} E[d^\lambda(X,\hat{X}_\mathcal{K})]$. For $\delta \in (0,1/2]$, if
\begin{equation*}
\lambda \ge \max \left[4\left(\frac{32\ell}{\delta p(1-p)}\right)^{2\ell}, \left(\frac{\tilde{D}}{\delta}\right)^2\right],
\end{equation*}
then
\begin{equation*}
\tilde{D} \ge D-\xi(\tilde{D},\delta)
\end{equation*}
for some continuous $\xi \ge 0$ satisfying $\xi(\tilde{D},0)=0$.
\end{lemma}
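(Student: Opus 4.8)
The plan is to treat Lemma~\ref{contrapositive} as the contrapositive of Lemma~\ref{continuity}, lifted to the setting with a common timesharing/auxiliary pair $(W,T)$ and with a subset $\mathcal{K}$ of size $\ell$ playing the role of the full index set. First I would reduce to a single size-$\ell$ subset $\mathcal{K}$ achieving the maximum in the definition of $\tilde D$, so that $E[d^\lambda(X,\hat X_\mathcal{K})] = \tilde D$; the Markov conditions (i) and (ii) then say precisely that, after conditioning on $(W,T)=(w,t)$, the tuple $(\mathbf{U}_\mathcal{K},X,\hat X_\mathcal{K},\mathbf{Y}_\mathcal{K})$ satisfies hypotheses (ii)--(iii) of Lemma~\ref{continuity} with $n$ replaced by $\ell$. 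The subtlety is that Lemma~\ref{continuity} is stated for the unconditioned source, so I would apply it conditionally on each value $(w,t)$ and then average, using that $(W,T)$ is independent of the source and that $g(y^{1/\ell})$ is convex (Corollary~\ref{convexcor}, with $n\mapsto\ell$) together with Jensen's inequality to push the expectation over $(W,T)$ inside $g$.

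The second step is the choice of $\delta$. Lemma~\ref{continuity} requires $\frac{32\ell}{p(1-p)}(2D/\lambda)^{1/\ell}\le\delta\le\frac12$; given a target $\delta\in(0,1/2]$, I would verify that the stated lower bound on $\lambda$, namely $\lambda\ge\max\!\big[4(32\ell/(\delta p(1-p)))^{2\ell},(\tilde D/\delta)^2\big]$, makes this hold. The first term in the max controls $(32\ell/(p(1-p)))(2\cdot\text{something}/\lambda)^{1/\ell}$ once we know the ``$D$'' fed into Lemma~\ref{continuity} is comparable to $\tilde D$ rather than to $D$; here I would apply Lemma~\ref{continuity} with its ``$D$'' taken to be $\tilde D$ (legitimate since $E[d^\lambda(X,\hat X_\mathcal{K})]\le\tilde D$), so the relevant inequality becomes $\frac{32\ell}{p(1-p)}(2\tilde D/\lambda)^{1/\ell}\le\delta$, and the second term $(\tilde D/\delta)^2\le\lambda$ ensures $\tilde D/\lambda$ is small enough (after absorbing the factor $2$ and the power $1/\ell$ into the first term's constant, which is why the $4$ appears). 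One must also check $p^\ell\le\tilde D$: since $\hat X_\mathcal{K}$ cannot do better than the MAP estimate from $\ell$ independent erasures, $\tilde D\ge p^\ell$ automatically, so the hypothesis of Lemma~\ref{continuity} is met.

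The third step assembles the bound. Lemma~\ref{continuity} (conditional version, then averaged) gives
\begin{align*}
\frac{1}{\ell}\sum_{i\in\mathcal{K}} I(Y_i;U_i|X,W,T) \ge g\!\left((\tilde D+\delta)^{1/\ell}\right) + 2\delta\log\frac{\delta}{5}.
\end{align*}
Combining this with hypothesis (iii), $\frac{1}{\ell}\sum_{i\in\mathcal{K}} I(Y_i;U_i|X,W,T)\le g(D^{1/\ell})$, yields $g(D^{1/\ell})\ge g((\tilde D+\delta)^{1/\ell}) + 2\delta\log\frac{\delta}{5}$. Since $g(y^{1/\ell})$ is non-increasing and convex in $y$ (Corollary~\ref{convexcor}), it is in particular continuous and invertible on its support, so I can solve this for a lower bound on $\tilde D$: writing $G(y)=g(y^{1/\ell})$, we get $G(D)\ge G(\tilde D+\delta)+2\delta\log\frac{\delta}{5}$, hence $\tilde D+\delta\ge G^{-1}\!\big(G(D)-2\delta\log\frac{5}{\delta}\big)$, and therefore $\tilde D\ge D-\xi(\tilde D,\delta)$ where $\xi(\tilde D,\delta) := D + \delta - G^{-1}\!\big(G(D)-2\delta\log\frac{5}{\delta}\big)$ (interpreting $G^{-1}$ as the generalized inverse and capping at $0$ from below). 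Continuity of $G$ and $G^{-1}$ on the relevant range, plus $\delta\log\frac{5}{\delta}\to0$ as $\delta\to0$, give that $\xi$ is continuous with $\xi(\tilde D,0)=0$; I would also absorb a harmless additive slack so that $\xi\ge0$ as claimed.

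I expect the main obstacle to be the bookkeeping in the conditional application of Lemma~\ref{continuity}: one has to be careful that conditioning on $(W,T)$ preserves the two Markov chains for \emph{every} value $(w,t)$ (which follows from (i) and (ii) since $(W,T)$ is independent of the source), that the source conditioned on $(W,T)=(w,t)$ is still the i.i.d.\ binary-erasure structure Lemma~\ref{continuity} needs, and that the $\delta$-threshold can be met uniformly in $(w,t)$ with the single $\lambda$ given. The convexity step that lets one average $g$ over $(W,T)$ is the conceptual crux, and invoking Corollary~\ref{convexcor} with $n$ replaced by $\ell$ is exactly what makes it go through.
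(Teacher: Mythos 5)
Your high-level outline matches the paper's argument in several respects: fix one subset $\mathcal{K}$ of size $\ell$, condition on $(W,T)=(w,t)$ so that hypotheses (i)--(ii) of Lemma~\ref{contrapositive} reduce to hypotheses (ii)--(iii) of Lemma~\ref{continuity}, average the resulting conditional bound using the convexity of $g(y^{1/\ell})$ from Corollary~\ref{convexcor}, and finally invert $y\mapsto g(y^{1/\ell})$ against hypothesis (iii). But there is a genuine gap at the center of the argument, precisely at the point you flag as a ``main obstacle'' without resolving it: the threshold condition of Lemma~\ref{continuity} cannot be met uniformly in $(w,t)$ with ``$D$'' taken to be $\tilde D$.

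Once you condition on $(W,T)=(w,t)$, the only admissible $D$ in Lemma~\ref{continuity} (satisfying its hypothesis (i)) is an upper bound on the \emph{conditional} distortion $D_{w,t}=\mathbf{E}[d^\lambda(X,\hat X_\mathcal{K})\mid W=w,T=t]$, not on $\tilde D=\mathbf{E}[D_{W,T}]$. For some $(w,t)$ one may have $D_{w,t}\gg\tilde D$, in which case neither hypothesis (i) with $D=\tilde D$ nor the threshold $\frac{32\ell}{p(1-p)}(2D_{w,t}/\lambda)^{1/\ell}\le\delta$ holds. The paper's resolution --- which is the step you are missing --- is a truncation: define $S=\{(w,t):D_{w,t}\le\sqrt\lambda\}$, apply Lemma~\ref{continuity} only for $(w,t)\in S$ (where the first term of the $\lambda$-lower-bound guarantees the threshold, since $\sqrt\lambda\ge 2(32\ell/(\delta p(1-p)))^\ell$), bound $\Pr(S^c)\le\tilde D/\sqrt\lambda\le\delta$ by Markov's inequality (this is the actual role of the second term $(\tilde D/\delta)^2\le\lambda$, not the role you assign it), and then accept a multiplicative loss of $(1-\delta)$ when dropping the contribution from $S^c$. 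That $(1-\delta)$ factor is what gets absorbed into $\xi$, together with the $+\delta$ and the $2\delta\log(\delta/5)$ terms; your definition of $\xi$ via $G^{-1}$ is workable in spirit but does not account for it. One further caveat in the averaging step: Jensen gives a bound in terms of $\mathbf{E}[D_{W,T}\mid S]$, and one must argue $\mathbf{E}[D_{W,T}\mid S]\le\tilde D$ (true, because conditioning on the event $\{D_{W,T}\le\sqrt\lambda\}$ can only decrease the mean) before monotonicity of $g$ lets you replace it by $\tilde D$. Without the $S$/Markov device the proposal, as written, does not go through.
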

\begin{proof}
See Appendix~\ref{app:contrapositive}.
\end{proof}

We now prove our main result. Define
\begin{equation*}
\mathcal{R}_o(\mathbf{D},\lambda) = \inf\left\{R :
 (R,D_1,\ldots,D_n) \in \overline{\mathcal{RD}_o}(\lambda)\right\},
\end{equation*}
where $\overline{\mathcal{RD}_o}(\lambda)$ is the region given by Definition~\ref{Me:defn} when the distortion measure is $d^\lambda$.

It was shown in \cite[Section 3.2]{wagnervenkat05} that the sum rate of the binary erasure CEO problem with $n$ encoders, given a distortion constraint $D_n$, is\footnote{All logarithms and exponentiations in \cite{wagnervenkat05} have base $e$. Therefore the corresponding sum rate expression in \cite{wagnervenkat05} is $\sum_{i=1}^n R_i = (1-D_n)\log{2} + n \cdot g(D^{\frac{1}{n}}_n)$.}
\begin{displaymath}
\sum_{i=1}^n R_i = (1-D_n) + n \cdot g(D^{\frac{1}{n}}_n).
\end{displaymath}
It follows from this result, that for symmetric descriptions, if the distortion constraint for every subset of $k$ messages is $D_k$ and every message has rate $R$, then the sum rate for any $k$ descriptions is given by
\begin{align*}
kR = (1-D_k) + k \cdot g(D^{\frac{1}{k}}_k),
\end{align*}
which implies
\begin{equation}
\label{eqn:R_eq} R = \frac{(1-D_k)}{k} + g(D^{\frac{1}{k}}_k).
\end{equation}

\begin{theorem}
\label{thm:min_distortion} If $(R,D_1,\ldots,D_n) \in
\overline{\mathcal{RD}}_{CEO},$ and
\begin{align*}
D_k = \inf\Big\{D:(R,1,1,\ldots,1,D,1,\ldots,1) \in
\overline{\mathcal{RD}}_{CEO}\Big\},
\end{align*}
i.e.,
\begin{align*}
\label{eqn:R_eq} R = \frac{(1-D_k)}{k} + g(D^{\frac{1}{k}}_k),
\end{align*}
then
\begin{align*}
D_\ell \ge (D_k)^{\frac{\ell}{k}} \ \textrm{for all } \ell \ge k.
\end{align*}
\end{theorem}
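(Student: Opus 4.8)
The plan is to argue by contradiction: suppose $(R,D_1,\ldots,D_n) \in \overline{\mathcal{RD}}_{CEO}$ with $R = (1-D_k)/k + g(D_k^{1/k})$ but $D_\ell < D_k^{\ell/k}$ for some $\ell \ge k$. I will pick a $\lambda$ large enough (to be determined at the end) and a block code $(f,g)$ achieving, for this $\lambda$, rate arbitrarily close to $R$ and distortions arbitrarily close to $(D_1,\ldots,D_n)$; in particular distortion close to $D_k$ for $k$-subsets and distortion $\tilde D_\ell$ close to $D_\ell < D_k^{\ell/k}$ for $\ell$-subsets. Feed this code into the outer bound of Theorem~\ref{main} (specialized to the robust binary erasure CEO instance described in Section~\ref{sec:CEO_sub}), which produces auxiliary random variables $(\mathbf{U},W,T)$ and reconstructions $\hat X_\mathcal{K}$ satisfying the Markov structure appearing in Lemma~\ref{contrapositive}. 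The key quantities to track are, for each $k$-subset $\mathcal{K}$, the ``sum-rate'' lower bound from $\mathcal{RD}_o(Z,\gamma)$ with $Z$ taken to be $X$ itself (which lies in $\psi$ since the $Y_i = N_i\cdot X$ are conditionally independent given $X$): $kR \gtrsim \max\big(I(X;\mathbf{U}_\mathcal{K}|T),\,I(X;\mathbf{U}_\mathcal{K}|\mathbf{U}_{\mathcal{K}^c},T)\big) + \sum_{i\in\mathcal{K}} I(Y_i;U_i|X,W,T)$.

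The first main step is a counting/averaging argument over the $\binom{n}{k}$ subsets of size $k$ and, separately, the subsets of size $\ell$, to convert the per-subset rate bounds into a statement about the per-encoder conditional mutual informations $I(Y_i;U_i|X,W,T)$. Averaging the $k$-subset bound over all $\binom{n}{k}$ choices of $\mathcal{K}$, and using that $R = (1-D_k)/k + g(D_k^{1/k})$ together with the known sum-rate formula \eqref{eqn:R_eq} and the fact that $I(X;\mathbf{U}_\mathcal{K}|\cdots) \ge H(X) - $ (distortion-dependent term) $\approx 1 - D_k$ — this is where the $(1-D_k)/k$ part of $R$ is ``used up'' — we are forced to conclude that $\frac{1}{k}\sum_{i\in\mathcal{K}} I(Y_i;U_i|X,W,T)$ is at most $g(D_k^{1/k})$ plus a vanishing slack, for (essentially) every $k$-subset $\mathcal{K}$. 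This is exactly hypothesis (iii) of Lemma~\ref{contrapositive} with $\ell$ replaced by $k$, so applying that lemma (with $D = D_k$ there, noting $p^k \le D_k$ holds in the nondegenerate regime) gives back $\tilde D_k \ge D_k - \xi$, i.e. the code really is essentially optimal for $k$-subsets and cannot do better; more importantly it pins down $\frac{1}{n}\sum_{i=1}^n I(Y_i;U_i|X,W,T)$ near $g(D_k^{1/k})$.

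The second main step handles the $\ell$-subsets. By the same averaging, for a typical $\ell$-subset $\mathcal{K}$ we have $\frac{1}{\ell}\sum_{i\in\mathcal{K}} I(Y_i;U_i|X,W,T) \le \frac{1}{\ell}\sum_{i\in\mathcal{N}}I(Y_i;U_i|X,W,T)\cdot\frac{n}{\ell}$ — more carefully, the average over $\ell$-subsets of $\frac1\ell\sum_{i\in\mathcal{K}}I(Y_i;U_i|X,W,T)$ equals $\frac1n\sum_{i\in\mathcal{N}}I(Y_i;U_i|X,W,T) \approx g(D_k^{1/k})$. Now invoke the convexity of $g(y^{1/\cdot})$: by Corollary~\ref{convexcor}, $g(y^{1/\ell})$ is convex and non-increasing in $y$, so $g(D_k^{1/k}) = g\big((D_k^{\ell/k})^{1/\ell}\big)$, and the contrapositive form of Lemma~\ref{contrapositive} applied with $D = D_k^{\ell/k}$ and $\ell$ — whose hypothesis (iii) $\frac{1}{\ell}\sum_{i\in\mathcal{K}} I(Y_i;U_i|X,W,T) \le g\big((D_k^{\ell/k})^{1/\ell}\big) = g(D_k^{1/k})$ is now verified — yields $\tilde D_\ell \ge D_k^{\ell/k} - \xi(\tilde D_\ell,\delta)$. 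Letting the code's slack, $\epsilon$, and then $\delta$ tend to $0$ (choosing $\lambda$ at each stage large enough that the hypotheses of Lemma~\ref{contrapositive} on $\lambda$ are met — this is why the definition of $\mathcal{RD}_{CEO}$ intersects over all $\lambda \ge 1$), and using continuity of $\xi$ with $\xi(\cdot,0)=0$, we obtain $D_\ell \ge D_k^{\ell/k}$, contradicting the assumption and proving the theorem.

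The step I expect to be the main obstacle is the bookkeeping in the first averaging step: extracting from the per-$k$-subset sum-rate inequalities — which involve the two-sided $\max$, the term $I(X;\mathbf{U}_\mathcal{K}|\cdots)$ that must be lower-bounded by roughly $1-D_k$ via a Fano/distortion argument on $\hat X_\mathcal{K}$, and the auxiliary $W,T$ — a clean bound of the form $\frac1k\sum_{i\in\mathcal{K}} I(Y_i;U_i|X,W,T)\le g(D_k^{1/k})+o(1)$ uniformly enough over subsets that the second-step average over $\ell$-subsets goes through. Handling the $\max(\cdot,\cdot)$ correctly (one needs the unconditioned version $I(Z;\mathbf{U}_\mathcal{K}|Y_{n+1},T)$ precisely because the robust problem forbids conditioning on $\mathbf{U}_{\mathcal{K}^c}$, which is the whole point of the improved outer bound Theorem~\ref{main}) and quantifying all the $o(1)$ slacks in terms of a single $\delta$ that can be sent to $0$ after $\lambda \to \infty$ is the delicate part; everything else is convexity (Corollary~\ref{convexcor}) plus the already-established Lemma~\ref{contrapositive}.
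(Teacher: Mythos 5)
Your proposal follows essentially the same route as the paper: specialize the outer bound of Theorem~\ref{main} with $Z = X$, use a Fano-type argument to lower-bound $I(X;\mathbf{U}_\mathcal{K}|T)$ by roughly $1-D_k$ (absorbing the $(1-D_k)/k$ part of the rate), deduce $\frac{1}{k}\sum_{i\in\mathcal{K}} I(Y_i;U_i|X,W,T) \lesssim g(D_k^{1/k})$ for every $k$-subset, extend this to an $\ell$-subset bound, and finish with Lemma~\ref{contrapositive} plus the double limit $\lambda\to\infty$, then $\delta\to 0$. The one place you diverge from the paper is the passage from $k$-subset bounds to an $\ell$-subset bound: you average the per-$k$-subset inequality over all $\binom{n}{k}$ subsets to get $\frac{1}{n}\sum_{i=1}^n I(Y_i;U_i|X,W,T) \lesssim g(D_k^{1/k})$, and then observe that the average over $\ell$-subsets of $\frac{1}{\ell}\sum_{i\in\mathcal{K}}I(\cdot)$ equals $\frac{1}{n}\sum_i I(\cdot)$, so some $\ell$-subset obeys the bound. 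The paper instead uses a sorting argument within a fixed $\ell$-subset: order the $\ell$ messages by $I(Y_i;U_i|X,W,T)$; the top $k$ form a $k$-subset obeying the $k$-subset bound, and the average over all $\ell$ messages cannot exceed the average over the top $k$. The sorting argument yields the bound \emph{uniformly} for every $\ell$-subset (matching hypothesis (iii) of Lemma~\ref{contrapositive} as stated), while your averaging yields it only for \emph{some} $\ell$-subset; this still suffices because the proof of Lemma~\ref{contrapositive} invokes (iii) only for the WLOG-chosen subset and its conclusion concerns $\tilde{D} = \max_\mathcal{K} \mathbf{E}[d^\lambda(X,\hat{X}_\mathcal{K})]$, which lower-bounds $D_\ell$ --- but you should make that point explicit rather than asserting (iii) is ``now verified.'' Finally, the side remark about applying Lemma~\ref{contrapositive} with $\ell=k$ to ``pin down'' $\frac{1}{n}\sum_i I(Y_i;U_i|X,W,T)$ near $g(D_k^{1/k})$ is unnecessary: only the upper bound on that sum is used, and it follows directly from the rate inequality without a matching lower bound.
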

\begin{proof}
It suffices to prove Theorem~\ref{thm:min_distortion} for a single subset of messages of size $\ell \ge k$. Fix $\delta \in (0,1/2]$, and suppose $\lambda$ satisfies
\begin{equation*}
\label{biglambda} \lambda \ge \max\left[4\left(\frac{32 \ell}{\delta p
(1-p)}\right)^{2\ell},
   \left(\frac{D_k}{\delta}\right)^2\right].
\end{equation*}
It follows from taking $Z = X$ in the definition of $\mathcal{RD}_o(Z,\gamma)$ (Definition~\ref{Me:defn}) and from the monotonicity of $\mathcal{R}_o(\mathbf{D},\lambda)$ with respect to $\lambda$ that there exist $R \in \mathbb{R}^+$ and $\gamma \in \Gamma_o$ such that, for all subsets $\mathcal{K}$ of size $k$,
\begin{equation}
\label{sumrate}
\begin{split}
D_k + \delta & \ge E[d^\lambda(X,\hat{X}_\mathcal{K})], \ \text{and} \\
kR + \delta & \ge k\mathcal{R}_o(\mathbf{D},\lambda) + \delta \ge I(X;\mathbf{U}_\mathcal{K}|T) + \sum_{i \in \mathcal{K}} I(Y_i;U_i|X,W,T).
\end{split}
\end{equation}
From (\ref{eqn:R_eq}) and (\ref{sumrate}), it follows that
\begin{equation}
\label{eqn:ob_constraint} \frac{I(X;\mathbf{U}_\mathcal{K}|T)}{k} +
\frac{1}{k}\sum_{i \in \mathcal{K}} I(Y_i;U_i|X,W,T) \le
\frac{(1-D_k)}{k} + g(D^{\frac{1}{k}}_k) + \frac{\delta}{k}.
\end{equation}
Now by the data processing inequality,
\begin{align*}
I(X;\mathbf{U}_\mathcal{K}|T) &= I(X;\mathbf{U}_\mathcal{K},T) \\
&\ge I(X;\hat{X}_\mathcal{K}).
\end{align*}

Let $\varepsilon = 1(X\cdot \hat{X}_\mathcal{K} = -1)$. We then have
\begin{align*}
I(X;\mathbf{U}_\mathcal{K}|T) & \ge H(X) - H(X|\hat{X}_\mathcal{K}) \\
       & =1 - H(X,\varepsilon|\hat{X}_\mathcal{K}) \\
       & = 1 - H(\varepsilon|\hat{X}_\mathcal{K}) - H(X|\varepsilon,\hat{X}_\mathcal{K}) \\
       & \ge 1 - h(D_k/\lambda) - \Pr(\hat{X}_\mathcal{K} = 0) \\
       & \ge (1 -D_k) - h(\delta).
\end{align*}
Using this and (\ref{eqn:ob_constraint}), we can upper bound
$\frac{1}{k}\sum_{i \in \mathcal{K}} I(Y_i;U_i|X, W, T)$ as follows:
\begin{equation}
\label{eqn:up_bound} \frac{1}{k}\sum_{i \in \mathcal{K}} I(Y_i;U_i|X, W, T)
\le g(D^{\frac{1}{k}}_k) + \frac{h(\delta)}{k} + \frac{\delta}{k}.
\end{equation}
We will show
\begin{align}
\label{eqn:mutual_inf_ub}
\frac{1}{\ell}\sum_{i=1}^\ell I(Y_i;U_i|X, W, T) \le g(D^{\frac{1}{k}}_k) + \frac{h(\delta)}{k} + \frac{\delta}{k}, \ \ell \ge k.
\end{align}
Suppose the $U_i$ are ordered according to the mutual informations $I(Y_i;U_i|X,W,T)$, i.e., we have an ordered list of messages $U_1,\ldots,U_\ell$ in which, for all $i,j \in \{1,\ldots,\ell\}, U_i$ and $U_j$ are such that $I(Y_i;U_i|X,W,T) \le I(Y_j;U_j|X,W,T)$ when $i \le j.$ The last $k$ elements of this list, $U_{\ell-k+1},\ldots,U_\ell,$ must satisfy (\ref{eqn:up_bound}), i.e.,
\begin{equation}
\label{eqn:ordered_k} \frac{1}{k}\sum_{i = \ell-k+1}^\ell
I(Y_i;U_i|Y_0, W, T) \le g(D^{\frac{1}{k}}_k) + \frac{h(\delta)}{k}
+ \frac{\delta}{k}.
\end{equation}
All other elements in the list yield equal or strictly smaller mutual informations. Therefore, if we average over a larger subset of messages, the average will never increase. We thus have
\begin{equation*}
\frac{1}{\ell}\sum_{i = 1}^\ell I(Y_i;U_i|X, W, T) \le
\frac{1}{k}\sum_{i = \ell-k+1}^\ell I(Y_i;U_i|X, W, T).
\end{equation*}
Using this and (\ref{eqn:ordered_k}), we obtain \eqref{eqn:mutual_inf_ub}. Define
\begin{align*}
\label{eqn:inv_g} \left(D_k - \zeta(D_k,\delta)\right)^{\frac{1}{k}} &=
g^{-1}\left(g(D^{\frac{1}{k}}_k)+\frac{h(\delta)}{k} +
\frac{\delta}{k}\right)
\end{align*}
for some continuous $\zeta \ge 0$ satisfying $\zeta(D_k,0)=0$. We then have
\begin{equation}
\label{eqn:bound} \frac{1}{\ell}\sum_{i=1}^\ell I(Y_i;U_i|X,W,T)
\le g((D_k-\zeta(D_k,\delta))^{\frac{1}{k}}).
\end{equation}
From (\ref{eqn:bound}), we obtain, by using Lemma~\ref{contrapositive},
\begin{align*}
D_\ell & \ge (D_k - \zeta(D_k,\delta))^{\frac{\ell}{k}} - \xi(D_\ell,\delta)
\end{align*}
for some continuous $\xi \ge 0$ satisfying $\xi(D_\ell,0) = 0$. The proof is completed by letting $\lambda \rightarrow \infty$ and then $\delta \rightarrow 0$.
\end{proof}

\appendices
\section{Preliminaries}
\label{app:prelim}
We define a multi-letter mutual information as follows:
\begin{eqnarray*}
I_K(X_1;X_2;\ldots;X_K) &=& D\left(p(X_1,\ldots,X_K)||\prod_{i=1}^K
p(X_i)\right) \\
&=& \sum_{i=1}^K H(X_i) - H(X_1,\ldots,X_K).
\end{eqnarray*}

In particular, $I_1(X) = 0$. The multi-letter mutual information, as defined above, is a measure of the mutual dependence among $K$ random variables and is different from McGill's multivariate mutual information~\cite{mcgill:multi_info_trans}. We note the following properties of $I_K(X_1;X_2;\ldots;X_K)$.

\begin{enumerate}
\item $I_K(X_1^l;\ldots;X_K^l) = \sum_{i=1}^K H(X_i^l) - H(X_1^l,\ldots,X_K^l) \geq 0.$

\item $I_K(X_1;\ldots;X_K) \geq I_m(X_1;\ldots;X_m) + I_{(K-m+1)}(f(X_1,\ldots,X_m);X_{m+1};\ldots;X_K)$, where $f(X_1,\ldots,X_m)$ is a function of the random variables $X_1,\ldots,X_m$, $m < K$.
    \newline
    Remark: This property holds by symmetry for the general case when $f(\cdot)$ is a function of any size-$m$ subset of $X_1,\ldots,X_K$.
\begin{proof}
\begin{align*}
&\phantom{=}I_K(X_1;\ldots;X_K) \\
&= \sum_{i=1}^m H(X_i) + \sum_{i=m+1}^K H(X_i) - H(X_1,\ldots,X_m) 
- H(X_{m+1},\ldots,X_K|X_1,\ldots,X_m) \\
&= I_m(X_1;\ldots;X_m) + \sum_{i=m+1}^K H(X_i) 
- H(X_{m+1},\ldots,X_K|X_1,\ldots,X_m) \\
&= I_m(X_1;\ldots;X_m) + \sum_{i=m+1}^K H(X_i) 
- H(X_{m+1},\ldots,X_K|X_1,\ldots,X_m,f(X_1,\ldots,X_m)) \\
&\ge I_m(X_1;\ldots;X_m) + \sum_{i=m+1}^K H(X_i) 
- H(X_{m+1},\ldots,X_K|f(X_1,\ldots,X_m)) \\
&= I_m(X_1;\ldots;X_m) 
+ I_{(K-m+1)}(f(X_1,\ldots,X_m);X_{m+1};\ldots;X_K),
\end{align*}
where the solitary inequality holds because conditioning never increases entropy.
\end{proof}

\item $I_K(X_1;X_2;\ldots;X_i;\ldots;X_K) \geq I_K(X_1;X_2;\ldots;f(X_i);\ldots;X_K)$, where $f(X_i)$ is a function of the random variable $X_i$. This is the data processing inequality for the multi-letter mutual information and is a special case of Property~2.
\end{enumerate}

\section{Proof of Theorem \ref{thm:conv_avg-case}}
\label{app:conv_avg-case}
The proof of the first part of Theorem~\ref{thm:conv_avg-case} is simple. Let $D_k \ge 1-\frac{k}{n}$. No excess rate for every $k$ descriptions implies that every description has rate $R$. If the decoder receives $m$ descriptions, then it receives a sum-rate of $mR$ bits per source symbol. Using the point-to-point rate-distortion function for a binary source with erasure distortion, we get $D_m \ge 1 - mR$.

The proof of the second part of Theorem~\ref{thm:conv_avg-case} is less trivial. We begin with a lemma.

\begin{definition}
Let $X$ be a binary random variable taking values in $\mathcal{X}$. An \emph{erased version} of $X$ is a random variable $\tilde{X}$, taking values in $\mathcal{\hat{X}}$, such that $\Pr(\tilde{X}=+,X=-) = \Pr(\tilde{X}=-,X=+) = 0$.
\end{definition}

\begin{lemma}
\label{lemma:cases_n_k}
Let $X_1,\ldots,X_n$ be erased versions of a uniform binary random variable $X$ taking values in $\{+,-\}$. If $\left(1 - \frac{1}{n}\right)^k \le \frac{1}{2}$ and $I_k(X_{s_1};\ldots;X_{s_k}) = 0 \quad \forall \ S = \{s_1,\ldots,s_k\},S \subset \mathcal{N}, |S| = k$, then $\sum_{i=1}^n \Pr(X_i=0) \ge n-1$.
\end{lemma}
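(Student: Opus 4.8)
The plan is to exploit how rigidly $k$-wise independence (hence, since $k\ge 2$, pairwise independence) constrains a family of erased versions of a single uniform bit, and then close with a one-line second-moment estimate. First I would parametrize each erased version by $\alpha_i=\Pr(X_i=+\mid X=+)$ and $\beta_i=\Pr(X_i=-\mid X=-)$: the erased-version property gives $\Pr(X_i=-\mid X=+)=\Pr(X_i=+\mid X=-)=0$, so since $X$ is uniform, $\Pr(X_i=0)=1-\tfrac12(\alpha_i+\beta_i)$, $\Pr(X_i=+)=\tfrac12\alpha_i$ and $\Pr(X_i=-)=\tfrac12\beta_i$. The target bound $\sum_i\Pr(X_i=0)\ge n-1$ is then equivalent to $\sum_{i=1}^n(\alpha_i+\beta_i)\le 2$. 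Because $I_k(X_{s_1};\dots;X_{s_k})=0$ on every size-$k$ subset and $k\ge 2$, every pair $\{i,j\}$ lies inside such a subset, so $X_i$ and $X_j$ are independent for all $i\neq j$.

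Next I would extract the structural reduction. Since $\Pr(X_i=+,X=-)=0$ and $\Pr(X_j=-,X=+)=0$, we have $\{X_i=+\}\subseteq\{X=+\}$ and $\{X_j=-\}\subseteq\{X=-\}$ almost surely, so $\Pr(X_i=+,X_j=-)=0$; independence of $X_i,X_j$ then forces $\alpha_i\beta_j=0$ (and symmetrically $\beta_i\alpha_j=0$) for all $i\neq j$. Two cases result. If some $X_{i_0}$ is two-sided, i.e.\ $\alpha_{i_0}>0$ and $\beta_{i_0}>0$, then $\alpha_j=\beta_j=0$ for every $j\neq i_0$, so $\sum_i(\alpha_i+\beta_i)=\alpha_{i_0}+\beta_{i_0}\le 2$ and we are done. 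Otherwise every $X_i$ is one-sided; moreover a $+$-revealer ($\alpha_i>0$) together with a $-$-revealer ($\beta_j>0$), $i\neq j$, would contradict $\alpha_i\beta_j=0$, so all nonzero reveals share one sign. By the $+\!\leftrightarrow\!-$ symmetry we may assume $\beta_i=0$ for all $i$, i.e.\ each $X_i$ takes values in $\{+,0\}$ with $\{X_i=+\}\subseteq\{X=+\}$, and it remains to show $\sum_i\alpha_i\le 2$, equivalently $\sum_i\Pr(X_i=+)\le 1$.

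For the final step I would set $N=\sum_{i=1}^n \mathbf{1}\{X_i=+\}$ and $\mu=\mathbf{E}[N]=\sum_i\Pr(X_i=+)$. Since each $\{X_i=+\}\subseteq\{X=+\}$ a.s., $N=N\cdot\mathbf{1}\{X=+\}$; and by pairwise independence $\mathbf{E}[N^2]=\mu+\sum_{i\neq j}\Pr(X_i=+)\Pr(X_j=+)\le \mu+\mu^2$. Cauchy--Schwarz then gives $\mu^2=\bigl(\mathbf{E}[N\,\mathbf{1}\{X=+\}]\bigr)^2\le \mathbf{E}[N^2]\,\Pr(X=+)\le \tfrac12(\mu+\mu^2)$, whence $\mu\le 1$, as required. (Equivalently, the Chung--Erd\H{o}s inequality yields $\Pr\bigl(\bigcup_i\{X_i=+\}\bigr)\ge \mu/(1+\mu)$, which must be at most $\Pr(X=+)=\tfrac12$.) I expect the only delicate point to be the structural reduction in the second paragraph: one must check that pairwise independence alone already pins down the one-sided structure (so the hypothesis $(1-1/n)^k\le\tfrac12$ is used only insofar as it forces $k\ge 2$), and that the degenerate subcases — some $X_i$ identically $0$, or only one nonzero $X_i$ — are subsumed by the cases above. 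Everything else is routine.
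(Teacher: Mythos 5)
Your proof is correct, and it follows a genuinely different and in fact strictly stronger route than the paper's. The paper's proof of this lemma splits into the same four cases you do (your Case~A is its Case~I; your Case~B is its Cases~II/III), but in Case~II it invokes full $k$-wise independence to deduce $\prod_{j\in S}\Pr(X_{s_j}=0)\ge\tfrac12$ for every size-$k$ subset $S$, and then minimizes $\sum_i\Pr(X_i=0)$ subject to all $\binom{n}{k}$ product constraints by convexity; the optimum is $n(1/2)^{1/k}$, and the hypothesis $(1-1/n)^k\le\tfrac12$ enters at exactly that point to turn $n(1/2)^{1/k}$ into $n-1$. You instead close Case~B with a second-moment estimate (Cauchy--Schwarz, or equivalently Chung--Erd\H{o}s) that uses only pairwise independence and the containment $\{X_i=+\}\subseteq\{X=+\}$, and never touches $(1-1/n)^k\le\tfrac12$; the conclusion $\mu\le 1$ is exactly what is needed. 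This buys something real: your argument establishes the lemma for every $k\ge 2$, not merely those with $(1-1/n)^k\le\tfrac12$. Carried through the converse it yields $D_1^*\ge 1-\tfrac1n$ for all $n$ and $k=2$, which strictly improves the paper's Chebyshev-based bound $D_1^*\ge 1-\tfrac2n$ for $k=2$, $n>3$ (and the underlying lemma, which only gives $\sum_i\Pr(X_i=0)\ge n-2$ there), subsumes its bespoke $n=4,k=2$ argument, and in fact resolves, in the affirmative, the conjecture the paper states immediately after that Chebyshev theorem. The one caveat you flag is mildly misstated: for $n=2$ the hypothesis $(1-1/n)^k\le\tfrac12$ does not literally force $k\ge 2$ (it admits $k=1$), but $k=1$ renders $I_k=0$ vacuous and the lemma false, so $k\ge 2$ is an implicit standing assumption in the paper's proof as well; this does not affect the validity of your argument.
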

\begin{proof}
$\left(1 - \frac{1}{n}\right)^k \le \frac{1}{2} \Rightarrow \left(\frac{1}{2}\right)^{\frac{1}{k}} \ge 1-\frac{1}{n}.$
We have the following four cases:
\newline
\textbf{Case I}: There exists $i \in \mathcal{N}$ such that $\Pr(X_i = +) > 0$ and $\Pr(X_i = -) > 0$.
\newline
Assume $i=1$ without loss of generality. Since $X_1,\ldots,X_n$ are erased versions of the same variable, they can never disagree in the source symbol they reveal (i.e., if $X_i = +$ for some $i \in \mathcal{N}$, then the rest cannot be $-$, and if $X_i = -$, then the rest cannot be $+$). Thus $\Pr(X_1 = +, X_j = -) = 0$, $j \in \{2,\ldots,n\}$. Since $I_k(X_{s_1};\ldots;X_{s_k}) = 0$ for any set of $k$ variables containing $X_1$ and $X_j$,  $X_1$ and $X_j$ must be independent. Thus
\beqa
\nonumber \Pr(X_1=+)\cdot\Pr(X_j = -) &=& \Pr(X_1 = +, X_j = -) = 0 \\
\label{eqn:pr1_ub_avg}&\Rightarrow& \Pr(X_j = -) = 0.
\eeqa
Likewise, $\Pr(X_1=-,X_j=+)=0 \Rightarrow \Pr(X_j = +) = 0$. Thus $\Pr(X_j=0)=1$ and so $\sum_{i=1}^n \Pr(X_i=0) \ge n-1$.
\newline
\textbf{Case II}: There exists $i \in \mathcal{N}$ such that $\Pr(X_i = +) > 0$ and $\Pr(X_i = -) = 0$, and Case I does not hold.
\newline
Let $S = \{s_1,\ldots,s_k\}$ be a size-$k$ subset of $\mathcal{N}$. For all $\mathcal{T} \subset S$, denote by $E_{\mathcal{T}}$ the event that $X_{s_j} = - \ \forall \ s_j \in \mathcal{T}$, and $X_{s_j}=0 \ \forall \ s_j \notin \mathcal{T}, \ s_j \in S$. Now since $\Pr(X_{s_j} = -) = 0$ from \eqref{eqn:pr1_ub_avg}, $\Pr(E_\mathcal{T}) = 0 \ \forall \ \mathcal{T}\neq \emptyset$. Thus
\beqa
\nonumber \Pr(X = -) &\le& \sum_{\mathcal{T} \subset S} \Pr(E_{\mathcal{T}}) \\
\label{eqn:source1_lb} &=& \Pr(X_{s_1}=X_{s_2}=\ldots=X_{s_k}=0).
\eeqa
Since $\Pr(X=-) = 1/2$ and $(X_{s_1},\ldots,X_{s_k})$ are independent, (\ref{eqn:source1_lb}) yields
\beqano
\prod_{j=1}^k \Pr(X_{s_j} = 0) =  \Pr(X_{s_1}=X_{s_2}=\ldots=X_{s_k}=0) \ge \frac{1}{2}.
\eeqano
In order to lower bound $\sum_{i=1}^n \Pr(X_i=0)$, we solve
\beqano
\min &\sum_{j=1}^n& \Pr(X_j=0) \\
\textrm{s.t.} \ \ &\prod_{j=1}^k& \Pr(X_{s_j} = 0) \ge \frac{1}{2} \quad \forall \ S = \{s_1,\ldots,s_k\} \subset \mathcal{N}.
\eeqano
This is a convex optimization problem, as can be readily seen by substituting $\alpha_j = \log \Pr(X_j = 0)$, and can therefore be solved by choosing $\Pr(X_j =0)=\left(\frac{1}{2}\right)^{\frac{1}{k}}$ for $j = 1,\ldots,n$. Thus $\sum_{j=1}^n \Pr(X_j=0) \ge n\left(\frac{1}{2}\right)^{\frac{1}{k}} \ge n(1-1/n) = n - 1$.
\newline
\textbf{Case III}: There exists $i \in \mathcal{N}$ such that $\Pr(X_i = -) > 0$ and $\Pr(X_i = +) = 0$, and Case I does not hold.
\newline
This case is symmetric to Case II.
\newline
\textbf{Case IV}: For all $i \in \mathcal{N}$, $\Pr(X_i = +) = \Pr(X_i = -) = 0$.
\newline
We have $\sum_{j=1}^n \Pr(X_j=0) > \sum_{j=2}^n \Pr(X_j=0) = n - 1$.
\end{proof}
We are now in a position to prove the second part of Theorem~\ref{thm:conv_avg-case}. Let $D_k < 1-\frac{k}{n}$, $D_k$ rational, and $\left(1 - \frac{1}{n}\right)^k \le \frac{1}{2}$, and let $f_i$, $i\in \mathcal{N}$ and $g_\mathcal{K}$, $\mathcal{K} \subseteq \mathcal{N}$, $\mathcal{K}\neq \emptyset$ be a code that achieves the rate-distortion vector $(R,D_1,\ldots,D_k,\ldots,D_n)$. Let $f_i, \ i \in \mathcal{N}$ have rate $R$. We have $$lR \ge H(f_i), \ i \in \mathcal{N}.$$ Let $X_\mathcal{S}^l$ be the reconstruction when the source is reconstructed from a set $\mathcal{S}$ of descriptions. Then $\forall \ S = \{s_1,\ldots,s_k\} \subset \mathcal{N},|S| = k$, we have $$H(f_{s_1}\ldots f_{s_k}) \ge H(X_\mathcal{S}^l) = l(1-D_k).$$
Thus
\begin{align*}
I_k(f_{s_1};\ldots;f_{s_k}) &= \sum_{j=1}^k H(f_{s_j}) - H(f_{s_1}\ldots f_{s_k}) \\
&\le klR - l(1-D_k) = 0.
\end{align*}

Let $X_{s_i}^l$ be the reconstruction when the decoder receives the $s_i^{th}$ description only. Then $I_k(X_{s_1}^l;\ldots;X_{s_k}^l) \le I_k(f_{s_1};\ldots;f_{s_k})= 0$ (Property~3) and so $I_k(X_{s_1,t};\ldots;X_{s_k,t}) = 0$, $t\in \{1,\ldots,l\}$. By Lemma~\ref{lemma:cases_n_k}, $\sum_{i=1}^n \Pr(X_{it}=0) \ge (n-1)$ for $t \in \{1,\ldots,l\}$. Thus
\beqano
\frac{1}{l}\sum_{t=1}^l \sum_{i=1}^n \Pr(X_{it}=0) &\ge& n-1 \\
\Rightarrow \max_{i} \left(\frac{1}{l}\sum_{t=1}^l \Pr(X_{it}=0)\right) &\ge& 1-\frac{1}{n}.
\eeqano
This completes the proof.

\section{Proof of Theorem~\ref{thm:(4,2)_avg-case}}
\label{app:(4,2)_avg-case}
We establish two lemmas before proving Theorem~\ref{thm:(4,2)_avg-case}.
\begin{lemma}
\label{lemma:case2_n_3_4}
Let $X_1,X_2,$ and $X_3$ be Bernoulli random variables such that $I(X_i;X_j) = 0$, $\forall \textrm{ } i,j \in \{1,2,3\}, i\neq j$, and $\Pr(X_1=X_2=X_3=0) \ge \frac{1}{2}$. Let $p = \max (\Pr(X_1=0),\Pr(X_2=0))$. Then
\beqano
\label{eqn:lemma_n34}
\Pr(X_3=0) \ge \frac{1}{2} + \frac{p(1-p)}{2p-1}.
\eeqano
\end{lemma}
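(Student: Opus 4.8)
I would pass to the events $A_i := \{X_i = 0\}$, writing $a := \Pr(A_1)$, $b := \Pr(A_2)$, $c := \Pr(A_3)$; the hypotheses are used only through the pairwise independence $\Pr(A_i \cap A_j) = \Pr(A_i)\Pr(A_j)$ (which is what $I(X_i;X_j)=0$ buys us) and through $\Pr(A_1 A_2 A_3) \ge \tfrac12$. First I would record the elementary consequences: since $\Pr(A_1 A_2 A_3) \le \Pr(A_i A_j) = \Pr(A_i)\Pr(A_j)$ for each pair, all three of $ab,ac,bc$ are $\ge \tfrac12$, so $a,b,c \ge \tfrac12$ and $a+b \ge 2\sqrt{ab} > 1$. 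After relabelling so that $a \ge b$ (hence $p = a$), this also forces $b \in [\tfrac1{2p},\,p]$ and $p > \tfrac12$, so $2p-1 > 0$, and moreover $b+p-1 \ge \tfrac1{2p}+p-1 > 0$.

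The crux is to apply inclusion--exclusion to the \emph{complementary} event $A_1^c \cap A_2^c \cap A_3^c$, whose probability is non-negative:
\[
0 \;\le\; 1 - (a+b+c) + (ab+ac+bc) - \Pr(A_1 A_2 A_3).
\]
Feeding in $\Pr(A_1 A_2 A_3) \ge \tfrac12$ and dividing by $a+b-1>0$ gives
\[
c \;\ge\; \frac{\tfrac12 - (1-a)(1-b)}{a+b-1} \;=:\; f(b).
\]
It then suffices to show $f(b) \ge \tfrac12 + \tfrac{p(1-p)}{2p-1}$ for $b$ in its admissible range, with $a = p$ fixed. Rewriting $f$ as a Möbius function of $b$,
\[
f(b) \;=\; (1-p) + \frac{p^2 - p + \tfrac12}{\,b + p - 1\,},
\]
in which $p^2 - p + \tfrac12 > 0$ (its discriminant is $-1$) and $b + p - 1 > 0$ is increasing in $b$, so $f$ is non-increasing; hence $f(b) \ge f(p)$, and a one-line simplification identifies $f(p) = \tfrac{\tfrac12 - (1-p)^2}{2p-1}$ with $\tfrac12 + \tfrac{p(1-p)}{2p-1}$. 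Chaining the two displays finishes the proof.

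The only real decision point is picking the right inequality among the constraints on the consistent tuples $(a,b,c,\Pr(A_1A_2A_3))$: crude Bonferroni bounds such as $\Pr(A_1A_2A_3) \le c$ yield merely $c \ge \tfrac12$, whereas the inclusion--exclusion bound on $\Pr(A_1^cA_2^cA_3^c)$ turns out to be exactly tight here (equality when $a=b=p$). Everything past that point is routine single-variable algebra, including the verification that $f$ is monotone in $b$ and that its endpoint value matches the claimed expression.
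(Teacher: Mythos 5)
Your proof is correct, and it takes a genuinely different route from the paper's. The paper works through the conditional probabilities $q_0 = \Pr(X_3=0 \mid X_1=0,X_2=0)$, $q_1 = \Pr(X_3=0 \mid X_1=0,X_2=1)$, $q_2 = \Pr(X_3=0 \mid X_1=1,X_2=1)$, uses the two pairwise independences $(X_1,X_3)$ and $(X_2,X_3)$ to write $p_3$ in two ways, solves for $q_2$, and applies the constraint $q_2 \le 1$ together with $p_1 p_2 q_0 \ge \tfrac12$ and $p_1+p_2 \le 2p_1$. Your argument reaches the same conclusion by applying inclusion--exclusion to $A_1^c \cap A_2^c \cap A_3^c$ and using $\Pr(A_1^c A_2^c A_3^c)\ge 0$. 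It is worth noting that the underlying constraint is actually identical: $q_2\le 1$ is precisely $\Pr(X_1=X_2=X_3=1)\ge 0$. What your framing buys is that everything stays in terms of the unconditional probabilities $a,b,c$ and the triple joint, so you bypass the conditional-probability bookkeeping entirely; the subsequent monotonicity-in-$b$ step is also cleaner than the paper's chain of substitutions and sign checks. Your handling of the preliminary facts ($p>\tfrac12$, $a+b>1$, $b+p-1>0$, $p^2-p+\tfrac12>0$) is correct, and the endpoint evaluation $f(p)=\tfrac12+\tfrac{p(1-p)}{2p-1}$ checks out. One cosmetic remark: the paper disposes of the degenerate case $p=1$ first; your argument in fact covers it uniformly since $b+p-1=b>0$ there, so no case split is needed.
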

\begin{proof}
If $p = 1$, then the conclusion follows directly from the hypothesis,
so suppose that $p < 1$.
Let $p_i$ denote $\Pr(X_i=0)$, $p(x_1,x_2,x_3)$ denote $\Pr(X_1=x_1,X_2=x_2,X_3=x_3)$, and $p_{x_3|x_1,x_2}$ denote $\Pr(X_3=x_3|X_1=x_1,X_2=x_2).$ Let $q_0=p_{0|0,0}$, $q_1=p_{0|0,1}$, and $q_2=p_{0|1,1}$. We thus have $p(0,0,0) = p_1p_2q_0$, $p(0,1,0) = p_1(1-p_2)q_1$, and $p(1,1,0) = (1-p_1)(1-p_2)q_2$. Then
\beqa
\nonumber \Pr(X_1=0,X_3=0)&=&p(0,0,0)+p(0,1,0) \\
\label{eqn:joint_bound_first} &=& p_1(p_2q_0+(1-p_2)q_1) \\
\newline
\nonumber \Pr(X_2=1,X_3=0)&=&p(0,1,0)+p(1,1,0) \\
\label{eqn:joint_bound_last} &=& (1-p_2)(p_1q_1+(1-p_1)q_2).
\eeqa
Since $(X_1,X_3)$ and $(X_2,X_3)$ are pairwise independent, we have, from \eqref{eqn:joint_bound_first}  and \eqref{eqn:joint_bound_last},
\beqa
\nonumber \Pr(X_1=0,X_3=0) &=& p_1p_3 = p_1(p_2q_0+(1-p_2)q_1) \\
\label{eqn:p3_bound_1}\Rightarrow p_3 &=& p_2q_0+(1-p_2)q_1, \\
\newline
\nonumber \Pr(X_2=1,X_3=0) &=& (1-p_2)p_3 \\
\nonumber &=& (1-p_2)(p_1q_1+(1-p_1)q_2) \\
\label{eqn:p3_bound_4}\Rightarrow p_3 &=&  p_1q_1+(1-p_1)q_2.
\eeqa
From (\ref{eqn:p3_bound_1}) and (\ref{eqn:p3_bound_4}),
\beqa
\nonumber p_1q_1+(1-p_1)q_2 &=& p_2q_0+(1-p_2)q_1 \\
\Rightarrow\label{eqn:q2}q_2 &=& \frac{p_2q_0 - (p_1+p_2-1)q_1}{1-p_1}.
\eeqa
Since $p(0,0,0) \ge 1/2$ by hypothesis, we have $p_1 p_2 \ge 1/2$,
and thus $p_1+p_2-1>0$. Now since $q_2 \le 1$, \eqref{eqn:q2} gives
\beqa
1 \ge \frac{p_2q_0 - (p_1+p_2-1)q_1}{1-p_1} \label{eqn:q1_bound}\Rightarrow q_1 \ge \frac{p_2q_0 - (1-p_1)}{p_1+p_2-1}.
\eeqa
Now
\beqa
p(0,0,0) = p_1p_2q_0  \ge \frac{1}{2} \label{eqn:p2q1_bound}\Rightarrow p_2q_0 \ge \frac{1}{2p_1}.
\eeqa
Assume without loss of generality that $p_1 \ge p_2$. Then $p_1 + p_2 \le 2p_1$. Substituting this and (\ref{eqn:p2q1_bound}) into \eqref{eqn:q1_bound} yields
\beqa
\label{eqn:q2_final_bound}
q_1 \ge \frac{\frac{1}{2p_1} - 1+p_1}{2p_1-1} = \frac{p_1}{2p_1-1} - \frac{1}{2p_1}.
\eeqa
Upon substituting (\ref{eqn:p2q1_bound}) and \eqref{eqn:q2_final_bound} into (\ref{eqn:p3_bound_1}), we get
\beqano
p_3 &\ge& \frac{1}{2p_1} + (1-p_2)\left(\frac{p_1}{2p_1-1} - \frac{1}{2p_1}\right) \\
&\ge& \frac{1}{2p_1} + (1-p_1)\left(\frac{p_1}{2p_1-1} - \frac{1}{2p_1}\right) \\
&=& \frac{1}{2} + \frac{p_1(1-p_1)}{2p_1-1}
\eeqano
where the last inequality follows because $p_2 \le p_1$ and $\frac{p_1}{2p_1-1} - \frac{1}{2p_1} > 0$.
\end{proof}

\begin{corollary}
\label{cor:max4}
Let $X_1,X_2,X_3$ and $X_4$ be Bernoulli random variables such that $I(X_i;X_j) = 0$, $\forall \textrm{ } i,j \in \{1,2,3,4\}, i\neq j$, and $\Pr(X_1=X_2=X_3=X_4=0) \ge \frac{1}{2}$. Then $$\sum_{i=1}^4 \Pr(X_i=0) \ge 3.$$
\end{corollary}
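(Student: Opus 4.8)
The plan is to derive the bound from two applications of Lemma~\ref{lemma:case2_n_3_4}. Write $p_i = \Pr(X_i = 0)$ and order the indices, without loss of generality, so that $p_1 \ge p_2 \ge p_3 \ge p_4$. Every sub-triple of $\{X_1,X_2,X_3,X_4\}$ is again pairwise independent, and for any triple $\{a,b,c\}$ we have $\Pr(X_a = X_b = X_c = 0) \ge \Pr(X_1 = X_2 = X_3 = X_4 = 0) \ge \tfrac12$, so every triple satisfies the hypotheses of Lemma~\ref{lemma:case2_n_3_4}. Pairwise independence applied to $(X_2,X_3)$ gives $p_2 p_3 = \Pr(X_2 = X_3 = 0) \ge \tfrac12$, hence (using $p_3 \le p_2$) $p_2^2 \ge \tfrac12$ and $p_2 \ge 1/\sqrt2 > \tfrac12$; this is what guarantees $2p_2 - 1 > 0$ in the estimates below.

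Next I would invoke the lemma twice, each time choosing the two ``distinguished'' variables so that the governing parameter is $p_2$. Applying Lemma~\ref{lemma:case2_n_3_4} to $(X_2, X_3, X_4)$, with $X_4$ playing the role of the lemma's third variable, the parameter is $\max(p_2,p_3) = p_2$ and we obtain $p_4 \ge \tfrac12 + \frac{p_2(1-p_2)}{2p_2-1}$. Applying it again to $(X_2, X_4, X_3)$, now with $X_3$ as the third variable, the parameter is $\max(p_2,p_4) = p_2$, giving $p_3 \ge \tfrac12 + \frac{p_2(1-p_2)}{2p_2-1}$. Combining these with the trivial bound $p_1 \ge p_2$ yields
\[
p_1 + p_2 + p_3 + p_4 \;\ge\; 2p_2 + 2\!\left(\tfrac12 + \frac{p_2(1-p_2)}{2p_2-1}\right).
\]

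The argument then closes with the elementary identity $p_2 + \dfrac{p_2(1-p_2)}{2p_2-1} = 1 + \dfrac{(1-p_2)^2}{2p_2-1}$, which rewrites the right-hand side as $3 + \dfrac{2(1-p_2)^2}{2p_2-1} \ge 3$, the inequality being valid because $2p_2 - 1 > 0$. I expect the only real pitfall to be the choice of which pairs to feed into the lemma: its lower bound is a decreasing function of the max of the two conditioning probabilities, so one must pair the third variable with $X_2$ (never with $X_1$) in both invocations — pairing with $X_1$ would give only the weaker estimate $\tfrac12 + \frac{p_1(1-p_1)}{2p_1-1}$, and the resulting inequality already fails at $p_1 = 1$. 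The degenerate case $p_2 = 1$ is covered by the $p = 1$ branch of Lemma~\ref{lemma:case2_n_3_4} (then $p_3, p_4 \ge \tfrac12$ directly and the final bound reads $3 + 0$), so no separate case analysis is needed.
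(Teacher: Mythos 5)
Your proof is correct, and it takes a genuinely different route from the paper's. Both arguments hinge on Lemma~\ref{lemma:case2_n_3_4}, but they deploy it differently. The paper orders $p_1 \ge p_2 \ge p_3 \ge p_4$, uses $p_3 p_4 \ge 1/2$ to get $p_4 \ge 1/(2p_3)$, and applies the lemma \emph{once} with $X_2$ as the distinguished third variable and $p=\max(p_3,p_4)=p_3$ to obtain $p_2 \ge \tfrac12 + \frac{p_3(1-p_3)}{2p_3-1}$. This leaves $\sum_i p_i \ge 2\max\!\left(p_3,\ \tfrac12 + \frac{p_3(1-p_3)}{2p_3-1}\right) + p_3 + \frac{1}{2p_3}$, which requires an explicit piecewise minimization over $p_3 \in [1/\sqrt2,1]$ split at the crossover $p_3 = \tfrac12 + \tfrac{1}{\sqrt{12}}$, with monotonicity checks on both branches. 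You instead apply the lemma \emph{twice}, always taking the maximum parameter to be $p_2$ (the key point you correctly flag: pairing the third variable with $X_2$ rather than $X_1$, since the lemma's bound is decreasing in that maximum), obtaining $p_3, p_4 \ge \tfrac12 + \frac{p_2(1-p_2)}{2p_2-1}$ simultaneously. Combined with $p_1 \ge p_2$, the algebraic identity
\[
p_2 + \frac{p_2(1-p_2)}{2p_2-1} \;=\; \frac{p_2^2}{2p_2-1} \;=\; 1 + \frac{(1-p_2)^2}{2p_2-1} \;\ge\; 1
\]
closes the argument in one line with no optimization or case split, and the formula is continuous at $p_2=1$ so the degenerate case requires no special treatment. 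The tradeoff: your version uses the lemma twice rather than once and discards the more refined information $p_4 \ge 1/(2p_3)$, but in exchange it replaces the paper's piecewise minimization by a trivial nonnegativity observation. Both prove the corollary; yours is shorter and more transparent.
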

\begin{proof}
Let $p_i = \Pr(X_i=0)$. Assume WLOG that $p_1 \ge p_2 \ge p_3 \ge p_4$. Now $p_3p_4 = \Pr(X_3=X_4=0) \ge 1/2$ by hypothesis, which implies $p_3 \ge 1/\sqrt{2}$ and $p_4 \ge 1/2p_3$. Applying Lemma~\ref{lemma:case2_n_3_4} to $X_2$, $X_3$, and $X_4$ gives $p_2 \ge \frac{1}{2} + \frac{p_3(1-p_3)}{2p_3-1}$. Thus
\begin{align*}
\sum_{i=1}^4 p_i &= p_1 + p_2 + p_3 + p_4  \\
&\ge 2p_2 + p_3 + p_4 \\
&\ge 2 \max \left(p_3,\frac{1}{2} + \frac{p_3(1-p_3)}{2p_3-1}\right) + p_3 + \frac{1}{2p_3} \\
&\ge \min_{x \in [\frac{1}{\sqrt{2}},1]} 2 \max \left(x,\frac{1}{2} + \frac{x(1-x)}{2x-1}\right) + x + \frac{1}{2x}.
\end{align*}
Since $\frac{1}{2} + \frac{p_3(1-p_3)}{2p_3-1}$ is monotonically decreasing in $p_3$ for $p_3 \in (1/2,1]$, it is easy to verify that
\begin{equation*}
\max \left(x,\frac{1}{2} + \frac{x(1-x)}{2x-1}\right) = \left\{
\begin{array}{ll}
x & \text{if } x \ge \frac{1}{2} + \frac{1}{\sqrt{12}}\\
\frac{1}{2} + \frac{x(1-x)}{2x-1} & \text{if } x \le \frac{1}{2} + \frac{1}{\sqrt{12}},
\end{array} \right.
\end{equation*}
where $\frac{1}{2} + \frac{1}{\sqrt{12}}$ is the admissible solution to the equation $x = \frac{1}{2} + \frac{x(1-x)}{2x-1}$. Thus
\begin{align*}
\sum_{i=1}^4 p_i &\ge \min \left(\min_{x \in [\frac{1}{\sqrt{2}},\frac{1}{2}+\frac{1}{\sqrt{12}}]} 2\left(\frac{1}{2} + \frac{x(1-x)}{2x-1}\right) + x + \frac{1}{2x}, \min_{x \in [\frac{1}{2}+\frac{1}{\sqrt{12}},1]} 2x + x + \frac{1}{2x}\right) \\
&= \min \left(\min_{x \in [\frac{1}{\sqrt{2}},\frac{1}{2}+\frac{1}{\sqrt{12}}]} 1 + \frac{1}{2x} + \frac{x}{2x-1}, \min_{x \in [\frac{1}{2}+\frac{1}{\sqrt{12}},1]} 3x + \frac{1}{2x}\right) \\
&= \min (3,3) = 3,
\end{align*}
where the penultimate inequality follows from the fact that $1 + \frac{1}{2x} + \frac{x}{2x-1}$ is a monotonically decreasing in $x$ for $x \in [\frac{1}{\sqrt{2}},\frac{1}{2}+\frac{1}{\sqrt{12}}]$ and takes a minimum value of 3 at $x = \frac{1}{2}+\frac{1}{\sqrt{12}}$, and that $3x + \frac{1}{2x}$ is monotonically increasing in $x$ for $x \in [\frac{1}{2}+\frac{1}{\sqrt{12}},1]$ and takes a minimum value of 3 at $x = \frac{1}{2}+\frac{1}{\sqrt{12}}$.
\end{proof}

\begin{lemma}
\label{lemma:cases_n=4_k=2}
Let $X_1,\ldots,X_4$ be erased versions of a uniform binary random variable $X$ taking values in $\{+,-\}$. If $I(X_i;X_j) = 0, \ i,j \in \{1,\ldots,4\}, \ i \neq j$, then
\begin{align*}
\sum_{i = 1}^4 \Pr(X_i=0) \ge 3.
\end{align*}
\end{lemma}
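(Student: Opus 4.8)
The plan is to reuse the four-case dissection from the proof of Lemma~\ref{lemma:cases_n_k}, organized according to which source symbols the $X_i$ reveal with positive probability. The one place where that argument does not carry over is the ``one-sided'' case: for Lemma~\ref{lemma:cases_n_k} the bound was closed by a symmetric geometric-mean program using $\left(1-\tfrac1n\right)^k\le\tfrac12$, but here $\left(1-\tfrac14\right)^2=\tfrac{9}{16}>\tfrac12$, so the symmetric assignment $\Pr(X_i=0)=2^{-1/2}$ would only yield $4\cdot 2^{-1/2}\approx 2.83<3$. My plan is to patch exactly this case by appealing to Corollary~\ref{cor:max4}, which extracts the missing slack from the pairwise constraints rather than from a single product constraint.

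First I would dispose of the degenerate cases. If some $X_i$, say $X_1$, reveals both symbols, i.e. $\Pr(X_1=+)>0$ and $\Pr(X_1=-)>0$, then, since erased versions of the same $X$ can never disagree, $\Pr(X_1=+,X_j=-)=\Pr(X_1=-,X_j=+)=0$ for $j\ne1$; together with the hypothesis $I(X_1;X_j)=0$ this forces $\Pr(X_j=+)=\Pr(X_j=-)=0$, so $\Pr(X_j=0)=1$ for $j=2,3,4$ and the sum is at least $3$. If instead every $X_i$ is a.s. the erasure symbol, the sum equals $4$. This leaves the case (following the split of Lemma~\ref{lemma:cases_n_k}) in which some $X_i$ has $\Pr(X_i=+)>0$ and $\Pr(X_i=-)=0$, the $-$-revealing subcase being symmetric.

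In that case the event $\{X_i=+\}$ forces $X=+$, hence $X_j\ne-$ on that event, so $\Pr(X_i=+,X_j=-)=0$ and, by pairwise independence, $\Pr(X_j=-)=0$ for every $j\in\{1,\dots,4\}$. Therefore $\{X=-\}\subseteq\{X_1=X_2=X_3=X_4=0\}$, and since $X$ is uniform, $\Pr(X_1=X_2=X_3=X_4=0)\ge\Pr(X=-)=\tfrac12$. Passing to the indicator variables $Z_j=1(X_j=0)$, which are Bernoulli and pairwise independent because the $X_j$ are, Corollary~\ref{cor:max4} gives $\sum_{i=1}^4\Pr(X_i=0)=\sum_{i=1}^4\Pr(Z_i=0)\ge3$, completing the proof. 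The substantive work is thus already contained in Corollary~\ref{cor:max4} (and hence in Lemma~\ref{lemma:case2_n_3_4}); I do not expect any real obstacle in the lemma itself beyond checking that the case split is exhaustive and that the indicator variables inherit pairwise independence.
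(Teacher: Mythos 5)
Your proof follows essentially the same route as the paper's: the same four-case dissection (some $X_i$ reveals both symbols, only $+$, only $-$, or none), the same inline deduction that $\Pr(X_j=-)=0$ for every $j$ in the one-sided case, the same observation that this forces $\Pr(X_1=\cdots=X_4=0)\ge\tfrac12$, and the same appeal to Corollary~\ref{cor:max4} to close the gap left by $(3/4)^2>\tfrac12$. The only flaw is a sign slip in the final step: with $Z_j=1(X_j=0)$ you have $\Pr(Z_j=0)=\Pr(X_j\ne 0)$, not $\Pr(X_j=0)$, so the claimed identity $\sum_i\Pr(X_i=0)=\sum_i\Pr(Z_i=0)$ is backwards, and the hypothesis of the Corollary is likewise applied to the wrong event. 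The fix is immediate: either define $Z_j=1(X_j\ne 0)$, or (as the paper does) simply note that once $\Pr(X_j=-)=0$ the variable $X_j$ already takes only two values $\{+,0\}$ and hence is itself a Bernoulli variable in the sense of Corollary~\ref{cor:max4}, with $0$ playing the role of $0$; no remapping is needed.
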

\begin{proof}
We have the four cases as in the proof of Lemma~\ref{lemma:cases_n_k}:
\newline
\textbf{Case I}: There exists $i \in \{1,2,3,4\}$ such that $\Pr(X_i = +) > 0$ and $\Pr(X_i = -) > 0$.
\newline
Just as in the proof of Lemma~\ref{lemma:cases_n_k}, we have $\sum_{j=1}^4 \Pr(X_j=0) \ge 4-1 = 3$.
\newline
\textbf{Case II}: There exists $i \in \{1,2,3,4\}$ such that $\Pr(X_i = +) > 0$ and $\Pr(X_i = -) = 0$, and Case I does not hold.
\newline
Assume $i=1$ WLOG. Then from \eqref{eqn:pr1_ub_avg}, $\Pr(X_j = -) = 0$ for $j \in \{2,3,4\}$. Thus the $X_j$ are effectively binary random variables such that $\Pr(X_1=\ldots=X_4=0)\ge 1/2$. By Corollary~\ref{cor:max4}, $\sum_{j = 1}^4 \Pr(X_j=0) \ge 3$.
\newline
\textbf{Case III}: There exists $i \in \{1,2,3,4\}$ such that $\Pr(X_i = -) > 0$ and $\Pr(X_i = +) = 0$, and Case I does not hold.
\newline
This case is analogous to Case II.
\newline
\textbf{Case IV}: For all $i \in \{1,2,3,4\}$, $\Pr(X_i = +) = \Pr(X_i = -) = 0$.
\newline
We have $\sum_{j=1}^4 \Pr(X_j=0) > \sum_{j=2}^4 \Pr(X_j=0) = 4 - 1 = 3$.
\end{proof}

We are now in a position to prove Theorem~\ref{thm:(4,2)_avg-case}. Let $f_i$, $i\in \mathcal{N}$ and $g_\mathcal{K}$, $\mathcal{K} \subseteq \mathcal{N}$ be a code that achieves $(\frac{1-D_2}{2},D_1,D_2,D_3,D_4)$. Using the same argument as that in the proof of the second part of Theorem~\ref{thm:conv_avg-case}, we have for $i,j \in \{1,2,3,4\}$, $i \neq j$ that $I(X_i^l;X_j^l) \le I(f_i;f_j) = 0$ and thus $I(X_{it};X_{jt}) = 0$ for all $t \in \{1,\ldots,l\}$. By Lemma~\ref{lemma:cases_n_k=2}, $\sum_{i=1}^4 \Pr(X_{it}=0) \ge 3$ for $t \in \{1,\ldots,l\}$. It follows that
\beqano
\frac{1}{l}\sum_{t=1}^l \sum_{i=1}^4 \Pr(X_{it}=0) &\ge& 3 \\
\Rightarrow \max_{i} \left(\frac{1}{l}\sum_{t=1}^l \Pr(X_{it}=0)\right) &\ge& \frac{3}{4}.
\eeqano
This completes the proof.

\section{Proof of Theorem \ref{thm:chebyshev_avg-case}}
\label{app:chebyshev_avg-case}
We establish two lemmas before proving Theorem~\ref{thm:chebyshev_avg-case}.
\begin{lemma}
\label{lemma:chebyshev_bound}
Let $X_1,\ldots,X_n$ be Bernoulli random variables such that $I(X_i;X_j) = 0$ $\forall \textrm{ } i,j \in \mathcal{N}, \ i\neq j$, and $\Pr(X_1=X_2=\ldots =X_n=0) \ge \frac{1}{2}$. Then
\beqano
\frac{1}{n}\sum_{i=1}^n \Pr(X_i=0) \ge 1 - \frac{2}{n}.
\eeqano
\end{lemma}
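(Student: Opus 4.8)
The plan is to reduce everything to the $\{0,1\}$-valued indicators $Z_i := \mathbf{1}(X_i = 0)$. Since the $X_i$ are pairwise independent, so are the $Z_i$, and $Z_i$ is Bernoulli with parameter $p_i := \Pr(X_i = 0)$. Moreover the event $\{X_1 = \cdots = X_n = 0\}$ is exactly $\{Z_1 = \cdots = Z_n = 1\} = \{S = n\}$, where $S := \sum_{i=1}^n Z_i$. So the hypothesis becomes $\Pr(S = n) \ge \tfrac12$, and the conclusion to be proved is $\mathbf{E}[S] = \sum_i p_i \ge n - 2$. I would first dispose of the degenerate case $\mathbf{E}[S] = n$ (which forces every $p_i = 1$, so that $\tfrac1n \sum_i p_i = 1 \ge 1 - \tfrac2n$), and then assume $\mu := \mathbf{E}[S] < n$, so that $n - \mu > 0$.

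The key step — and the reason for the lemma's name — is Chebyshev's inequality. Pairwise independence kills all the covariance terms, so $\mathrm{Var}(S) = \sum_{i=1}^n \mathrm{Var}(Z_i) = \sum_{i=1}^n p_i(1 - p_i)$, and since $0 \le p_i \le 1$ we have the crude bound $\mathrm{Var}(S) \le \sum_{i=1}^n (1 - p_i) = n - \mu$. Because $S \le n$ always, the event $\{S = n\}$ is contained in $\{\,|S - \mu| \ge n - \mu\,\}$, so Chebyshev gives
\[
\tfrac12 \;\le\; \Pr(S = n) \;\le\; \Pr\bigl(|S - \mu| \ge n - \mu\bigr) \;\le\; \frac{\mathrm{Var}(S)}{(n-\mu)^2} \;\le\; \frac{n - \mu}{(n-\mu)^2} \;=\; \frac{1}{n - \mu}.
\]
Hence $n - \mu \le 2$, i.e. $\sum_{i=1}^n p_i = \mu \ge n - 2$, which is the claim after dividing by $n$.

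I do not expect a serious obstacle; the steps requiring a little care are (i) checking that pairwise independence of the $X_i$ suffices to make $\mathrm{Var}(S)$ the sum of the individual variances (it does, since each $\mathrm{Cov}(Z_i,Z_j)$ depends only on the joint law of the pair $(X_i,X_j)$); (ii) the variance estimate, for which the trivial bound $p_i(1-p_i) \le 1 - p_i$ is enough — one could instead use Cauchy--Schwarz, $\sum_i p_i^2 \ge \mu^2/n$, to get the slightly stronger $\mathrm{Var}(S) \le \mu(n-\mu)/n$ and hence $\mu \ge n^2/(n+2)$, but since $n^2/(n+2) \ge n-2$ this buys nothing; and (iii) making sure the Chebyshev deviation $n - \mu$ is strictly positive, which is exactly why the boundary case $\mu = n$ is peeled off first.
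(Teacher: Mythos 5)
Your proof is correct and uses the same core idea as the paper's: Chebyshev's inequality applied to the sum of pairwise independent indicators, with pairwise independence supplying $\mathrm{Var}(S) = \sum_i p_i(1-p_i)$. Your execution is slightly cleaner --- you bound $\mathrm{Var}(S) \le n - \mu$ directly via $p_i(1-p_i)\le 1-p_i$ and apply Chebyshev once at the fixed deviation $n-\mu$, whereas the paper first optimizes over the deviation threshold $\alpha$ to get $\tfrac1n\sum p_i \ge 1-\sqrt{\tfrac{2}{n^2}\sum p_i q_i}$ and then bootstraps the bound $\sum p_i q_i \le 2$ from that same inequality, arriving at the identical conclusion.
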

\begin{proof}
Let $p_i$ denote $\Pr(X_i=0)$ and let $q_i=\Pr(X_i=1)=1-p_i$. Since the $X_i$'s are pairwise independent, we have
\begin{align*}
\bE \left[\frac{1}{n}\sum_{i=1}^n X_i\right] &= \frac{1}{n}\sum_{i=1}^n q_i \\
\textrm{Var} \left[\frac{1}{n}\sum_{i=1}^n X_i\right] &=  \frac{1}{n^2}\sum_{i=1}^n \textrm{Var}(X_i) = \frac{1}{n^2}\sum_{i=1}^n p_iq_i.
\end{align*}
Let $\alpha > \sqrt{\frac{2}{n^2}(\sum_{i=1}^n p_iq_i)}.$ Then, by Chebyshev's inequality,
\beqano
\Pr \left(\left|\frac{1}{n}\sum_{i=1}^n X_i - \frac{1}{n}\sum_{i=1}^n q_i\right| > \alpha \right) &\le& \frac{\textrm{Var}\left[\frac{1}{n}\sum_{i=1}^n X_i\right]}{\alpha^2} \\
&=& \frac{\sum_{i=1}^n p_iq_i}{n^2 \alpha^2} < \frac{1}{2}.
\eeqano
Let $E_1$ and $E_2$ be the events $|\frac{1}{n}\sum_{i=1}^n X_i - \frac{1}{n}\sum_{i=1}^n q_i| \le \alpha$ and $X_1=X_2=\ldots =X_n=0$, respectively. Then $\Pr(E_1) > \frac{1}{2}$, and $\Pr(E_2) \ge \frac{1}{2}$ by hypothesis. Since $\Pr(E_1)+\Pr(E_2)>1$, $\Pr(E_1 \cap E_2) > 0.$ This implies that
\beqano
\frac{1}{n}\sum_{i=1}^n q_i \le \alpha \Rightarrow \frac{1}{n}\sum_{i=1}^n p_i &\ge& 1- \alpha.
\eeqano
Since $\alpha$ was arbitrary, this implies
\beqa
   \frac{1}{n}\sum_{i=1}^n p_i &\ge& 1 - \sqrt{\frac{2}{n^2}(\sum_{i=1}^n p_iq_i)}.
\label{eqn:max_pi_n}
\eeqa
Moreover,
\beqano
\frac{1}{n}\sum_{i=1}^n p_iq_i &\le& \frac{1}{n}\sum_{i=1}^n q_i \le \sqrt{\frac{2}{n^2}(\sum_{i=1}^n p_iq_i)}.
\eeqano
A little algebra gives
\beqa
\label{eqn:pq_bound}
\sum_{i=1}^n p_iq_i \le \sqrt{2\sum_{i=1}^n p_iq_i} \Rightarrow \sum_{i=1}^n p_iq_i \le 2.
\eeqa
Substituting (\ref{eqn:pq_bound}) into (\ref{eqn:max_pi_n}) yields
\beqano
\frac{1}{n}\sum_{i=1}^n p_i &\ge& 1 -\sqrt{\frac{2}{n^2}\cdot 2} = 1 - \frac{2}{n}.
\eeqano
\end{proof}

\begin{lemma}
\label{lemma:cases_n_k=2}
Let $X_1,\ldots,X_n$ be erased versions of a uniform binary random variable $X$ taking values in $\{+,-\}$. If $I(X_i;X_j) = 0, \ i,j \in \mathcal{N}, \ i \neq j$, then
\begin{align*}
\sum_{i=1}^n \Pr(X_i=0) \ge n - 2.
\end{align*}
\end{lemma}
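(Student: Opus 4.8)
The plan is to reproduce, almost verbatim, the four-case analysis used in the proof of Lemma~\ref{lemma:cases_n_k} (and of Lemma~\ref{lemma:cases_n=4_k=2}), but to replace the convex-optimization step by an appeal to Lemma~\ref{lemma:chebyshev_bound}. Recall that $I(X_i;X_j)=0$ is precisely pairwise independence of $X_i$ and $X_j$, and that each $X_i$, being an erased version of $X$, takes values in $\{+,-,0\}$ and can never reveal a symbol opposite to $X$.

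First I would split into the same four cases according to the supports of the $X_i$. In \textbf{Case I}, some $X_i$ (WLOG $X_1$) has $\Pr(X_1=+)>0$ and $\Pr(X_1=-)>0$; since erased versions never disagree, $\Pr(X_1=+,X_j=-)=\Pr(X_1=-,X_j=+)=0$ for every $j\neq1$, and pairwise independence forces $\Pr(X_j=+)=\Pr(X_j=-)=0$, i.e.\ $\Pr(X_j=0)=1$, so $\sum_{i=1}^n\Pr(X_i=0)\ge n-1\ge n-2$. In \textbf{Case IV}, every $X_i$ has $\Pr(X_i=+)=\Pr(X_i=-)=0$, hence $\Pr(X_i=0)=1$ and the sum is $n\ge n-2$. \textbf{Case III} is symmetric to \textbf{Case II} under the relabeling $+\leftrightarrow-$, so it remains to handle Case II.

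In \textbf{Case II}, Case I fails and some $X_i$ (WLOG $X_1$) satisfies $\Pr(X_1=+)>0$, $\Pr(X_1=-)=0$. Exactly as in Case I, $\Pr(X_1=+,X_j=-)=0$ together with pairwise independence gives $\Pr(X_j=-)=0$ for all $j$, so each $X_i$ takes values only in $\{+,0\}$. Moreover, conditioned on $X=-$, no erased version $X_i$ can equal $+$, so all $X_i=0$ in that event; therefore $\Pr(X_1=\cdots=X_n=0)\ge\Pr(X=-)=\tfrac12$. Regarding each $X_i$ as a Bernoulli variable by identifying the two realized symbols $\{+,0\}$ with $\{1,0\}$ — which preserves pairwise independence — the hypotheses of Lemma~\ref{lemma:chebyshev_bound} are met, and it yields $\tfrac1n\sum_{i=1}^n\Pr(X_i=0)\ge1-\tfrac2n$, i.e.\ $\sum_{i=1}^n\Pr(X_i=0)\ge n-2$.

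The only points requiring care are (i) verifying that Case II genuinely supplies both hypotheses of Lemma~\ref{lemma:chebyshev_bound}, namely pairwise independence and $\Pr(\text{all zero})\ge\tfrac12$, and (ii) the short argument deducing the joint-erasure bound from the one-sidedness of each $X_i$ together with the erased-version property. Both are routine, and since the substantive work (the Chebyshev/variance estimate) is already contained in Lemma~\ref{lemma:chebyshev_bound}, I do not anticipate any real obstacle.
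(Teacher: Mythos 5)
Your proof is correct and matches the paper's argument essentially verbatim: the same four-case split by the supports of the $X_i$, the same pairwise-independence deduction that $\Pr(X_j=-)=0$ for all $j$ in Case~II, the same observation that $\Pr(X_1=\cdots=X_n=0)\ge\Pr(X=-)=\tfrac12$, and the same invocation of Lemma~\ref{lemma:chebyshev_bound}. No further comment is needed.
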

\begin{proof}
We have Cases I, II, III, and IV as in the proof of Lemma~\ref{lemma:cases_n_k}. Cases I and IV are the same as those in Lemma~\ref{lemma:cases_n_k}, so we will just mention Cases II and III.
\newline
\textbf{Case II}: There exists $i \in \mathcal{N}$ such that $\Pr(X_i = +) > 0$ and $\Pr(X_i = -) = 0$ and Case I does not hold.
\newline
Assume $i=1$ WLOG. Then from \eqref{eqn:pr1_ub_avg}, $\Pr(X_j = -) = 0$ for $j \in \{2,\ldots,n\}$. Thus the $X_j$'s are always erased when the binary source $X=-$, and so $\Pr(X_1=\ldots=X_n=0)\ge 1/2$. By Lemma~\ref{lemma:chebyshev_bound}, $\sum_{i=1}^n \Pr(X_i=0) \ge n-2$. The proof of Case III is analogous to the proof of Case II.
\end{proof}

We are now in a position to prove Theorem~\ref{thm:chebyshev_avg-case}. Let $f_i$, $i\in \mathcal{N}$ and $g_\mathcal{K}$, $\mathcal{K} \subseteq \mathcal{N}$ be a code that achieves $(\frac{1-D_2}{2},D_1,D_2,\ldots,D_n)$. Using the same argument as that in the proof of the second part of Theorem~\ref{thm:conv_avg-case}, we have for $i,j \in \mathcal{N}$, $i \neq j$ that $I(X_i^l;X_j^l) \le I(f_i;f_j) = 0$ and thus $I(X_{it};X_{jt}) = 0$ for $t \in \{1,\ldots,l\}$. By Lemma~\ref{lemma:cases_n_k=2}, $\sum_{i=1}^n \Pr(X_{it}=0) \ge n - 2$ for $t \in \{1,\ldots,l\}$. It follows that
\beqano
\frac{1}{l}\sum_{t=1}^l \sum_{i=1}^n \Pr(X_{it}=0) &\ge& n - 2. \\
\Rightarrow \max_{i} \left(\frac{1}{l}\sum_{t=1}^l \Pr(X_{it}=0)\right) &\ge& 1 - \frac{2}{n}.
\eeqano
This completes the proof.

\section{Proof of Lemma~\ref{lemma:cases}}
\label{app:lemma_cases}
For any $t \in \{1,\ldots,l\}$, we have exactly one of the following four cases:
\newline
\textbf{Case I}: $\exists \ i \in \mathcal{N}$ s.t. $\Pr(X_{it}(X) = +) > 0$ and $\Pr(X_{it}(X) = -) > 0$.
\newline
\textbf{Case II}: $\exists \ i \in \mathcal{N}$ s.t. $\Pr(X_{it}(X) = +) > 0$ and $\Pr(X_{it}(X) = -) = 0$, and Case I does not hold.
\newline
\textbf{Case III}: $\exists \ i \in \mathcal{N}$ s.t. $\Pr(X_{it}(X) = -) > 0$ and $\Pr(X_{it}(X) = +) = 0$, and Case I does not hold.
\newline
\textbf{Case IV}: $\forall \ i \in \mathcal{N}, \ \Pr(X_{it}(X) = +) = \Pr(X_{it}(X) = -) = 0$.

Let $\mathcal{B}_1$, $\mathcal{B}_2$, $\mathcal{B}_3$ and $\mathcal{B}_4$ be the sets of $t \in \{1,\ldots,l\}$ satisfying Cases I, II, III and IV, respectively. Moreover, let $|\mathcal{B}_1| = b_1$, $|\mathcal{B}_2| = b_2$, $|\mathcal{B}_3| = b_3$ and $|\mathcal{B}_4| = b_4$. Then $b_1 + b_2 + b_3 + b_4 = l$. Now consider a source string $(x^*)^l$ such that $x^*_t = -$ if $t \in \mathcal{B}_2$ and $x^*_t = +$ if $t \in \mathcal{B}_3$. We have
\begin{align*}
&\phantom{=}\max_{x^l \in \mathcal{X}^l} \sum_{i=1}^n \left[\frac{1}{l}\sum_{t=1}^l d(x_t,X_{it}(x))\right] \\
&\ge \sum_{i=1}^n \frac{1}{l}\sum_{t=1}^l d(x^*_t,X_{it}(x^*)) \\
&= \frac{1}{l}\sum_{t \in \mathcal{B}_1} \sum_{i=1}^n d(x^*_t,X_{it}(x^*)) + \frac{1}{l}\sum_{t \in \mathcal{B}_2} \sum_{i=1}^n d(x^*_t,X_{it}(x^*)) \\ &\phantom{+} + \frac{1}{l}\sum_{t \in \mathcal{B}_3} \sum_{i=1}^n d(x^*_t,X_{it}(x^*)) + \frac{1}{l}\sum_{t \in \mathcal{B}_4} \sum_{i=1}^n d(x^*_t,X_{it}(x^*)).
\end{align*}
Consider now $t \in \mathcal{B}_1$. Since $X_{1t}(X),\ldots,X_{nt}(X)$ are erased versions of the same binary random variable $X_t$, they can never disagree in the source symbol they reveal. We therefore have $\Pr(X_{it}(X) = +, X_{jt}(X) = -) = 0$, $j \in \mathcal{N}, \ j \neq i$. Since $X_{it}(X)$ and $X_{jt}(X)$, $i,j \in \mathcal{N}$, $i \neq j$, are pairwise independent, we have $\Pr(X_{it}(X) = +)\cdot \Pr(X_{jt}(X) = -)$
\beqa
\nonumber &=& \Pr(X_{it}(X) = +, X_{jt}(X) = -) = 0 \\
\label{eqn:pr1_ub}&\Rightarrow& \Pr(X_{jt}(X) = -) = 0,
\eeqa
since $\Pr(X_{it}(X) = +) > 0$. Repeating the same analysis with $\Pr(X_{it}(X) = -, X_{jt}(X) = +)$ yields $\Pr(X_{jt}(X) = +) = 0$. Thus $\Pr(X_{jt}(X) = 0) = 1$ for all $j \in \mathcal{N}, \ j \neq i$, and therefore $X_{jt}(x^*) = 0$ for all $j \in \mathcal{N}, \ j \neq i$. Similarly, it follows from \eqref{eqn:pr1_ub} that $\Pr(X_{jt}(X) = -) = 0$ for $j \in \mathcal{N}, \ j \neq i$ if $t \in \mathcal{B}_2$ and $\Pr(X_{jt}(X) = +) = 0$ for $j \in \mathcal{N}, \ j \neq i$ if $t \in \mathcal{B}_3$. Thus by construction, $X_i^l(x^*)$, $i \in \mathcal{N}$, must have $X_{it}(x^*) = 0$ for $t \in \mathcal{B}_2 \cup \mathcal{B}_3 \cup \mathcal{B}_4$. It follows that
\begin{align*}
&\phantom{=}\max_{x^l \in \mathcal{X}^l} \sum_{i=1}^n \left[\frac{1}{l}\sum_{t=1}^l d(x_t,X_{it}(x))\right] \\
&\ge \frac{1}{l}\sum_{t \in \mathcal{B}_1} \sum_{i=1}^n 1_{(X_{it}(x^*)=0)} + \frac{1}{l}\sum_{t \in \mathcal{B}_2} \sum_{i=1}^n 1_{(X_{it}(x^*)=0)} \\ &\phantom{+} \qquad + \frac{1}{l}\sum_{t \in \mathcal{B}_3} \sum_{i=1}^n 1_{(X_{it}(x^*)=0)} + \frac{1}{l}\sum_{t \in \mathcal{B}_4} \sum_{i=1}^n 1_{(X_{it}(x^*)=0)} \\
&\ge \frac{1}{l}b_1(n-1) + \frac{1}{l}b_2n + \frac{1}{l}b_3n + \frac{1}{l}b_4n \\
&= \frac{1}{l}(nl-b_1) \\
&= n - \frac{b_1}{l} \ge n - 1.
\end{align*}

This completes the proof.

\section{Proof of Theorem~\ref{thm:conv_w-case_D_1}}
\label{app:conv_w-case_D_1}
Let $D_k < 1-\frac{k}{n}$ and rational. Let $f_i$, $i\in \mathcal{N}$ and $g_\mathcal{K}$, $\mathcal{K} \subseteq \mathcal{N}, \mathcal{K} \neq \emptyset$, be a code that achieves $(R,D_1,\ldots,D_k,\ldots,D_n)$. Let $R$ be the rate of $f_i, \ i \in \mathcal{N}$. Consider endowing the source with an \emph{i.i.d.} uniform distribution over $\mathcal{X}^l$ for analysis purposes. Then for all $i \in \mathcal{N}$,
\begin{align}
\label{eqn:H(f_i)_bound}
lR \ge H(f_i).
\end{align}
Let $\hat{X}_\mathcal{S}^l$ be the reconstruction when the source is reconstructed from a set $\mathcal{S}$ of descriptions. Then $\forall \ S = \{s_1,\ldots,s_k\} \subset \mathcal{N},|S| = k$, we have $$H(f_{s_1}\ldots f_{s_k}) \ge H(\hat{X}_\mathcal{S}^l) \ge I(X^l;\hat{X}_\mathcal{S}^l)\ge l(1-D_k),$$ where the final inequality follows because the average distortion is no lower than the worst-case distortion.
Thus
\begin{align}
\nonumber I_k(f_{s_1};\ldots;f_{s_k}) &= \sum_{j=1}^k H(f_{s_j}) - H(f_{s_1}\ldots f_{s_k}) \\
\label{eqn:indep_messages}&\le klR - l(1-D_k) = 0.
\end{align}

Let $\hat{X}_{s_i}^l$ be the reconstructed source string when the decoder has access to the $s_i^{th}$ description only. By Property~3 of the multi-letter mutual information, $ I_k(\hat{X}_{s_1}^l;\ldots;\hat{X}_{s_k}^l) \le I_k(f_{s_1};\ldots;f_{s_k})= 0$ for all $S \subset \mathcal{N}, \ |S| = k$. By Property~2 of the multi-letter mutual information, $I(\hat{X}_i^l;\hat{X}_j^l) = 0$ for all $i,j \in \mathcal{N}, \ i \neq j$, and thus $I(\hat{X}_{it};\hat{X}_{jt}) = 0$ for all $i,j \in \mathcal{N}, \ i \neq j$, and $t = 1,\ldots,l$. Now if any two of the $\hat{X}_{s_i}^l$ disagree in a source symbol they reveal, then the resulting single-message distortion is going to be $\infty$ and the result follows trivially, so suppose that the $\hat{X}_{s_i}^l$ are consistent. Then by Lemma~\ref{lemma:cases}, we have
\begin{align*}
\sum_{i=1}^n \max_{x^l \in \mathcal{X}^l}\left[\frac{1}{l}\sum_{t=1}^l d(x_t,\hat{X}_{it})\right] \ge n-1,
\end{align*}
which implies
\begin{align*}
D_1 = \max_{i \in \mathcal{N}} \max_{x^l \in \mathcal{X}^l} \left[\frac{1}{l}\sum_{t=1}^l d(x_t,\hat{X}_{it})\right] \ge \frac{n-1}{n} = 1-\frac{1}{n}.
\end{align*}

This completes the proof.

\section{Proof of Theorem~\ref{thm:pareto_optimality}}
\label{app:pareto_optimality}
Consider $R$ first. If $R < \frac{1-D_k}{k}$, then the sum rate of any $k$ descriptions is strictly less than $1-D_k$, and the source string cannot be reconstructed with distortion $D_k$. Thus the rate of each description must be at least $\frac{1-D_k}{k}$. Now, in light of the previous theorem, it suffices to show that for any $(R,D_1,\ldots,D_k,\ldots,D_n) \in \mathcal{RD}_{worst}$, if $D_1 = 1-\frac{1}{n}$, then $D_m \ge 1-\frac{m}{n}$ for $m < k$. Let $S = \{s_1,\ldots,s_k\}$ and $\mathcal{M} = \{s_1,\ldots,s_m\}$. Let $X_\mathcal{M}^l$ be the source reconstruction when the decoder has access to set of descriptions indexed by the elements in $\mathcal{M}$. Then from \eqref{eqn:indep_messages} and Properties 2 and 3 of the multi-letter mutual information, it follows that
\begin{align*}
I(X_\mathcal{M}^l;X_{s_{m+1}}^l;\ldots;X_{s_k}^l) &\le I(X_\mathcal{M}^l; f_{s_{m+1}};\ldots;f_{s_k}) \\
&\le I_k(f_{s_1};\ldots;f_{s_k}) = 0,
\end{align*}
and thus $I(X_{\mathcal{M},t};X_{s_{m+1},t};\ldots,X_{s_k,t}) = 0$ for $t = 1,\ldots,l$. This implies that for each $t$, the $(n-m+1)$ random variables $\{X_{\mathcal{M},t};X_{s_{m+1},t};\ldots;X_{s_n,t}\}$ are pairwise independent, and therefore by Lemma~\ref{lemma:cases},
\begin{align*}
\max_{x^l \in \mathcal{X}^l}\left[\frac{1}{l}\sum_{t=1}^l d(x_t,X_{\mathcal{M},t})\right] + \sum_{i=m+1}^n \max_{x^l \in \mathcal{X}^l}\left[\frac{1}{l}\sum_{t=1}^l d(x_t,X_{s_i,t})\right] \ge n-m.
\end{align*}
Since $D_1 = 1-\frac{1}{n}$, we have
$$\max_{x^l \in \mathcal{X}^l} \left[\frac{1}{l}\sum_{t=1}^l d(x_t,X_{s_i,t})\right] \le 1-\frac{1}{n}$$ for $m+1 \le i \le n$, and thus
\begin{align*}
\max_{x^l \in \mathcal{X}^l}\left[\frac{1}{l}\sum_{t=1}^l d(x_t,X_{\mathcal{M},t})\right] &\ge n-m - \sum_{i=m+1}^n \max_{x^l \in \mathcal{X}^l}\left[\frac{1}{l}\sum_{t=1}^l d(x_t,X_{s_i,t})\right] \\
&\ge n-m - (n-m)\left(1-\frac{1}{n}\right) \\
&= \frac{n-m}{n} = 1-\frac{m}{n},
\end{align*}
which implies
\begin{align*}
D_m = \max_{\substack{\mathcal{M} \subset \mathcal{N} \\ |\mathcal{M}|=m}} \max_{x^l \in \mathcal{X}^l}\left[\frac{1}{l}\sum_{t=1}^l d(x_t,X_{\mathcal{M},t})\right] \ge 1-\frac{m}{n}.
\end{align*}

This completes the proof.

\section{Proof of Theorem~\ref{thm:other_wc_distortions}}
\label{app:other_wc_distortions}
Since $m$ divides $n$, we can form $n/m$ sets consisting of $m$ messages each. Denote these sets by $\mathcal{M}_1,\ldots,\mathcal{M}_{n/m}$, where $\mathcal{M}_i \subset \{f_1,\ldots,f_n\}$, $|\mathcal{M}_i|=m$, and $\mathcal{M}_i \cap \mathcal{M}_j = \emptyset$, $i,j \in \{1,\ldots,n/m\}$, $i \neq j$. Since $m \le k/2$, there exists a set $S = \{s_1,\ldots,s_k\}$ of $k$ messages containing $\mathcal{M}_i$ and $\mathcal{M}_j$ for some $i,j \in \{1,\ldots,n/m\}$, $i \neq j$. Let $X_{\mathcal{M}_i}^l$ be the source reconstruction when the decoder has access to the messages in $\mathcal{M}_i$ only. By Property~2 of the multi-letter mutual information, it follows that for the set $S$ containing $\mathcal{M}_i$ and $\mathcal{M}_j$,
\begin{align*}
I(X_{\mathcal{M}_i}^l;X_{\mathcal{M}_j}^l) &\le I_{(k-2m+2)}(X_{\mathcal{M}_i}^l;X_{\mathcal{M}_j}^l;f_r;\ldots;f_{r+k-2m-1}) \\
&\le I_k(f_{s_1};\ldots;f_{s_k}) = 0,
\end{align*}
where $f_r,\ldots,f_{r+k-2m-1} \in \{f_{s_1},\ldots,f_{s_k}\}\setminus \{\mathcal{M}_i,\mathcal{M}_j\}$. By Lemma~\ref{lemma:cases}, we have
\begin{align*}
\sum_{i=1}^{n/m} \max_{x^l \in \mathcal{X}^l}\left[\frac{1}{l}\sum_{t=1}^l d(x_t,X_{\mathcal{M}_i,t})\right] \ge \frac{n}{m}-1,
\end{align*}
and thus
\begin{align*}
D_m &= \max_{\substack{\mathcal{M} \subset \mathcal{N} \\ |\mathcal{M}|=m}} \max_{x^l \in \mathcal{X}^l}\left[\frac{1}{l}\sum_{t=1}^l d(x_t,X_{\mathcal{M},t})\right] \\
&\ge \max_{i \in \{1,\ldots,n/m\}} \max_{x^l \in \mathcal{X}^l}\left[\frac{1}{l}\sum_{t=1}^l d(x_t,X_{\mathcal{M}_i,t})\right] \\
&\ge \frac{\frac{n}{m}-1}{\frac{n}{m}} = 1 - \frac{m}{n}.
\end{align*}

This completes the proof.

\section{Proof of Theorem~\ref{main}}
\label{app:outer_bound}
This bound differs only slightly from the outer bound proposed in \cite{wagnervenkat05} and much of the proof is similar to that in \cite{wagnervenkat05}. Suppose $(\mathbf{R},\mathbf{D})$ is achievable. Let $f_1^{(l)},\ldots,f_n^{(l)}$ be encoders and $(g_\mathcal{K}^j)^l, \ \mathcal{K} \subseteq \mathcal{N}$ be decoders satisfying~(\ref{constraints}). Take any $Z$ in $\psi$ and augment the sample space to include $Z^l$ so that $(Z_t,Y_{0,t},\mathbf{Y}_{\mathcal{N},t},Y_{n+1,t})$ is independent over $t \in \{1,\ldots,l\}$. Next let $T$ be uniformly distributed over $\{1,\ldots,l\}$ and independent of $Z^l$, $Y_0^l$, $\mathbf{Y}_\mathcal{N}^l$ and $Y_{n+1}^l$. Then define
\begin{align*}
Z & = Z_T \\
Y_0 & = Y_{0,T} \\
Y_i & = Y_{i,T} \ \text{for $i \in \mathcal{N}$} \\
Y_{n+1} &= Y_{n+1,T} \\
U_i & = \left(f_i^{(l)}(Y_i^l), Z_{1:T-1},\{Y_{n+1}^l\}\backslash \{Y_{n+1,T}\}\right) \ \text{for $i \in \mathcal{N}$} \\
V_j & = V_{j,T}  \ \text{for $j = 1,\ldots,J$}\\
W & = (\{Z^l\}\backslash \{Z_T\},\{Y_{n+1}^l\}\backslash \{Y_{n+1,T}\}).
\end{align*}
It can be verified that $\gamma = (\mathbf{U}_\mathcal{N},V_1,\ldots,V_j,W,T)$ is in $\Gamma_o$ and that, together with $Y_0$, $\mathbf{Y}_\mathcal{N}$, $Y_{n+1}$, and $Z$, it satisfies the Markov coupling. It suffices to show that $(\mathbf{R},\mathbf{D})$ is in $\mathcal{RD}_o(Z,\gamma)$. Note that~(\ref{constraints}) implies
\begin{align*}
D_{k,j} &\ge \max_{\mathcal{K}:|\mathcal{K}|=k} \mathbf{E}[d_j(Y_{0,T},\mathbf{Y}_{\mathcal{K},T},Y_{n+1,T},V_{j,T})] \ \text{for } j = 1,\ldots,J,
\end{align*}
i.e.,
\begin{align*}
D_{k,j} &\ge \max_{\mathcal{K}:|\mathcal{K}|=k} \mathbf{E}[d_j(Y_0,\mathbf{Y}_\mathcal{K},Y_{n+1},V_j)] \ \text{for } j = 1,\ldots,J.
\end{align*}
Second, by the cardinality bound on entropy and the fact that conditioning never increases entropy,
\begin{align}
\nonumber l \sum_{i \in \mathcal{K}} R_i & \ge
        H\left(\left(f_i^{(l)}(Y_i^l)\right)_{i \in \mathcal{K}}\right) \\
\label{blockchain}
    & = I\left(Z^l, \mathbf{Y}_\mathcal{K}^l;
      \left(f_i(Y_i^l)\right)_{i \in \mathcal{K}}\Big|Y_{n+1}^l\right).
\end{align}
By the chain rule for mutual information,
\begin{align*}
\label{twoterms}
 I\left(Z^l, \mathbf{Y}_\mathcal{K}^l ;
    \left(f_i(Y_i^l)\right)_{i \in \mathcal{K}}\Big|Y_{n+1}^l\right) = \ &I\left(Z^l; \left(f_i(Y_i^l)\right)_{i \in \mathcal{K}}\Big|Y_{n+1}^l\right)
 + I\left(\mathbf{Y}_\mathcal{K}^l; \left(f_i(Y_i^l)\right)_{i \in \mathcal{K}} \Big|Z^l,Y_{n+1}^l\right).
\end{align*}
The rest of the proof is similar to that in \cite{wagnervenkat05}. The main difference between this proof and the proof in \cite{wagnervenkat05} is that here we do not condition on $\left(f_i(Y_i^l)\right)_{i \in \mathcal{K}^c}$ in \eqref{blockchain}. Taking the maximum over this bound and the bound in \cite{wagnervenkat05} yields the desired outer bound.

\section{Proof of Lemma~\ref{contrapositive}}
\label{app:contrapositive}
Assume WLOG that $\mathcal{K} = \{1,\ldots,\ell\}$. For each possible realization $(w,t)$ of $(W,T)$, let
\begin{equation*}
D_{w,t} = E[d^\lambda(X,\hat{X}_\mathcal{K})|W = w, T = t].
\end{equation*}
Let $S = \{(w,t) : D_{w,t} \le \sqrt{\lambda}\}$. Then by Markov's inequality,
\begin{equation}
\label{markov} \Pr((W,T) \notin S) \le
\frac{\tilde{D}}{\sqrt{\lambda}} \le \delta.
\end{equation}
In particular, $\Pr((W,T) \in S) > 0$.
Also, for any $(w,t) \in S$,
\begin{equation*}
\frac{32 \ell}{p(1-p)} \left(\frac{2
D_{w,t}}{\lambda}\right)^{1/\ell} \le \delta.
\end{equation*}
Thus, by Lemma~\ref{continuity}, if $(w,t) \in S$,
\begin{equation*}
\frac{1}{\ell} \sum_{i = 1}^\ell I(Y_i;U_i|X, W = w, T = t) \ge
 g\left((D_{w,t} + \delta)^{1/\ell}\right) + 2 \delta \log \frac{\delta}{5}.
\end{equation*}
By averaging over $(w,t) \in S$ and invoking Corollary~\ref{convexcor}, we obtain
\begin{align*}
\sum_{(w,t) \in S}
\frac{1}{\ell} \sum_{i = 1}^\ell I(Y_i;U_i|X, W = w,& T = t) \cdot
\frac{\Pr(W = w, T = t)}{\Pr((W,T) \in S)} \\
& \ge g((\tilde{D} + \delta)^{1/\ell}) + 2 \delta \log
\frac{\delta}{5}.
\end{align*}
Therefore,
\begin{align*}
\frac{1}{\ell} \sum_{i = 1}^\ell I(Y_i;U_i|X,W,T)  & \ge
\left[g((\tilde{D} + \delta)^{1/\ell}) + 2 \delta \log
\frac{\delta}{5}\right]\cdot \Pr((W,T) \in S) \\
& \ge \left[g((\tilde{D} + \delta)^{1/\ell}) + 2 \delta \log
\frac{\delta}{5}\right]\left(1-\delta \right) \\
& = g((\tilde{D} + \xi(\tilde{D},\delta))^{1/\ell})
\end{align*}
for some continuous $\xi \ge 0$ satisfying $\xi(\tilde{D},0)=0$. It follows from this and constraint \emph{(iii)} of the
lemma that $g(D^{1/\ell}) \ge g((\tilde{D} + \xi(\tilde{D},\delta))^{1/\ell})$, and from the monotonicity of $g(D^{1/\ell})$ in $D$ (Corollary~\ref{convexcor}), we obtain
\begin{equation*}
\tilde{D} + \xi(\tilde{D},\delta) \ge D,
\end{equation*}
and thus
\begin{equation*}
\tilde{D} \ge D - \xi(\tilde{D},\delta).
\end{equation*}
This completes the proof.

\section*{Acknowledgment}
The authors would like to thank Chao Tian for suggestions and helpful discussions regarding the use of systematic MDS codes in the achievability scheme in Section~\ref{sec:worst-case-achievability}.

\end{document}